\DeclareDelimiter{\Otilde}[\mathnormal{\widetilde{O}}]{\lparen}{\rparen}
\DeclareDelimiter{\Omegatilde}[\mathnormal{\widetilde{\Omega}}]{\lparen}{\rparen}
\DeclareDelimiter{\Normal}[\mathcal{N}]{\lparen}{\rparen}
\DeclareDelimiter{\Uniform}[\operatorname{Uniform}]{\lparen}{\rparen}
\DeclareDelimiter{\opnorm}{\lVert}{\rVert_{\operatorname{op}}}
\DeclareDelimiter{\tr}[\operatorname{tr}]{\lparen}{\rparen}
\DeclareDelimiter{\cov}[\operatorname{cov}]{\lparen}{\rparen}
\DeclareDelimiter{\mean}[\operatorname{mean}]{\lparen}{\rparen}
\DeclareDelimiter{\law}[\operatorname{law}]{\lparen}{\rparen}
\DeclareDelimiter{\dTV}[\mathnormal{d}_{\operatorname{TV}}]{\lparen}{\rparen}
\DeclareDelimiter{\DKL}[\mathcal{D}_{\operatorname{KL}}]{\lparen}{\rparen}
\DeclareDelimiter{\I}[\mathcal{I}]{\lparen}{\rparen}
\DeclareDelimiter{\H}[\mathcal{H}]{\lparen}{\rparen}
\DeclareDelimiter{\C}[\mathcal{C}]{\lparen}{\rparen}
\DeclareDelimiter{\Ent}[\operatorname{Ent}]{\lbrack}{\rbrack}
\DeclareDocumentMathCommand{\Dtilde}{}{\widetilde{D}}
\DeclareDocumentMathCommand{\epsilonTV}{}{\epsilon_{\operatorname{TV}}}
\DeclareDelimiter{\RS}[{\operatorname{{RS}^2}}]{\lparen}{\rparen}
\DeclareDelimiter{\fallback}[\operatorname{FallbackSampler}]{\lparen}{\rparen}
\DeclareDocumentMathCommand{\epsilonTV}{}{\epsilon_{\operatorname{TV}}}
\tikzset{samples=40}
	\tikzset{samples=4}
\pgfplotsset{
    colormap={OrangeWhite}{
        color=(White)
        color=(Orange)
    }
}
\DeclareDocumentCommand{\TowerPlot}{m}{
	\tl_clear:N \l_pom_tower_tl
	\clist_set:Nn \l_tmpa_clist {#1}
	\int_step_variable:nNn {\clist_count:N \l_tmpa_clist} \l_tmpa_tl {
		\clist_set:Nx \l_tmpb_clist {\clist_item:Nn \l_tmpa_clist \l_tmpa_tl}
		\int_step_variable:nNn {\clist_count:N \l_tmpb_clist} \l_tmpb_tl {
			\tl_put_right:Nx \l_pom_tower_tl {
				(\int_eval:n {\l_tmpa_tl - 1}, \int_eval:n {\l_tmpb_tl - 1}, \clist_item:Nn \l_tmpb_clist \l_tmpb_tl)
			}
			\tl_put_right:Nx \l_pom_tower_tl {
				(\int_eval:n {\l_tmpa_tl - 1}, \int_eval:n {\l_tmpb_tl}, \clist_item:Nn \l_tmpb_clist \l_tmpb_tl)
			}
		}
		\int_step_variable:nNn {\clist_count:N \l_tmpb_clist} \l_tmpb_tl {
			\tl_put_right:Nx \l_pom_tower_tl {
				(\int_eval:n {\l_tmpa_tl}, \int_eval:n {\l_tmpb_tl - 1}, \clist_item:Nn \l_tmpb_clist \l_tmpb_tl)
			}
			\tl_put_right:Nx \l_pom_tower_tl {
				(\int_eval:n {\l_tmpa_tl}, \int_eval:n {\l_tmpb_tl}, \clist_item:Nn \l_tmpb_clist \l_tmpb_tl)
			}
		}
	}
	\use:x {\exp_not:N \addplot3[surf, mark=none, mesh/rows={\int_eval:n {2*\clist_count:N \l_tmpa_clist}}, mesh/cols={\int_eval:n {2*\clist_count:N \l_tmpb_clist}}] coordinates {\l_pom_tower_tl};}
}
\tikzset{
    declare function={
        normal(\x,\y,\s)=exp(-0.5*(pow((\x),2)+pow((\y),2))/(\s));	
    }
}
\title{Parallel Sampling via Autospeculation}
    \date{}
    \author[1]{Nima Anari}
    \author[1]{Carlo Baronio}
    \author[2]{CJ Chen}
    \author[1]{Alireza Haqi}
    \author[3]{Frederic Koehler}
    \author[1]{Anqi Li}
    \author[4]{Thuy-Duong Vuong}
    \affil[1]{Stanford University, \url{{anari,cbaronio,ahaqi,aqli}@stanford.edu}}
    \affil[2]{University of Arizona, \url{schen9@arizona.edu}}
    \affil[3]{University of Chicago, \url{fkoehler@uchicago.edu}}
    \affil[4]{UC Berkeley, \url{tdvuong@berkeley.edu}}
    \date{}
    \author{}
\newcommand{\branchset}[1]{\mathcal{M}^{#1}}
\begin{document}
    \maketitle
    \begin{abstract}
        We present parallel algorithms to accelerate sampling via counting in two settings: any-order autoregressive models and denoising diffusion models. An any-order autoregressive model accesses a target distribution $\mu$ on $[q]^n$ through an oracle that provides conditional marginals, while a denoising diffusion model accesses a target distribution $\mu$  on $\R^n$ through an oracle that provides conditional means under Gaussian noise. Standard sequential sampling algorithms require $\widetilde{O}(n)$ time to produce a sample from $\mu$ in either setting. We show that, by issuing oracle calls in parallel, the expected sampling time can be reduced to $\widetilde{O}(n^{1/2})$. This improves the previous $\widetilde{O}(n^{2/3})$ bound for any-order autoregressive models and yields the first parallel speedup for diffusion models in the high-accuracy regime, under the relatively mild assumption that the support of $\mu$ is bounded. 

We introduce a novel technique to obtain our results: speculative rejection sampling. This technique leverages an auxiliary ``speculative'' distribution~$\nu$ that approximates~$\mu$ to accelerate sampling. Our technique is inspired by the well-studied ``speculative decoding'' techniques popular in large language models, but differs in key ways. Firstly, we use ``autospeculation,'' namely we build the speculation $\nu$ out of the same oracle that defines~$\mu$. In contrast, speculative decoding typically requires a separate, faster, but potentially less accurate ``draft'' model $\nu$. Secondly, the key differentiating factor in our technique is that we make and accept speculations at a ``sequence'' level rather than at the level of single (or a few) steps. This last fact is key to unlocking our parallel runtime of $\Otilde{n^{1/2}}$.	
    \end{abstract}

    \clearpage
   	
    \section{Introduction}\label{sec:intro}

Efficient sampling from complex distributions is a cornerstone of modern machine learning and theoretical computer science. Its intimate connection to the task of counting or computing partition functions \cite[see, e.g.,][]{JVV86} has been very fruitful in theory, for example, yielding efficient algorithms for approximating the volume of convex bodies \cite{DFK91} and approximating the matrix permanent \cite{JSV04}. The link between sampling and counting is also at the heart of modern techniques in machine learning, where state-of-the-art generative methods, including autoregressive models (e.g., most large language models) and diffusion models (powering, e.g., image generation), rely on reductions from sampling to counting or Bayesian inference. We revisit the sampling-via-counting paradigm, asking whether parallel computation can lead to faster algorithms compared to the standard sequential methods. We focus on two settings: discrete distributions supported on $[q]^n$, accessed via ``any-order autoregressive'' oracles, and continuous distributions supported on $\R^n$, accessed via ``denoising diffusion'' oracles, which we briefly outline below.

\paragraph{Autoregressive generation.} \Textcite{JVV86} established the algorithmic equivalence of (approximate) sampling and (approximate) counting for ``self-reducible'' problems. The most widely used form of self-reducibility concerns distributions $\mu$ on a space $[q]^n$, and their \emph{pinnings}---conditional distributions obtained by fixing values for some subset of the $n$ coordinates. See \cref{fig:pinnings-and-gaussians} for a visual depiction. Counting is the task of computing the partition functions of these pinnings, that is $\P*_{X\sim \mu}{X_S=\omega_S}$ for some subset $S\subseteq [n]$ and $\omega_S\in[q]^S$, which is computationally equivalent to computing \emph{conditional marginals}, that is $\P*_{X\sim \mu}{X_i=\omega_i\given X_S=\omega_S}$ for some $i\notin S$. This theoretical framework mirrors practical autoregressive models in machine learning \cite[see, e.g.,][]{LM11,VKK16,VSPUJLAKP17,DCLT18,YDYCSL19,BOpenAI20,SSE22}, where often a neural network oracle is trained to provide exactly these conditional marginals:
\begin{equation}\label{eq:conditional marginal def}
    (S, \omega_S, i) \mapsto \parens*{\P*_{X\sim \mu}{X_i=\omega_i\given X_S=\omega_S}\given \omega_i\in [q]}.
\end{equation}
This setting is known as the \textbf{any-order autoregressive} models \cite{SSE22}, where the algorithm is given access to the above conditional marginals for any $(S,\omega_S,i ,\omega_i).$\footnote{
Certain autoregression models require $S$ to be the prefix $[i-1]$; no parallel speedup is possible in this setting (see \cite{AGR24}).}

Looking through the lens of Bayesian inference, one can view this oracle as giving information about a posterior. Imagine first sampling $\omega\sim \mu$ (as the prior) and then revealing the noisy information $\omega_S$ (as the observation). This oracle provides the marginals of the posterior distribution. For this reason, we call this oracle the \textbf{coordinate denoiser}.

Given a coordinate denoiser, sampling from $\mu$ is straightforward via \cref{alg:sequential-autoregressive}.

\begin{Algorithm}[H]
    $X\gets \emptyset$\;
    \For{$i=1,\dots,n$}{
        \For{$\omega \in [q]$}{
            $p_\omega \gets \P*_{Y\sim \mu}{Y_i=\omega \given Y_{[i-1]}=X_{[i-1]}}$\;
        }
        sample $X_i\sim (p_1,\dots,p_q)$\;
    }
    \Return{$X$}\;
    \caption{Autoregressive sampling \label{alg:sequential-autoregressive}}
\end{Algorithm}

Note that the ordering of $1,2,\dots,n$ on the coordinates is arbitrary, and any other ordering would work too. For example, if we choose a uniformly random permutation, we get a process that is basically what \textcite{CE22} named \emph{coordinate localization}. Despite its simplicity, this algorithm inherently requires $n$ sequential steps, which motivates the search for parallelizable algorithms in this work.

\paragraph{Denoising diffusion.} For continuous distributions $\mu$ on $\R^n$, one can define an alternative to pinnings, by considering distributions $\nu$ obtained from multiplying a Gaussian density with $\mu$:
\begin{equation}\label{eq:conditional mean def}
   \dd{\nu}{\mu}(x)\propto \exp\parens*{\frac{-t\norm{x-\omega}^2}{2}}  
\end{equation}

for some $t\in \R_{\geq 0}$ and $\omega\in \R^n$. See \cref{fig:pinnings-and-gaussians} for a visual depiction. The task of computing conditional marginals is now replaced by the task of \emph{computing the mean} of $\nu$. The corresponding oracle, which can be called a ``denoising diffusion oracle'',  provides the following mapping:
\[ (t, \omega) \mapsto \mean{\nu}=\E_{X\sim \nu}{X}=\E*_{X\sim \mu, g\sim \Normal{0, I/t}}{X\given X+g=\omega}. \]
Again, looking through the lens of Bayesian inference, this oracle is providing information about a posterior. If we sample $X\sim \mu$ (as the prior) and reveal the noisy information $X+g$ (as the observation) where $g\sim \Normal{0, I/t}$ is independent Gaussian noise, then this oracle is giving us the mean of the posterior distribution. For this reason, we call this oracle the \textbf{Gaussian denoiser}.

While less obvious than in the autoregressive case, a Gaussian denoiser enables approximate sampling from $\mu$ via a process known as \emph{stochastic localization} \cite{Eld13}, equivalent \cite{Mon23} to \emph{denoising diffusion} in machine learning \cite[see, e.g.,][]{SWMG15,HJA20,SM19,SSKKEP20}. This involves simulating a stochastic differential equation (SDE) driving the process $\parens{X_t\given t\in \R_{\geq 0}}$ defined via 
\[ \d{X_t}=\E*_{Y \sim \mu, g\sim \Normal{0, tI}}{Y \given tY+g=X_t}\d{t}+\d{B_t}, \]
and initial condition $X_0=0$, where $\parens*{B_t\given t\in \R_{\geq 0}}$ is standard Brownian motion in $\R^n$. The \emph{drift term} in this equation is exactly the output of the Gaussian denoiser. Faithfully simulating stochastic localization until a large time $t$ gives us (approximate) samples since $X_t/t$ will be distributed as $\mu*\Normal{0,I/t}$ \cite[see, e.g.,][]{EM22}. To get an algorithm, however, one needs to resort to an approximate SDE-solving method, the simplest of which is Euler-Maruyama, which proceeds by discretizing time into $0=t_0<t_1<\dots<t_N$ and proceeding as in \cref{alg:sequential-SL}.

\begin{Algorithm}[H]
    $X_{t_0}\gets 0$\;
    \For{$i=0,\dots,N-1$}{
        sample $g\sim \Normal{0, (t_{i+1}-t_i)I}$\;
        $X_{t_{i+1}}\gets X_{t_i}+(t_{i+1}-t_i)\cdot \E*_{\omega \sim \mu, g'\sim \Normal{0, t_iI}}{\omega \given t_i\omega+g'=X_{t_i}}+g$\;
    }
    \Return{$X_{t_N}/{t_N}$}\;
    \caption{Euler-Maruyama discretization of the stochastic localization SDE \label{alg:sequential-SL}}
\end{Algorithm}

\Textcite{BDDD23}, and in parallel \textcite{CDG23}, showed that to sample $\epsilon^2$-closely in KL divergence from $\mu*\Normal*{0,\delta \cdot I},$ one can use a schedule with $N=\Otilde*{n\cdot \frac{\log^2(1/\delta)}{\epsilon^2}}$, assuming the normalizing assumption that $\tr{\cov{\mu}}=n$. However, similar to the autoregressive case, this standard algorithm is sequential, again raising the question of potential parallel speedups.

\begin{figure}
\begin{Columns}
\Column
\Tikz*[scale=0.3]{
    \begin{scope}[shift={(-2,0)}]
        \begin{axis}[hide axis, unbounded coords = jump, view={120}{35}]
            \TowerPlot{{0,0,0,0},{0,5,3,0},{0,2,1,0},{0,0,0,0}}
        \end{axis}
    \end{scope}
    \begin{scope}[shift={(12,4)}]
        \begin{axis}[hide axis, unbounded coords = jump, view={120}{35}]
            \TowerPlot{{0,0,0,0},{0,5,0,0},{0,2,0,0},{0,0,0,0}}
        \end{axis}
    \end{scope}
    \begin{scope}[shift={(12,0)}]
        \begin{axis}[hide axis, unbounded coords = jump, view={120}{35}]
            \TowerPlot{{0,0,0,0},{0,0,3,0},{0,0,1,0},{0,0,0,0}}
        \end{axis}
    \end{scope}
    \begin{scope}[shift={(12,-4)}]
        \begin{axis}[hide axis, unbounded coords = jump, view={120}{35}]
            \TowerPlot{{0,0,0,0},{0,0,0,0},{0,2,1,0},{0,0,0,0}}
        \end{axis}
    \end{scope}
    \draw[decorate, line width=1, decoration={brace, amplitude=10}] (12, -4) -- (12, 8);
    \draw[line width=1, -stealth] (5, 2) -- (11, 2);
    \node at (1, -1) {$\mu$};
    \node at (15.5, -5) {possible $\nu$};
    \node at (3, -4) {$\nu = \text{pinning of }\mu$};
}
\Column
\Tikz*[scale=0.3]{
    \begin{scope}[shift={(-2,0)}]
        \begin{axis}[hide axis, domain=-3:3, domain y=-3:3]
            \addplot3[surf] {10*normal(x, y, 2)};
        \end{axis}
    \end{scope}
    \begin{scope}[shift={(12,4)}]
        \begin{axis}[hide axis, domain=-3:3, domain y=-3:3]
            \addplot3[surf] {10*normal(x-1, y-1, 1)};
        \end{axis}
    \end{scope}
    \begin{scope}[shift={(12,0)}]
        \begin{axis}[hide axis, domain=-3:3, domain y=-3:3]
            \addplot3[surf] {10*normal(x-1, y+1, 1)};
        \end{axis}
    \end{scope}
    \begin{scope}[shift={(12,-4)}]
        \begin{axis}[hide axis, domain=-3:3, domain y=-3:3]
            \addplot3[surf] {10*normal(x+1, y+1, 1)};
        \end{axis}
    \end{scope}
    \draw[decorate, line width=1, decoration={brace, amplitude=10}] (12, -4) -- (12, 8);
    \draw[line width=1, -stealth] (5, 2) -- (11, 2);
    \node at (1, -1) {$\mu$};
    \node at (15.5, -5) {possible $\nu$};
    \node at (3, -4) {$\d{\nu} \propto  \text{Gaussian} \cdot \d{\mu}$};
}
\end{Columns}
\caption{\label{fig:pinnings-and-gaussians}In autoregressive models, the oracle returns means --- that is marginal probabilities --- of pinned distributions. In contrast, in diffusion models, the oracle returns means of distributions $\nu$ whose density w.r.t.\ $\mu$ is a Gaussian.}
\end{figure}
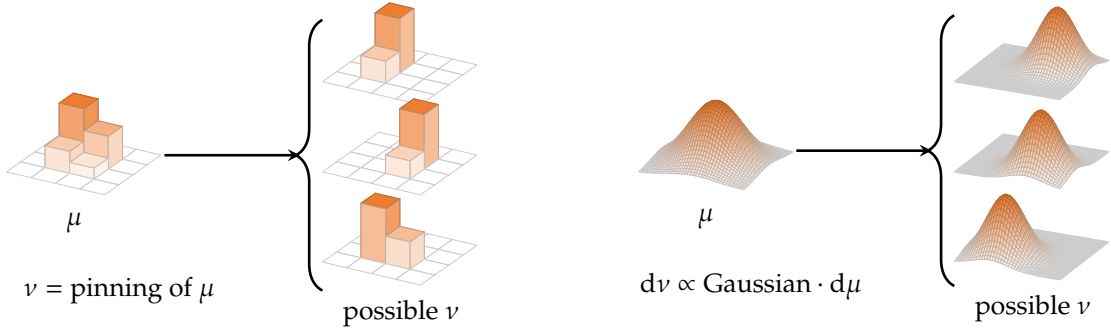

To summarize, given either a \emph{coordinate denoiser} or a \emph{Gaussian denoiser} for an $n$-dimensional distribution $\mu$, standard algorithms allow sampling from $\mu$ (approximately in the denoising diffusion case) using $\Otilde{n}$ oracle calls. However, these calls are inherently sequential, leaving the massive potential of parallel hardware---common in practical implementations---untapped. This motivates our central question:
\begin{quote}
    Suppose we interact with these oracles in rounds, sending polynomially many queries in each round. How many rounds suffice to produce a sample?
\end{quote}
This question is motivated by the fact that in practice, where neural networks play the role of these oracles, the oracles are run on hardware like GPUs capable of massive parallelism. There are also compelling theoretical motivations. For problems such as sampling uniformly random spanning trees, planar perfect matchings, or Eulerian tours,  the underlying counting oracles (i.e., denoisers) can be implemented efficiently in parallel (as \Class{NC} algorithms), that is they run in $\poly\log(n)$ time given $\poly(n)$ processors \cite[see][]{AHLVXY23}.\footnote{This further extends to problems that admit deterministic (approximate) counting algorithms based on two key approaches: decay of correlation and tree recursion of \Textcite{Wei06} and polynomial interpolation of \Textcite{Bar16}. See discussions in \cite{AHSS20} for more details.} It is natural to ask if sampling itself can be achieved in $\poly\log(n)$ time for these problems. For planar perfect matchings, a case where parallel counting is known but parallel sampling remains an open question, our results provide a direct improvement when combined with the analysis of \textcite{AGR24}.

Recently, \textcite{AGR24} showed that in the \emph{any-order autoregressive case}, there are more parallelizable sampling algorithms. Namely, they showed that sampling can be achieved in $O(n^{2/3}\cdot \poly\log(n, q))$ rounds. They complemented this with a lower bound showing $\Omegatilde{n^{1/3}}$ rounds are necessary for some distributions. Our first contribution significantly improves their upper bound:
\begin{theorem}[Informal, main result for any-order autoregressive oracles]\label{thm:main-autoregressive-informal}
    Given access to the coordinate denoiser \eqref{eq:conditional marginal def} for a distribution $\mu$ supported on $[q]^n$, one can sample from $\mu$ using $O\parens*{n^{1/2} \sqrt{\log q} \log^3 n}$ rounds and $O(q n \log n)$ oracle queries\footnote{If we were to use the convention that a query returns the full marginal distribution of one $X_i$ conditioned on a subset $X_S$ of other coordinates (in other words the oracle returns $q$ probability values), then the number of oracle calls our algorithm would make would shrink by a factor of $q$, i.e., we would have $O(n\log n)$ queries in expectation. This is because we always query the \emph{full} marginal distribution of some variable, so our oracle calls come in groups of $q$.} in expectation.
\end{theorem}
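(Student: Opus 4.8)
The plan is to combine \emph{autospeculation} with speculative rejection sampling applied at the level of whole sequences rather than single steps. Fix a uniformly random order $\pi$ of the $n$ coordinates; running \cref{alg:sequential-autoregressive} with this order reveals $X_{\pi(1)}, X_{\pi(2)}, \dots$ one coordinate at a time. Given any committed prefix $\pi([s])$ with values $\omega$, a single round of oracle calls returns all conditional marginals $\mu(X_i = \cdot \mid X_{\pi([s])} = \omega)$, and their product (the mean-field distribution) is our \emph{speculation} $\nu$, from which a full candidate assignment $\hat\omega$ can be sampled with no further queries. A speculative step draws $\hat\omega \sim \nu$ and, in one additional parallel batch of queries, fetches the ``verification'' marginals $\mu(X_{\pi(s+r)} = \cdot \mid X_{\pi([s+r-1])} = \hat\omega)$ for every $r \ge 1$ \emph{along the proposed trajectory}; these are exactly what the standard speculative-sampling acceptance test needs to determine, coordinate by coordinate, the longest accepted prefix and the residual law from which the first rejected coordinate is resampled. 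Correctness of one speculative step is the sequence-level form of the speculative-sampling identity: the committed coordinates together with the resampled one have precisely the law of the corresponding initial segment of the sequential sampler started from the current state. Composing steps and inducting on the number of committed coordinates gives that the output law is exactly $\mu$; the fact that we accept prefixes of \emph{unbounded} length --- rather than fixed-size batches, as in prior work --- is what will drive the number of steps down to $\Otilde{n^{1/2}}$ from $\Otilde{n^{2/3}}$.

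To keep the number of speculative steps small we do not discard the uncommitted suffix after each rejection; instead the speculations are organized into a tree of threads (the recursive sampler $\RS{\cdot}$), in which a thread that has committed a prefix reuses the already-computed trajectory information to spawn child threads refining the suffix, so that work on far-away coordinates survives a correction to a nearby one. Since a thread could, with small probability, make almost no progress, each thread is allotted a bounded number of speculative steps and, on exceeding this allotment, defers to $\fallback{\cdot}$ --- the plain $\Otilde{n}$-round sequential sampler of \cref{alg:sequential-autoregressive} --- on its remaining coordinates. This keeps the output law exactly $\mu$ and guarantees termination, while the allotment is set large enough (using the high-probability estimates below) that the fallback fires with probability $o(1/\poly(n))$ and contributes only lower-order terms to the expected round and query counts.

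The crux is bounding the expected number of speculative steps by $\Otilde{n^{1/2}}$. Up to the tree overhead this is the number of coordinates that ever get resampled from a residual, and $\pi(t)$ is resampled with probability at most the expected total-variation distance between $\mu(X_{\pi(t)} = \cdot \mid X_{\pi([t-1])})$ and $\mu(X_{\pi(t)} = \cdot \mid X_{\pi([\sigma(t)])})$, where $\sigma(t) < t$ is the start of the speculative step that resolves $\pi(t)$ --- here we use that, conditioned on reaching $\pi(t)$, the committed prefix is genuinely $\mu$-distributed, so the residual analysis of speculative sampling applies. Two structural facts drive the bound: (i) for each $t$ the Doob martingale $r \mapsto \mu(X_{\pi(t)} = \cdot \mid X_{\pi([r])})$ has total squared total-variation movement at most $\tfrac12\,\I{X_{\pi(t)};X_{\pi([t-1])}} \le \tfrac12\log q$, by Pinsker's inequality; and (ii) these mutual informations sum, by the chain rule, to the total correlation of $\mu$, which is at most $n\log q$. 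The resampling probability of $\pi(t)$ is then at most the total-variation distance this martingale accumulates between times $\sigma(t)$ and $t-1$, which by Cauchy--Schwarz is at most $\sqrt{t - \sigma(t)}$ times the root of the squared movement over that window. Summing over $t$ and trading the length of each speculative step against the mean-field discrepancy it incurs --- longer steps see a worse approximation, shorter steps waste a round, and the budget in (ii) can be charged only once --- gives the bound $\Otilde{\sqrt{n\log q}}$ on the expected number of steps; the remaining $\log^3 n$ factors come from the $O(\log n)$ depth of the recursion tree and from upgrading these in-expectation estimates to high-probability bounds by a union bound over the $\poly(n)$ threads (which is also what licenses the fallback allotment).

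The main obstacle I anticipate is exactly this round-complexity analysis inside the recursive tree. The prefix $\pi([\sigma(t)])$ from which a thread speculates is itself random and statistically entangled with the very discrepancies that control how far the thread gets, so the potential/charging argument must be set up so that the entropy budget in (ii) is spent only once across the whole tree and the Cauchy--Schwarz trade-off is applied at the right granularity --- per thread or per recursion level, not naively per coordinate, which would lose polynomial factors --- and making the ``reuse the suffix'' mechanism of $\RS{\cdot}$ compatible with this bookkeeping is the delicate point. Once the round bound is established, the query bound $O(qn\log n)$ in expectation follows by charging one group of $q$ oracle values to each coordinate each time a speculative pass touches it (once for its mean-field marginal, once for verification) and bounding the expected number of touches per coordinate by the $O(\log n)$ recursion depth with the same progress estimates, so that both quantities in the statement are read off from two complementary accountings of the same speculative tree.
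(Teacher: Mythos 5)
There is a genuine gap, and it is precisely at the point the paper identifies as the crux. Your acceptance mechanism is the standard coordinate-by-coordinate speculative-decoding test: sample a full trajectory from the mean-field product $\nu$, verify marginals along it, keep the longest accepted prefix, and resample the first rejected coordinate from the residual. That is essentially the algorithm of \textcite{AGR24}, and the paper explicitly argues (see the $\operatorname{Ber}(1/2)$ vs.\ $\operatorname{Ber}(1/2+\epsilon)$ discussion in \cref{sec:techniques}) that this iterative, per-coordinate acceptance is the source of the $n^{2/3}$ barrier, which AGR24 showed is essentially tight for that scheme: per-coordinate acceptance costs you roughly the \emph{sum} of per-coordinate discrepancies, whereas block-level acceptance costs only the total variation distance of the whole block, which by Pinsker behaves like the square root of a sum of squares. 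The paper's new ingredient, entirely absent from your proposal, is \emph{speculative rejection sampling} (\cref{alg:srs}, \cref{alg:rs2}): accept the entire block $X\sim\nu_S$ with probability $\min\set{1,\dd{\mu_S}{\nu_S}(X)}$, and on rejection repeatedly draw complete blocks from the \emph{true} conditional $\mu_S$ (via a recursive divide-and-conquer fallback on the two halves of $S$, with geometrically growing parallel batches) and accept each with the complementary probability $\max\set{0,1-\dd{\nu_S}{\mu_S}(Y)}$. Your substitute for this --- a ``tree of threads'' that, after a rejection, \emph{reuses} the already-sampled suffix --- is not only vague but unsound as stated: the suffix was drawn from the product speculation conditioned on the old prefix, and after the residual resampling at the rejected coordinate it no longer has the correct conditional law, so reusing it biases the output; your other fallback (a bounded allotment of steps, then the plain sequential sampler) preserves correctness but is not the mechanism that yields the claimed complexity.

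The quantitative analysis also does not close. Your budget argument (Doob-martingale squared movement bounded by mutual information, summing to total correlation $\leq n\log q$, then Cauchy--Schwarz over windows) is the AGR24-style potential argument and, applied to prefix-wise acceptance, reproduces the $n^{2/3}$-type trade-off rather than $n^{1/2}$; you assert the improvement but never exhibit the charging scheme that achieves it. In the paper the $\sqrt{n\log q}$ per-level bound comes from a different and stronger tool: the new pinning lemma (\cref{lem:pinning lemma for induction}), which shows that under a uniformly random permutation the potential $\phi$ summed over the blocks of a balanced partition, conditioned on earlier blocks, is at most half the parent's potential; combined with Pinsker, Cauchy--Schwarz across the blocks of one level, and the round-complexity accounting of \cref{lem:expected round bound} (the while loop at a node fires only with probability equal to the local total variation distance, and batches grow geometrically), this yields $O(\sqrt{n\log q}\,\log^3 n)$ expected rounds. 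Finally, your query accounting does not work for the flat scheme you describe: each speculative step queries the entire remaining suffix (mean-field plus verification marginals), so with $\widetilde{\Theta}(\sqrt n)$ steps you get up to $\widetilde{\Theta}(n^{3/2})$ marginal evaluations, not $O(n\log n)$ groups of $q$; the paper's $O(n\log n)$ bound (\cref{lem:expected number of queries}) relies on the recursion localizing the cost of a rejection to the subtree of the rejecting node.
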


\begin{Algorithm}
$\sigma \gets\text{sample uniformly random permutation of } [n]$\;
$\rho\gets \frac{1}{\log_2(n)}$\;
$Z_{[n]}:=(Z_{\sigma(1)}, \cdots, Z_{\sigma(n)} )\gets \RS{[n], Z_{<[n]} =\emptyset}$ \; 
\Return{$(Z_1, \cdots, Z_n)$}\;
\;
\Function{$\RS{S = \set{a+1,\cdots, b},Z_{<S} = ( Z_{\sigma(1)}, \cdots, Z_{\sigma(a)})}$}{
    $\Blue{X} : = (\Blue{X_{\sigma(a+1)}}, \cdots, \Blue{X_{\sigma(b)}} ) \gets\text{sample from }\nu_S\parens*{\cdot\given Z_{<S}} : = \bigotimes_{r=a+1}^b \pi_{\sigma(r)}(\cdot | Z_{\sigma(1)}, \cdots, Z_{\sigma(b)} )$\;
    Compute
    $\dd{\mu_S}{\nu_S}\parens*{\Blue{X}\given Z_{<S}} \gets \prod_{s=a+1}^b \frac{\pi_{\sigma(s)}(\Blue{X}_{\sigma(s)} | Z_{\sigma(1)}, \cdots ,  Z_{\sigma(a)}, \Blue{X}_{\sigma(a+1)}, \cdots, \Blue{X}_{\sigma(s-1)} )}{ \pi_{\sigma(r)}(\Blue{X}_{\sigma(s)} | Z_{\sigma(1)}, \cdots, Z_{\sigma(a)} ) } $ \;
    $u\gets \min\set*{1,\dd{\mu_S}{\nu_S}\parens*{\Blue{X}\given Z_{<S}} }$\;
    \eIf{coin flip with bias $u$ comes up heads}{
        \Return{$\Blue{X}$} \tcp{Note: if $b\leq a+1$ then $u=1$ and the algorithm terminates here} \;
    }{
    	\For{$r=0, 1, 2, \dots$}{
    		\ParallelFor{$i=  \lceil (1+\rho)^r\rceil,\lceil (1+\rho)^r\rceil+2, \dots, \lceil (1+\rho)^{r+1}\rceil -1$}{
    			$\Orange{Y^{(i)}} = (\Orange{Y^{(i)}_{\sigma(a+1)}}, \cdots, \Orange{Y^{(i)}_{\sigma(b)}})\gets \fallback{S, Z_{<S}}$\;
                Compute
    $\dd{\nu_S}{\mu_S}\parens*{\Orange{Y^{(i)}}\given Z_{<S}} \gets \prod_{s=a+1}^b \frac{ \pi_{\sigma(s)}(\Orange{Y^{(i)}_{\sigma(s)}} | Z_{\sigma(1)}, \cdots, Z_{\sigma(a)} ) }{\pi_{\sigma(s)}(\Orange{Y^{(i)}_{\sigma(s)}} | Z_{\sigma(1)}, \cdots ,  Z_{\sigma(a)}, \Orange{Y^{(i)}_{\sigma(a+1)}}, \cdots,\Orange{Y^{(i)}_{\sigma(s-1)}} )} $ \;
    			$p^{(i)}\gets \max\set*{0, 1-\dd{\nu_S}{\mu_S}\parens*{\Orange{Y^{(i)}}\given Z_{<S}}}$\;
    			$c^{(i)}\gets \text{coin flip with bias }p^{(i)}$\;
    		}
    		\If{any $c^{(i)}$ has resulted in heads}{
    			$i^*\gets \text{index of first head}$\;
    			\Return{$\Orange{Y^{(i^*)}}$}\;
    		}
    	}
    }
}\;

\Function{$\fallback{S = \set{a+1,\cdots, b}, Z_{<S} = ( Z_{\sigma(1)}, \cdots, Z_{\sigma(a)})}$}{
    $T_1 \gets \set{a+1, \cdots, \lfloor \frac{a+b}{2}\rfloor }, T_2 \gets \set{ \lfloor \frac{a+b}{2}\rfloor+1 , \cdots , b}$\;
    $Z\gets Z_{<S}$\;
    \For{$i=1,2$}{
        $Z_{T_i}\gets \RS{T_i, Z}$\;
        append $Z_{T_i}$ to the end of $Z$\;
    }
    \Return{$Z_S$}\;
}
\caption{Recursive coordinate denoising scheme for $\pi:[q]^n \to \R_{\geq 0}.$\label{alg:main algo coordinate}}
\end{Algorithm}

\begin{Algorithm}
 $L \gets \lceil\log_2 \frac{R}{\delta} \rceil +1$, $N\gets  2^{\lceil \log_2  \max\set{\frac{R^2}{\delta}, \frac{R^4}{\delta^2}, \frac{n}{\epsilonTV^2}} \rceil }$ \;
 $\mathbf{T}_0 \gets 0,\mathbf{T}_1 \gets R^{-1} 2^0, \cdots, \mathbf{T}_{L-2} \gets R^{-1} 2^{(L-1)}, \mathbf{T}_L \gets 1/\delta$\; 
 \For{$i= 0, \cdots, L-1$}{
 \For{$j =0,\cdots, N$} {
 $\delta_i \gets  \frac{\mathbf{T}_{i+1} - \mathbf{T}_i}{N}$\;
 $t_{(i-1)N + j} \gets  \mathbf{T}_i + j\times \delta_i $ }}\;
 $\rho\gets \frac{1}{\log_2(N)}$\;
$Z_{[NL]} = (Z_1, \cdots, Z_{NL}) \gets \fallback{[NL], Z_{<[NL]} =\emptyset, L}$ \; 
\Return{$\delta(Z_1 + \cdots + Z_{NL})$}\;
\;
\Function{$\RS{S = \set{a+1,\cdots, b},Z_{<S} =(U_{t_{1}} - U_{t_{0}} , \cdots ,  U_{t_{a}} - U_{t_{a-1}} ) }$}{
    $\Blue{X} := (U_{t_{a+1}} - U_{t_{a}} , \cdots ,  U_{t_{b}} - U_{t_{b-1}} )    \gets\text{sample from }\nu_S\parens*{\cdot\given Z_{<S}} : = \bigotimes_{r=a+1}^b \Normal{(t_r - t_{r-1}) f(t_a, U_{t_a}), t_r -t_{r-1} } $\;
    Compute $U_{t_s} \gets Z_1 + Z_2 + \cdots + Z_a + \Blue{X_1} + \cdots + \Blue{X_{s-a}}$ for $s\in \set{a,\cdots, b}$\;
    Compute
    $\dd{\mu_S}{\nu_S}\parens*{\Blue{X}\given Z_{<S}} \gets \prod_{s=a+1}^b \frac{\Normal{ (t_s-t_{s-1})f (t_{s-1}, U_{t_{s-1}}), t_s-t_{s-1}  }}{\Normal{(t_s - t_{s-1}) f(t_a, U_{t_a}), t_s -t_{s-1} } } $ \;
    $u\gets \min\set*{1,\dd{\mu_S}{\nu_S}\parens*{\Blue{X}\given Z_{<S}} }$\;
    \eIf{coin flip with bias $u$ comes up heads}{
        \Return{$\Blue{X}$} \tcp{Note: if $b\leq a+1$ then $u=1$ and the algorithm terminates here} \;
    }{
    	\For{$r=0, 1, 2, \dots$}{
    		\ParallelFor{$i=  \lceil (1+\rho)^r\rceil,\lceil (1+\rho)^r\rceil+2, \dots, \lceil (1+\rho)^{r+1}\rceil -1$}{
    			$\Orange{Y^{(i)}} := (U^{(i)}_{t_{a+1}} -U^{(i)}_{t_{a}}, \cdots, U^{(i)}_{t_{b}} - U^{(i)}_{t_{b-1}}) \gets \fallback{S, Z_{<S},2}$\;
                Compute $U_{t_s}^{(i)} \gets Z_1 + Z_2 + \cdots + Z_a + \Orange{Y^{(i)}_1} + \cdots + \Orange{Y^{(i)}_{s-a}}$ for $s\in \set{a,\cdots, b}$\;
                Compute
    $\dd{\nu_S}{\mu_S}\parens*{\Orange{Y^{(i)}}\given Z_{<S}} \gets \prod_{s=a+1}^b \frac{\Normal{(t_s - t_{s-1}) f(t_a, U^{(i)}_{t_a}), t_s -t_{s-1} } }{\Normal{ (t_s-t_{s-1})f (t_{s-1}, U^{(i)}_{t_{s-1}}), t_s-t_{s-1}  }}$ \;
    			$p^{(i)}\gets \max\set*{0, 1-\dd{\nu_S}{\mu_S}\parens*{\Orange{Y^{(i)}}\given Z_{<S}}}$\;
    			$c^{(i)}\gets \text{coin flip with bias }p^{(i)}$\;
    		}
    		\If{any $c^{(i)}$ has resulted in heads}{
    			$i^*\gets \text{index of first head}$\;
    			\Return{$\Orange{Y^{(i^*)}}$}\;
    		}
    	}
    }
}\;

\Function{$\fallback{S = \set{a+1,\cdots, b}, Z_{<S}, k} $}{
    $T_1 \gets \set{a+1, \cdots,  a+ \lfloor \frac{b-a}{k}\rfloor }, T_2 \gets \set{ a+ \lfloor \frac{b-a}{k}\rfloor + 1 , \cdots ,  a+ 2 \lfloor \frac{b-a}{k}\rfloor }, \cdots , T_k \gets \set{ a+ (k-1)\lfloor \frac{b-a}{k}\rfloor + 1 , \cdots ,  b }$\;
    $Z\gets Z_{<S}$\;
    \For{$i=1,2, \cdots, k$}{
        $Z_{T_i}\gets \RS{T_i, Z}$\;
        append $Z_{T_i}$ to the end of $Z$\;
    }
    \Return{$Z_S$}\;
}
\caption{Recursive Gaussian denoising scheme for $\mu:\R^n \to \R_{\geq 0}$ with accuracy parameters $ \delta, \epsilonTV>0,$ assuming $\sup\set*{\norm{x- \E_\mu{X}} \given x\in \supp(\mu)}\leq R$, given Gaussian denoiser oracle $f(t,x) =\E*_{Y \sim \mu, g\sim \Normal{0, tI}}{Y \given tY+g=x}$ \label{alg:main algo gaussian}}
\end{Algorithm}

The pseudocode is given in \cref{alg:main algo coordinate}.
For a more precise statement, see \cref{thm:autoregression main detail} and \cref{remark:density vs. sampling query}.  
A high-probability bound on the number of rounds can also be obtained, with a slightly worse dependence on $n$, for example, $n^{1/2 + o(1)}$ instead of $\tilde{O}(n^{1/2})$ (see \cref{thm:autoregression with high probability} and \cref{remark:autoregression with high probability parameter choice}).
Beyond the oracle calls, the internal computations performed by our algorithm can also be efficiently parallelized; when implemented on a standard parallel model (\Class{PRAM}), the overall parallel runtime matches the round complexity up to polylogarithmic factors and the number of operations match the query complexity (see \cref{remark:implementation details coordinate denoiser}). Furthermore, our result is robust to noise in the oracle (see \cref{remark:approximate autoregression oracle}).

As our second contribution, we show that under a mild bounded-support assumption on the target distribution~$\mu$, denoising diffusion admits a substantial parallel speedup, achieving a runtime that scales as $n^{1/2},$ up to polylogarithmic factors in the accuracy parameters.
\begin{theorem}[Informal, main result for denoising diffusion oracles]\label{thm:main-diffusion-informal}
    Given access to the Gaussian denoiser \eqref{eq:conditional mean def} for a distribution $\mu$ with $\sup\set*{\norm{x- \E_\mu{X}}\given x\in \supp(\mu)} \leq R,$ and accuracy parameters $\epsilonTV, \delta>0,$ one can sample from a distribution that is $\epsilonTV$-close to $\mu\ast \Normal*{0, \delta\cdot I}$ in total variation distance in the following number of rounds in expectation: \[O\parens*{n^{1/2}\cdot \poly\log(R, n, \delta^{-1}, \epsilonTV^{-1})}.\]
\end{theorem}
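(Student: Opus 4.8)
The plan is to prove \cref{thm:main-diffusion-informal} in three stages: a discretization reduction, exactness of the recursive sampler, and the round‑complexity bound. First, I would argue that the law $\mu_{[NL]}$ of the increments $(Z_1,\dots,Z_{NL})$ produced by \cref{alg:main algo gaussian}, pushed forward by $(z_i)\mapsto\delta\sum_i z_i$, is $\epsilonTV$‑close in total variation to $\mu\ast\Normal{0,\delta I}$. This is the Euler--Maruyama analysis of \textcite{BDDD23,CDG23} for the stochastic localization SDE, but carried out under the bounded‑support hypothesis rather than the usual $\tr{\cov{\mu}}=n$ normalization. The relevant facts are that $\supp(\mu)\subseteq B(\E_\mu{X},R)$ forces every posterior covariance — hence the Jacobian of the Gaussian denoiser $f(t,\cdot)$ — to have operator norm at most $R^2$, and that $t\mapsto f(t,U_t)$ is a martingale with total expected quadratic variation at most $R^2$. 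Plugging these into the standard per‑step KL estimate — where the $n$‑dimensional Brownian increment contributes the usual $\Theta(n)$ factor — and summing over the schedule hard‑coded into the algorithm (the dyadic breakpoints $\mathbf T_0,\dots,\mathbf T_L$ on $[R^{-1},\delta^{-1}]$, refined by $N$ uniform substeps) one checks that $L=\Theta(\log(R/\delta))$ and $N=\Theta(\max\{R^2/\delta,R^4/\delta^2,n/\epsilonTV^2\})$ suffice; the early geometric phase is precisely what keeps the drift from moving too far relative to the step size at small $t$.

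Second, I would show by induction on $\lvert S\rvert$ the joint claim that $\RS{S,Z_{<S}}$ and $\fallback{S,Z_{<S},k}$ return exact samples from $\mu_S(\cdot\mid Z_{<S})$, the conditional law of the discretized chain. The base case $\lvert S\rvert\le 1$ is immediate: for a single step the frozen‑drift proposal $\nu_S$ coincides with the Euler transition $\mu_S$, so $u=1$. For the inductive step of $\RS{\cdot}$: the accept branch outputs $\Blue{X}$ with sub‑probability density $\min\{\nu_S,\mu_S\}$; the reject branch, entered with probability exactly $1-\int\min\{\nu_S,\mu_S\}=\dTV{\mu_S,\nu_S}$, repeatedly draws $\Orange{Y}\sim\mu_S$ via $\fallback{\cdot}$ (exact by induction, since the fallback concatenates exact $\RS{\cdot}$ outputs on the two halves and the chain is Markov) and accepts it with probability $\max\{0,1-\nu_S/\mu_S\}$, hence outputs with density $\propto\max\{0,\mu_S-\nu_S\}$; the mixture $\min\{\nu_S,\mu_S\}+\max\{0,\mu_S-\nu_S\}=\mu_S$. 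One also checks that the reject loop terminates almost surely — each fallback attempt is an independent success of probability $\dTV{\mu_S,\nu_S}$ — and that returning the first successful attempt introduces no bias.

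Third, and this is the main work and the principal obstacle, I would bound the expected number of rounds. Organize the $\RS{\cdot}$ calls into a recursion tree: the $L$ top‑level calls act on blocks of $N$ steps, and each recursive $\fallback{\cdot,\cdot,2}$ splits a size‑$m$ block into two of size $m/2$, so the depth is $O(\log N)$. The key quantity for a level‑$\ell$ block $S$ (size $m_\ell=N/2^\ell$) is $\bar\epsilon_\ell:=\E{\dTV{\mu_S,\nu_S}}$; by Pinsker and the KL chain rule, $\bar\epsilon_\ell^2\le\tfrac12\,\E{\DKL{\mu_S \,\|\, \nu_S}}=\tfrac14\sum_{s\in S}(t_s-t_{s-1})\,\E{\norm{f(t_{s-1},U_{t_{s-1}})-f(t_a,U_{t_a})}^2}$. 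The crucial estimate — the diffusion analogue of the bound ``total correlation $\le n\log q$'' that drives the autoregressive case — is that this right‑hand side, summed over the top‑level blocks, is $O(n\cdot\poly\log(R,\delta^{-1}))$, using the bounded‑support control on $f$ together with the schedule, and that it decays geometrically in $\ell$, so $\bar\epsilon_\ell\lesssim 2^{-\ell}\cdot n^{1/2}\cdot\poly\log(n,R,\delta^{-1},\epsilonTV^{-1})$. The cost then satisfies a recursion $\Phi_\ell\lesssim\poly\log + \Otilde{\bar\epsilon_\ell\log(1/\bar\epsilon_\ell)+\bar\epsilon_\ell}\cdot\Phi_{\ell+1}$: with probability $\approx\bar\epsilon_\ell$ the algorithm enters the reject loop, and because $\rho=1/\log N$ the geometric search there locates a successful fallback within $\Otilde{\log(1/\bar\epsilon_\ell)}$ parallel rounds, each running many fallbacks in parallel, every one costing at most two sequential level‑$(\ell+1)$ calls. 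Since $x\log(1/x)\le\sqrt x$, this coefficient is $\Otilde 1$ only for the $\approx\tfrac12\log_2 n$ levels with $\bar\epsilon_\ell\gtrsim 1$ and drops below $1$ shortly thereafter, so $\Phi$ roughly doubles over those levels and then stabilizes, giving $\Phi_0=\Otilde{n^{1/2}}$; summing over the $L=\Otilde 1$ top‑level blocks completes the bound. I expect the hard parts to be: (a) extracting the crucial estimate with the correct dependence on $n$ and only polylogarithmic dependence on $R,\delta^{-1}$ out of the schedule and the bounded‑support assumption; (b) controlling the \emph{maximum} cost over a batch of up to $\approx 1/\bar\epsilon_\ell$ parallel recursive calls inside the reject loop, which needs a high‑probability — not merely expected — bound on a single call's cost, available because the depth is $O(\log N)$ and each loop length has exponential tails; and (c) turning the informal ``doubling up to level $\tfrac12\log_2 n$'' heuristic into a rigorous potential argument, closely following \textcite{AGR24}.
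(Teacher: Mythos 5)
Your stages~1 and~2 track the paper: the accuracy reduction is exactly the Euler--Maruyama/Girsanov bound under the geometric-then-uniform schedule, using the covariance decay $\mathbb{E}[\tr{\Sigma_t}]\le n/(R^{-1}+t)$ that follows from $d\Sigma_t=-\Sigma_t^2\,dt+\text{martingale}$, and correctness is the same induction built on \cref{prop:srs-is-correct}. One caveat already in your ``crucial estimate'': the martingale/It\^o identities ($df(t,\cdot)=\Sigma_t\,dB_t$, trace telescoping) hold along the \emph{continuous} stochastic localization process $\bar X_t$, whereas the quantities $\dTV{\mu_S,\nu_S}$ appearing in the round bound are expectations along the \emph{discretized} chain that the algorithm actually samples; your sketch asserts the martingale property for the discretized $U_t$ and then says ``bounded support plus the schedule.'' The paper needs an explicit transfer step, \cref{lem:change of measure for bounded support} (data processing, Pinsker, and the path-space KL bound), whose extra $R^2\dTV{\cdot,\cdot}^2$ error is precisely what forces the $R^4/\delta^2$ term in $N$; this transfer is missing from your outline, though it is fixable.

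The genuine gap is your stage~3 accounting. A level-by-level recursion $\Phi_\ell\lesssim \poly\log+\Otilde{\bar\epsilon_\ell\log(1/\bar\epsilon_\ell)+\bar\epsilon_\ell}\cdot\Phi_{\ell+1}$ cannot give $\Otilde{n^{1/2}}$: over the $\approx\tfrac12\log_2 n$ levels where $\bar\epsilon_\ell\approx 1$ the coefficient multiplies out, so any hidden polylogarithmic slack compounds to $n^{\Theta(\log\log n)}$ and even a constant slack beyond $2$ compounds to a larger polynomial; making ``doubling'' exact would require the coefficient to be $2+o(1/\log n)$, which neither the batch count $1+\log_{1+\rho}t_*$ nor the need to control the \emph{maximum} over a batch permits. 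Conversely, if you tame the batch maximum by replacing it with the batch sum and recurse, the cancellation $\dTV{\mu,\nu}\cdot\frac{1+\rho}{\dTV{\mu,\nu}}$ from \cref{eq:bound on the number of threads} makes the per-level factor $\approx 2(1+\rho)$ \emph{independent} of the TV, and the product over $\log_2 N$ levels is $\Theta(N)$ --- you recover the query bound of \cref{lem:expected number of queries}, i.e.\ no parallel speedup. The paper's resolution is a different, global accounting (\cref{lem:expected round bound}): bound $T\le 1+\sum_{r}\sum_{S_r}C(S_r)\,W^{(\mathrm{root})\to S_r}$, where each activated node is charged only its \emph{local} number of batches, and prove $\mathbb{E}[W^{(\mathrm{root})\to S_r}]\le(1+\rho)^{r}\,\mathbb{E}[\varphi(\dTV{\mu_{S_r},\nu_{S_r}})]$ with $\varphi(x)=x(2+\log_{1+\rho}(1/x))$; the activation multiplicity stays $(1+\rho)^r=O(1)$ because the rejection probability exactly cancels the expected number of fallback trials. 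Each level then contributes $\Otilde{L\sqrt n}$ --- the per-level TV bound in the proof of \cref{thm:full denoising polylog} is established uniformly in $\ell$ (via \cref{prop:mean difference bound via Ito formula} and Cauchy--Schwarz), so no geometric decay of $\bar\epsilon_\ell$ is needed --- and there are only polylogarithmically many levels. In particular, no high-probability tail bound on a single call's cost enters the expected-round analysis (that machinery, \cref{lem:worst case bound on number of rounds} plus the restart argument adapted from AGR24, is used only for the separate high-probability statement in the autoregressive section). Without this sum-over-subproblems idea, your stage~3 does not establish the claimed $\Otilde{n^{1/2}}$ bound.
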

The pseudocode is given in \cref{alg:main algo gaussian}. For further details, see \cref{thm:full denoising polylog}. Our algorithm can be implemented in the PRAM model where the overall parallel runtime matches the round complexity up to polylogarithmic factors (see \cref{remark:implementation details gaussian denoiser}) and can also handle error in the Gaussian denoiser (see \cref{remark:approximte gaussian denoiser}).

Very recently, \cite{hu2025diffusionmodelssecretlyexchangeable} showed a parallel algorithm that needed $\Otilde*{n^{2/3}/{\epsilonTV^{4/3}}}$ rounds to sample within $\epsilonTV$ total variation distance of $\mu\ast \Normal*{0, \delta\cdot I}$, under the normalizing assumption $ \tr{\cov{\mu}} = O(n)$.  Their runtime scales as $n^{2/3}$ which is worse than our $n^{1/2}$ bound.  Notably, in the high-accuracy regime $\epsilonTV = 1/\poly(n)$—which is the standard setting for approximate sampling—this runtime is not even necessarily sublinear in $n$. In contrast, our algorithm runs in $\Otilde{n^{1/2}}$ rounds, for any accuracy parameters $\epsilonTV, \delta$ that are $2^{- \poly\log(n)},$ assuming $R = 2^{\poly\log (n)}$. Our result offers the first parallel speedup for denoising diffusion in the high-accuracy regime without strong structural assumptions i.e., smooth score function along the Ornstein-Uhlenbeck process \cite{GCC24,CRYR24}, and addresses an open question posed by \textcite{GCC24} in the parallel setting.

\subsection{Related work}

Parallelizing sampling algorithms has been a subject of interest both in theoretical computer science and in machine learning and AI. In theoretical computer science, the interest can be traced back to at least \textcite{MVV87} who gave a parallel algorithm to \emph{find} perfect matchings and asked if one can also \emph{sample} a uniformly random one in parallel; this question remains open, even on planar graphs where parallel counting (\Class{NC}) is possible \cite[see, e.g.,][]{AGR24}. Our results, specifically \cref{thm:main-autoregressive-informal}, improve the state of the art for sampling planar perfect matchings from the previous parallel runtime of $\Otilde{n^{1/3}}$ \cite{AGR24} to $\tilde{O}(n^{1/4})$ by just directly plugging into the analysis done by \textcite{AGR24}.

Perhaps motivated by the generality of algorithms used to tackle sampling, e.g., general-purpose Markov chains such as Glauber dynamics or chains based on Langevin dynamics, sampling has been extensively studied in an \emph{oracle setup}, where one can access a distribution by asking questions from an oracle. This is also the setting we adopt.

Classically, sampling via oracles has been studied in settings where the oracle can answer \emph{local queries}. For example, for distributions with density $\mu$ on $\R^n$, this might involve mapping an input point $x\in \R^n$ to $\log \mu(x)$ or $\nabla \log \mu(x)$, or for discrete distributions $\mu$ on a space like $[q]^n$ this might involve mapping $x\in [q]^n$ to something proportional to $\mu(x)$. Local oracles are often readily available for explicit distributions, and they are what standard Markov chains like Glauber dynamics or chains based on Langevin dynamics need.

Parallelizing Markov chains has been studied by several works. \Textcite{Ten95} showed how to parallelize Markov chains with at most $\poly(n)$ states and $\poly(n)$ mixing time to run in $\poly\log(n)$ time. Parallelizing more useful Markov chains such as Glauber dynamics --- which have an exponentially large state space --- to run in $\poly\log(n)$ time was shown to be possible under certain structural assumptions on the underlying distribution $\mu$ \cite{FHY21,LY22}. Similarly, on the continuous side, Markov chains that sample from distributions $\mu$ on $\R^n$ have been parallelized under certain structural assumptions such as log-concavity or isoperimetric inequalities satisfied by $\mu$ \cite{SL19,ACV24}. These assumptions on $\mu$ are natural as they are needed to obtain even polynomial-time sequential sampling algorithms via local oracles, let alone parallel sampling algorithms.

In this work, our focus is on sampling with little or no assumptions on $\mu$. While this is impossible via \emph{local oracles}, at least polynomial-time sequential sampling is possible via the following \emph{global oracles}, as discussed before: the any-order autoregressive oracle and the denoising diffusion oracle. For the former, the standard algorithm with $n$ steps is folklore, while for the latter, the algorithm is based on an appropriate discretization  \cite{BDDD23} of the stochastic localization of \textcite{Eld13}.

In the any-order autoregressive case, the only work we are aware of that obtains theoretically guaranteed parallel speedups is \cite{AGR24}, which gives a $O(n^{2/3}\poly\log(n, q))$ time algorithm. 

For the denoising diffusion case, very recently, \cite{hu2025diffusionmodelssecretlyexchangeable} gives a $\Otilde{{n^{2/3}}/{\epsilonTV^{4/3}}}$-time parallel algorithm to sample within $\epsilonTV$ total variation distance of $ \mu\ast \Normal*{0, \delta\cdot I}$, under the minimal assumption that $\tr{\cov{\mu}}\leq O(n).$
With much stronger structural assumptions\footnote{However, these assumptions are still milder than those needed for parallelizing Markov chains}, $\poly\log(n)$ time sampling via a denoising diffusion oracle was shown to be possible in two cases: for semi-log-concave distributions \cite{AHLVXY23,ACV24}, and for distributions with a smooth score function along the Ornstein-Uhlenbeck process \cite{GCC24, CRYR24}. For instance, when the score function along the Ornstein-Uhlenbeck process is $L$-smooth, \cite{GCC24} shows a parallel algorithm that approximate samples within $\epsilon_{TV}$ total variation distance in $\Otilde{L \cdot \poly\log(L n/\epsilon_{TV})}$ time. It is worth noting, however, that in general $L$ can be very large; for instance, converting a distribution supported on a ball of radius $R$ into one with an $L$-smooth score typically yields $L \approx R^2$.

A recent line of work investigates the ODE version of denoising diffusion (obtained from the corresponding SDE), where the goal is to achieve a sequential runtime with better dependency on the dimension $n$, for example, sublinear in $n$. However, these results either have super-linear dependence on $n$ or poor dependence on other parameters \cite{beyler2025convergence,li2024sharp}, or else require strong structural assumptions \cite{chen2023probability,gao2024convergence,kremling2025non}. Among these, the current state-of-the-art is due to \cite{chen2023probability}, who show that if the score function along the Ornstein–Uhlenbeck process is $L$-smooth, then an ODE-based algorithm can approximately sample within $\epsilon_{TV}$ total variation distance in $\Otilde{L^2 \sqrt{n}/\epsilon_{TV}}$ sequential steps. In general, the smoothness parameter $L$ and the inverse accuracy parameter $1/\epsilon_{TV}$ may be large e.g. $\poly(n)$, so this runtime is not necessarily sublinear in $n$.

There have been numerous proposals with surprising success to speed up sampling in generative AI models using parallelization. In diffusion models, Picard iterations were observed to yield speedups in practice \cite{SBESA23}. Picard iteration is a technique that approximately solves an ODE or an SDE by iterating an operation whose fixed point is the solution trajectory, and has been the basis of many theoretical works as well \cite{SL19, ACV24, GCC24, CRYR24}. On the other side, for autoregressive models, a different widely applied parallelization technique, known as \emph{speculative decoding} \cite{SSU18,CBILSJ23} is the dominant approach in practice. Speculative decoding sequentially uses a faster but less accurate oracle, called the \emph{draft model}, to produce samples and ``verifies'' these samples using the accurate but slower main oracle. The theoretical work of \cite{AGR24} for any-order autoregressive models is also based on a similar draft-and-verify approach, and we use the same high-level approach in this paper. As a remark, unlike the practical setup of speculative decoding, we do not use tiers of oracles; rather we show that an oracle can be its own draft model, or in other words, an oracle can \emph{autospeculate}.

\subsection{Techniques}
\label{sec:techniques}

Our main algorithm borrows, but crucially alters, elements of a parallelization technique known as speculative decoding \cite{SSU18,CBILSJ23}, widely used in practice for autoregressive models such as large language models. At a high level, in speculative decoding, to sample from a distribution $\mu$ on $[q]^n$, a separate less accurate but faster autoregressive oracle --- imagine for a different distribution $\nu$ --- is used to generate samples while the main oracle verifies these samples. In more detail, the oracle for $\nu$ is used to sequentially generate $X_1,\dots,X_n$. These are speculations for a sample from $\mu$. One then computes for all $i$ in parallel $\P_{X\sim \mu}{X_i\given X_{[i-1]}}$ and based on these probabilities we decide to accept each $X_i$ or reject it. The algorithm proceeds up to the first rejected $X_i$, samples according to a corrected marginal for coordinate $i$, and the process is restarted after pinning $X_1,\dots,X_i$. This repeats until all coordinates have been generated.

\Textcite{AGR24} showed that for \emph{any-order} autoregressive models, instead of using a separate oracle in speculative decoding, one can use the same oracle to produce speculations. More precisely, if $\nu$ is defined to be the product distribution with the same marginals as $\mu$, then it is easy to sample in parallel from $\nu$, and these samples can then be used as speculations verified by the oracle in a parallel manner. For details of the algorithm, see \cite{AGR24}. While resulting in a theoretically guaranteed speedup, \textcite{AGR24} also showed a barrier: they showed their analysis for the resulting speculation-based algorithm, a runtime of $O(n^{2/3}\cdot \poly\log(n, q))$, is essentially tight; the algorithm had to be changed to obtain an improvement.

Our main observation is that verifying samples \emph{iteratively} as in the original practical works on speculative decoding \cite{SSU18,CBILSJ23}, and in the theoretical algorithm of \cite{AGR24} is wasteful. To illustrate, imagine two product distributions $\mu=\operatorname{Ber}(1/2)^{\otimes n}$ and $\nu=\operatorname{Ber}(1/2+\epsilon)^{\otimes n}$. If we sample $(X_1,\dots,X_n)\sim \nu$, then each $X_i$ is accepted with probability $\simeq 1-\Theta(\epsilon)$. So overall, we are unlikely to accept the entire set of $X_1,\dots,X_n$ unless $\epsilon\ll 1/n$. However, this is far from an optimal coupling between $\mu$ and $\nu$. Indeed we can see that $\dTV{\mu, \nu}\leq \sqrt{\frac{1}{2}\DKL{\mu\river \nu}}=O(\sqrt{n\epsilon^2})=O(\sqrt{n}\cdot \epsilon)$. So in principle, we should be able to couple $\mu$ and $\nu$ in a way that most samples from $\nu$ are accepted as samples from $\mu$, as long as $\epsilon\ll 1/\sqrt{n}$. The suboptimality of the iterative acceptance procedure is at the heart of the barrier faced by prior methods.

To overcome the barrier faced by prior works, we introduce a new procedure that we call \emph{speculative rejection sampling}. See \cref{fig:speculative-rejection} for a visual. Imagine $\mu$ and $\nu$ are distributions on an arbitrary space, and $\nu$ is speculatively close to $\mu$, in total variation distance. Assuming that we can sample from $\mu$, $\nu$ and also compute $\dd{\mu}{\nu}$ at any desired point, speculative rejection sampling produces a sample $\sim \mu$ as follows:
\begin{Algorithm}[H]
    sample $\Navy{x}\sim \nu$\;
    accept and return $\Navy{x}$ with probability $\min\set*{1, \dd{\mu}{\nu}(\Navy{x})}$\;
    \While{not accepted}{
        sample $\Orange{y}\sim \mu$\;
        accept and return $\Orange{y}$ with probability $\max\set*{0, 1-\dd{\nu}{\mu}(\Orange{y})}$\;
    }
    \caption{Speculative rejection sampling\label{alg:srs}}
\end{Algorithm}
It can be easily shown that the output of this process is a faithful sample from $\mu$, see \cref{prop:srs-is-correct}. Our assumption is that directly sampling $\sim \mu$ is expensive, in the sense that it needs a sequential algorithm, but sampling $\sim \nu$ or computing $\dd{\mu}{\nu}$ is easy, in the sense that it can be efficiently parallelized. So the goal of speculative rejection sampling is to minimize the need for directly sampling $\Orange{y}\sim \mu$. The first sample, $\Navy{x}$, in this algorithm, is accepted with probability $1-\dTV{\mu, \nu}$ and if that happens we never need to sample $\Orange{y}\sim \mu$. But unfortunately if $\Navy{x}$ fails, each $\Orange{y}$ is only accepted with probability $\dTV{\mu, \nu}$, so on average the while loop takes $1/\dTV{\mu, \nu}$ many iterations. All in all, speculative rejection sampling needs on average $\dTV{\mu, \nu}/\dTV{\mu, \nu}=1$ direct samples $\Orange{y}\sim \mu$! Thus, at first sight, speculative rejection sampling seems pointless; one can simply ignore $\nu$ and directly sample $\sim \mu$.

A key observation is that with access to parallelism, many iterations of the while loop can be run simultaneously. 
Ideally, we would like to run the entire while loop in parallel, so that its total cost matches the cost of drawing a single sample $\Orange{y}\sim \mu.$ In this ideal setting, speculative rejection sampling would pay this cost only $\dTV{\mu, \nu}$ fraction of the time, making it an appealing speedup whenever $\dTV{\mu, \nu}$ is small. However, fully parallelizing the loop would require generating infinitely many candidate processes, leading to an unbounded number of oracle queries.

To approximate this ideal behavior while keeping the number of queries finite, we run \textbf{batches} of the while loop iterations in parallel, with geometrically increasing batch sizes, until one $\Orange{y}$ from a batch gets accepted. Surprisingly, this simple trick yields a near-ideal speedup in the number of rounds, while increasing the number of queries by only a logarithmic factor.

The last missing piece is to sample $\Orange{y}\sim\mu$. For this, we use a divide-and-conquer recursion: we partition the problem into smaller subproblems by splitting the set of coordinates in the autoregressive case and the time interval in the denoising diffusion case, and recursively apply the same sampling procedure to each part, processing these recursive calls sequentially across the parts. In \cref{sec:algorithm}, we tackle both the autoregression case and denoising diffusion case under a single unified approach, where the divide-and-conquer step corresponds to the $\fallback{}$ in \cref{alg:rs2}.

\begin{figure}
\Tikz*{
    \begin{axis}[hide axis, no markers, smooth, domain=-2.5:2.5, height=5cm, width=8cm]
        \addplot[line width=1, Black, name path=A1, domain=-2.5:0] {exp(-(x+0.5)^2/2)};
        \addplot[line width=1, Black, name path=A2, domain=0:2.5] {exp(-(x+0.5)^2/2)};
        \addplot[line width=1, Black, dashed, name path=B1, domain=-2.5:0] {exp(-(x-0.5)^2/2)};
        \addplot[line width=1, Black, dashed, name path=B2, domain=0:2.5] {exp(-(x-0.5)^2/2)};
        \addplot[draw=none, name path=C1, domain=-2.5:0] {0};
        \addplot[draw=none, name path=C2, domain=0:2.5] {0};
        \addplot[LightOrange] fill between[of=A1 and B1];
        \addplot[LightNavy] fill between[of=B1 and C1];
        \addplot[LightNavy] fill between[of=A2 and C2];
        \node at (axis cs: -2,0.5) {$\mu$};
        \node at (axis cs: 2,0.5) {$\nu$};
    \end{axis}
}
\caption{\label{fig:speculative-rejection} Speculative rejection sampling first attempts to produce an $\Navy{x}$ from the area common to both distributions by sampling $\sim \nu$ and rejecting with a carefully tuned probability; if unsuccessful, it switches to sampling $\Orange{y}$ from the defect region by sampling $\sim \mu$ and accepting with a carefully tuned probability. Sampling from the defect region is continued until acceptance.}
\end{figure}
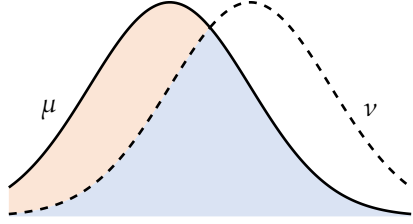

We apply speculative rejection sampling to any-order autoregressive models by setting $\nu$ to be a product distribution with the same marginals as $\mu$. For denoising diffusion models, we take $\mu$ to be the path measure of the stochastic localization SDE, and $\nu$ to be the path measure of a similar SDE, but with constant drift throughout.  Initially, in both cases $\dTV{\mu, \nu}$ may be large; however, as the recursion proceeds and the subproblems become smaller, the total variation distance $\dTV{\mu', \nu'}$ in recursive calls decreases accordingly.

The expected number of queries can be bounded using a simple inductive argument; the main challenge lies in analyzing the runtime, that is, bounding the expected number of rounds.

At the heart of our analysis are inequalities, \cref{lem:pinning lemma for induction} and \cref{prop:mean difference bound via Ito formula}, that bound the total variation distance between the distribution $\mu'$s produced by the recursion and their corresponding speculation distributions $\nu'$s. These inequalities are related in spirit and proof to the so-called pinning lemmas \cite{Mon08,RT12} and analogs of pinning lemmas for stochastic localization \cite{Eld20,EM22}. 

As a remark, even after bounding the total variation distances between distributions and their speculations, the runtime analysis still remains challenging. This is because different recursive calls take a random amount of time, and in a batch of parallel recursive calls, we need to wait until all of them are finished. In other words, the runtime (i.e. the number of rounds) of a batch is determined by the slowest call, which is typically hard to control. 

To handle this, we analyze the runtime by rewriting it as the sum of the cost of the while-loop of all recursive subproblems across all levels of the recursion. 
While this may appear loose, the combination of our batching strategy and the structure of the recursion ensures that each subproblem is only processed a small, poly-logarithmic number of times in expectation. Recall that for a given subproblem with target distribution $\mu'$ and speculative distribution $\nu'$, the while loop only executes with probability $d_{TV}(\mu',\nu').$ Consequently, the expected runtime is roughly the sum of the total variation distance between $\mu'$ and $\nu'$ across subproblems, which can be controlled by \cref{lem:direct kl divergence bound} and \cref{prop:mean difference bound via Ito formula}.

We provide a common analysis of the runtime and number of queries in \cref{sec:runtime}. We then specialize to any-order autoregressive models in \cref{sec:autoregression} and to denoising diffusion in \cref{sec:diffusion-models}.
    \section{Preliminaries}\label{sec:prelim}

When a random variable $X$ is distributed according to the probability distribution $\mu$, i.e., when for every measurable set $A$, we have $\P{X\in A}=\mu(A)$, we write $X\sim \mu$ and $\mu=\law{X}$.

\begin{definition}[Radon-Nikodym derivative]\label{def:random-nikodym}
    For distributions $\mu$ and $\nu$ on the same space $\Omega$, we denote the Radon-Nikodym derivative of $\nu$ w.r.t.\ $\mu$ by $\dd{\nu}{\mu}:\Omega\to \R_{\geq 0}$. This is a measurable function such that for any measurable set $A$
    \[ \nu(A)=\int_{A} \dd{\nu}{\mu}\d{\mu}. \]
    The derivative exists whenever $\nu$ is absolutely continuous w.r.t.\ $\mu$, that is $\nu\ll \mu$, and is unique up to a set of measure $0$ according to $\mu$. For simplicity, and with some abuse of notation, we just treat it as a uniquely defined object throughout.
\end{definition}

\begin{definition}[Product distributions]
    For distributions $\mu_1,\dots,\mu_n$ on spaces $\Omega_1,\dots,\Omega_n$ we define $\mu_1\otimes \cdots \otimes \mu_n$ to be the product distribution, supported on $\Omega_1\times\cdots\times \Omega_n$ defined by setting
    \[ \mu_1\otimes\cdots\otimes \mu_n(A_1\times \cdots\times A_n)=\mu_1(A_1)\cdots \mu_n(A_n)\]
    for all measurable $A_1,\dots,A_n$.
\end{definition}

We use $\Normal{m, \Sigma}$ to denote the normal/Gaussian distribution with mean $m$ and covariance matrix $\Sigma$.

\subsection{Information theory}

Here we state several definitions and well-known facts from information theory. For a comprehensive reference see \cite{Gra11}.

\begin{definition}[KL-divergence]
    For two distributions $\mu, \nu$ on the same space $\Omega$, we define the Kullback-Leibler (KL) divergence as
    \[ \DKL{\nu\river \mu}=\E*_{x\sim \mu}{\dd{\nu}{\mu}(x)\cdot \log\parens*{\dd{\nu}{\mu}(x)}}=\E*_{x\sim \nu}{\log\parens*{\dd{\nu}{\mu}(x)}}. \]
\end{definition}

\begin{definition}[Entropy]
    Let $X$ and $Y$ be random variables on finite spaces $\Omega_1$ and $\Omega_2$, respectively. The entropy of random variable $X$ is defined to be
    \[\H{X} =-\sum_{x\in \Omega_1} \P{X=x} \log\parens*{\P{X=x}}. \]
    The conditional entropy of $X$ conditioned on $Y$ is defined to be
    \[ \H{X\given Y}= \sum_{y\in \Omega_2} \P{Y=y}\cdot \H{X \given Y=y} = \H{X,Y} - \H{Y}.\]
\end{definition}
\begin{proposition}\label{prop:simple-entropy-bound}
    Let $X$ be a random variable supported on $[q]$. Then $0\leq \H{X} \leq \log q$.
\end{proposition}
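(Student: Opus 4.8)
The plan is to establish the two inequalities separately, both by elementary means, using throughout the standard convention $0\log 0 = 0$ (justified by $\lim_{t\to 0^+} t\log t = 0$), so that atoms $x$ with $\P{X=x}=0$ contribute nothing to the defining sum and may be dropped. Write $p_x=\P{X=x}$ and let $\supp(X)=\set*{x\in[q]\given p_x>0}$, so that $\abs{\supp(X)}\leq q$.

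For the lower bound $\H{X}\geq 0$: every $x\in[q]$ has $p_x\in[0,1]$, hence $\log\parens*{p_x}\leq 0$, so each summand $-p_x\log\parens*{p_x}$ is nonnegative, and therefore so is their sum. For the upper bound $\H{X}\leq \log q$, I would apply Jensen's inequality to the concave function $\log$ with the weights $(p_x)_{x\in\supp(X)}$, which sum to $1$:
\[
\H{X}=\sum_{x\in\supp(X)} p_x\log\parens*{\tfrac{1}{p_x}}\leq \log\parens*{\sum_{x\in\supp(X)} p_x\cdot\tfrac{1}{p_x}}=\log\abs{\supp(X)}\leq \log q.
\]
Equivalently, one can phrase the upper bound as the nonnegativity of the KL divergence to the uniform measure, $\log q-\H{X}=\DKL{\law{X}\river \Uniform{[q]}}\geq 0$, which is itself just Gibbs' inequality and follows from the same application of Jensen.

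There is essentially no obstacle here; this is a textbook fact. The only points meriting a sliver of care are the bookkeeping for zero-probability atoms (handled by the $0\log 0=0$ convention) and the degenerate case $\abs{\supp(X)}=1$, where the Jensen step and the bound both hold trivially with $\H{X}=0$. If one wishes, the equality cases can also be recorded for later use: $\H{X}=0$ iff $X$ is almost surely constant, and $\H{X}=\log q$ iff $X$ is uniform on $[q]$, the latter from the strict concavity of $\log$ and the equality condition in Jensen's inequality; but these are not needed for the stated proposition.
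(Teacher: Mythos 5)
Your proof is correct: the lower bound from termwise nonnegativity and the upper bound via Jensen's inequality (equivalently, nonnegativity of the KL divergence to the uniform distribution) is the standard argument, and the paper itself states this proposition without proof as a textbook fact. Nothing further is needed.
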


\begin{definition}[Mutual information]
    Let $X, Y$ be random variables. The mutual information between $X$ and $Y$ is defined as
    \[\I{X;Y} = \DKL{\law{X,Y} \river \law{X}\otimes \law{Y}}=\E*_{Y}{\DKL*{\law{X\given Y}\river \law{X}}}.\]
    On finite spaces, we can also write
    \begin{equation}\label{eq:mutual information equivalent def}
    \I{X;Y} =  \H{X} - \H{X\given Y} = \H{X}+\H{Y} - \H{X,Y}.
    \end{equation}
\end{definition}

There are several generalizations of mutual information to multiple variables. One that plays a crucial role for us is total correlation.
\begin{definition}[Total correlation \cite{Wat60}]
    Let $X_1,\cdots, X_n$ be random variables. Then we define the total correlation between $X_1,\dots,X_n$ as
    \[ \C{X_1,\dots,X_n}=\DKL { \law{X_1, \dots, X_n}\river \law{X_1} \otimes \law{X_2} \otimes \cdots \otimes \law{X_n}}.\]
    On finite spaces we have
    \[ \C{X_1,\dots,X_n}=\sum_{i=1}^n \H{X_i} - \H{X_1,\dots, X_n}.\]
\end{definition}

The following plays the role of a ``potential function'' in our proofs. It is also known in the literature as the sum of the total correlation and the ``dual total correlation'' (Lemma 4.3 of \cite{austin2020multi}).
\begin{definition}[Potential function]\label{def:potential function}
    For $n\geq 2$, let $X_1,\cdots, X_n$ be random variables on $\Omega_1, \dots, \Omega_n$ respectively. For $S \subseteq [n],$ let $X_S$ denote the concatenation of $X_i$ for $i\in S.$ Define 
    \[\phi(X_{S}) = \sum_{i\in S} \I*{X_i; X_{S\setminus \set{i}}} =\sum_{i\in S } \parens*{\H{X_i} - \H{X_i\given X_{S\setminus \set{i}}}} \]
    We will sometimes work with random variables of the form $\tilde{X}_i = \parens{X_i\given Y}\forall i$ for some random variable $Y$. In this case,
    we write 
$\phi\parens{X_S \given Y}$ as shorthand for $ \phi(\parens{X_i|Y}_{i\in S})$.
\end{definition}

The following shows that the potential function upper bounds the KL-divergence of a joint distribution over $n $ variables and the product distribution with the same marginals.
\begin{proposition}\label{prop:potential function upper bound kl}
    For $n\geq 2$, let $X_1,\cdots, X_n$ be random variables on $\Omega_1, \dots, \Omega_n$ respectively. For $S \subseteq [n],$ let $X_S$ denote the joint distribution over $X_i$ for $i\in S.$
    Then
    \[\C{X_1,\dots,X_n}=\DKL { \law{X_1, \cdots, X_n}\river \law{X_1} \otimes \cdots \otimes \law{X_n}}\leq \phi(X_{[n]}), 
    \]
    where $\phi$ is as defined in \cref{def:potential function}.
    In particular, we have
    \begin{equation} \label{eq:conditional mutual information ineq}
    \H{X_1, X_2, Y} + \H{Y}\leq \H{X_1,Y} + \H{X_2, Y}.
    \end{equation}
    This can also be derived using non-negativity of the conditional mutual information i.e., $\I{X_1;X_2\given Y}\geq 0.$ 
\end{proposition}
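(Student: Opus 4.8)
The plan is to reduce the statement to two standard facts — the chain rule for KL-divergence and the non-negativity of conditional mutual information — so that no real computation is needed. Fix the ordering $1,\dots,n$ (any ordering of the coordinates works equally well), write $\mu=\law{X_1,\dots,X_n}$ and $\pi=\law{X_1}\otimes\cdots\otimes\law{X_n}$, and note that because $\pi$ is a product, its $i$-th conditional given the first $i-1$ coordinates is exactly its $i$-th marginal $\law{X_i}$. The chain rule for KL-divergence then gives
\[ \C{X_1,\dots,X_n}=\DKL{\mu\river\pi}=\sum_{i=1}^n\E_{X_{[i-1]}}{\DKL{\law{X_i\given X_{[i-1]}}\river\law{X_i}}}=\sum_{i=1}^n\I{X_i;X_{[i-1]}}, \]
with the convention that the $i=1$ term is $0$. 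On finite spaces this is just the telescoping identity $\sum_i\H{X_i}-\H{X_{[n]}}=\sum_i\parens{\H{X_i}-\H{X_i\given X_{[i-1]}}}$, so one could work purely with entropies there.

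Next I would compare term by term with $\phi(X_{[n]})=\sum_i\I{X_i;X_{[n]\setminus\set{i}}}$ from \cref{def:potential function}. Since $X_{[i-1]}$ is a sub-tuple of $X_{[n]\setminus\set{i}}$ (as $i\notin[i-1]$), the chain rule for mutual information yields $\I{X_i;X_{[n]\setminus\set{i}}}=\I{X_i;X_{[i-1]}}+\I{X_i;X_{\set{i+1,\dots,n}}\given X_{[i-1]}}$, and the last term is a conditional mutual information, hence $\geq 0$. Summing over $i$ gives $\C{X_1,\dots,X_n}=\sum_i\I{X_i;X_{[i-1]}}\leq\sum_i\I{X_i;X_{[n]\setminus\set{i}}}=\phi(X_{[n]})$, which is the claimed bound. (Equivalently, on finite spaces the bound is Han's inequality $(n-1)\H{X_{[n]}}\leq\sum_i\H{X_{[n]\setminus\set{i}}}$ after expanding both $\C$ and $\phi$ into entropies, and Han's inequality follows by summing $\H{X_{[n]}}=\H{X_{[n]\setminus\set{i}}}+\H{X_i\given X_{[n]\setminus\set{i}}}\leq\H{X_{[n]\setminus\set{i}}}+\H{X_i\given X_{[i-1]}}$ over $i$ and invoking the chain rule $\sum_i\H{X_i\given X_{[i-1]}}=\H{X_{[n]}}$.)

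For the ``in particular'' statement \eqref{eq:conditional mutual information ineq}, I would observe it is exactly $\I{X_1;X_2\given Y}\geq 0$ in disguise: expanding the conditional mutual information through the chain rule for entropy, $\I{X_1;X_2\given Y}=\H{X_1\given Y}+\H{X_2\given Y}-\H{X_1,X_2\given Y}=\H{X_1,Y}+\H{X_2,Y}-\H{X_1,X_2,Y}-\H{Y}$, so non-negativity rearranges to $\H{X_1,X_2,Y}+\H{Y}\leq\H{X_1,Y}+\H{X_2,Y}$; alternatively, it is the $n=2$ case of the main bound applied to the laws conditioned on $Y$. I do not expect any genuine obstacle here, since the argument is entirely routine; the only point requiring a little care is to make sure the quantities involved are finite so the chain-rule manipulations are legitimate, and in the measure-theoretic generality of \cref{def:random-nikodym} this is handled by working directly with the KL chain rule rather than with differences of possibly-infinite entropies.
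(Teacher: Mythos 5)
Your proof is correct, but it takes a genuinely different route from the paper. The paper proves the proposition by first rewriting both sides in entropies, reducing the claim to Han-type inequality $(n-1)\H{X_{[n]}}\leq \sum_{i}\H{X_{[n]\setminus\set{i}}}$, and then establishing that by induction on $n$: it conditions on a fixed coordinate $X_i$, applies the induction hypothesis to the conditional variables, sums over $i$, and finishes with subadditivity of entropy. You instead avoid induction altogether: the KL chain rule gives $\C{X_1,\dots,X_n}=\sum_i \I{X_i;X_{[i-1]}}$, and the chain rule for mutual information plus non-negativity of conditional mutual information gives $\I{X_i;X_{[i-1]}}\leq \I*{X_i;X_{[n]\setminus\set{i}}}$ term by term, so summing yields $\C{X_1,\dots,X_n}\leq \phi(X_{[n]})$. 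Your argument is shorter and, as you note, works directly in the measure-theoretic setting (it never manipulates differences of possibly infinite entropies, which the paper's entropy-based induction implicitly assumes are finite), whereas the paper's inductive proof of the entropy inequality is self-contained on finite spaces and is reused in spirit for the more delicate averaged statement in \cref{lem:pinning lemma for induction}, where a symmetrized version of the same conditioning-and-summing trick is needed. Your treatment of \eqref{eq:conditional mutual information ineq} via $\I{X_1;X_2\given Y}\geq 0$ matches the remark already made in the statement, so no issue there.
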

\begin{proof} The second statement is obtained by applying the first statement for $n=2$ and $ \tilde{X}_i= \parens{X_i \given Y}$. We prove the first statement.
    Rewriting the LHS and the RHS using \cref{eq:mutual information equivalent def} gives
    \[ (n-1) \H{X_{[n]}} \leq \sum_{i=1}^n \H{X_{[n]\setminus \set{i}}}\]
    We prove this by induction on $n$ for $n\geq 2.$ The base case $n=2$ is equivalent to $\H{X_1}+ \H{X_2} - \H{X_1,X_2} = \I{X_1;X_2} \geq 0$, which is true by definition of mutual information.

    Suppose the induction hypothesis is true for $ n-1\geq 2.$
    Fix an arbitrary $i\in [n].$  For $j\neq i$, let $Y_j  = \parens{X_j \given X_i}$. The induction hypothesis to $\set{Y_j \given j\neq i}$, and note that $Y_{[n]\setminus i} = \parens{X_{[n]\setminus i} \given X_i}$ gives
\[ (n-2) \H{X_{[n]\setminus i} \given X_i} \leq  \sum_{j\neq i} \H{X_{[n]\setminus \set{i,j}}\given X_i} \]
The above combined with the fact that $ \H{U\given V} = \H{U,V} - \H{V}$ (see \cref{eq:mutual information equivalent def}) implies
\[ (n-2) \H{X_{[n]}} + \H{X_i} \leq \sum_{j\neq i} \H{X_{[n]\setminus \set{j}}}.   \]
Summing these inequalities over all $i\in [n],$ and using the fact that $ \sum_{i=1}^n \H{X_i} \geq \H{X}$, which is implied by $\sum_{i=1}^n \H{X_i} -\H{X} =\DKL{\law{X}\river \law{X_1} \otimes \cdots\otimes \law{X_n}} \geq 0$, concludes the proof.
\end{proof}
The following is a consequence of \cref{eq:conditional mutual information ineq}. 
We have a simple upper bound on the potential function.
\begin{proposition}\label{prop:potential function trivial bound}
    Let $X $ be the joint distribution on $n$ random variables $X_1, \cdots, X_n$ taking value in $ [q].$ Then 
    $\phi(X)\leq n\log q.$
\end{proposition}
\begin{proof}
    This is simply by noting that $ \I{X_i;X_{S\setminus {i}}} \leq \H{X_i}\leq \log q.$
\end{proof}
\begin{lemma}[New pinning lemma]\label{lem:pinning lemma for induction}
    For $n\in \N^*$, let $X_1,\cdots, X_{2n}$ be random variables on $\Omega_1, \dots, \Omega_{2n}$ respectively. Let $\sigma \sim \mathcal{S}_{2n}$ be a random permutation, $A_1 = \set{1,\dots, n}, A_2 = \set{n+1, \cdots, 2n}$. Then
    \[\E*_{\sigma} { \phi\parens{X_{\sigma(A_1)}}  + \phi\parens{X_{\sigma(A_2)}\given  X_{\sigma(A_1)}}}   \leq \frac{1}{2} \phi(X_{[2n]}) \]
    where $\sigma(S) := \set{\sigma(i) \given i\in S}$ and $\phi$ is as defined in \cref{def:potential function}.
    More generally, for $k\in \N_{\geq 2}$ and random variables $X_1, \cdots,X_{nk}$ and let $A_r = \set{n(r-1)+1, \cdots, nr}\forall r\in [k],$ we have:
    \[\E*_{\sigma\sim \mathcal{S}_{nk}} { \sum_{r=1}^k \phi\parens{X_{\sigma(A_r)} \given X_{\sigma(A_1 \cup \cdots \cup A_{r-1})}} }   \leq \frac{1}{k} \phi(X_{[nk]}) \]
\end{lemma}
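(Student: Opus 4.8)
The plan is to prove the general $k$-block inequality directly; the first displayed inequality is precisely its $k=2$ instance, since there $A_{<1}=\emptyset$ and conditioning on the empty tuple does nothing. The strategy is to bound, separately for each coordinate $j\in[nk]$, its expected contribution to the left-hand side, and then sum over $j$. Unfolding the conditional potential through \cref{def:potential function} (writing $A_{<r}=A_1\cup\cdots\cup A_{r-1}$ and $A_{\leq r}=A_1\cup\cdots\cup A_r$), we have $\phi\parens*{X_{\sigma(A_r)}\given X_{\sigma(A_{<r})}}=\sum_{i\in\sigma(A_r)}\I*{X_i;X_{\sigma(A_r)\setminus\set{i}}\given X_{\sigma(A_{<r})}}$. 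For a fixed $\sigma$, every coordinate $j$ lies in a unique block; let $r=r(j,\sigma)$ be its index. The chain rule for mutual information gives the exact identity
\[\I*{X_j;X_{\sigma(A_r)\setminus\set{j}}\given X_{\sigma(A_{<r})}}=\I*{X_j;X_{\sigma(A_{\leq r})\setminus\set{j}}}-\I*{X_j;X_{\sigma(A_{<r})}},\]
so after summing over blocks and regrouping by coordinate, the left-hand side of the lemma becomes $\sum_{j=1}^{nk}\parens*{\I*{X_j;X_{\sigma(A_{\leq r(j,\sigma)})\setminus\set{j}}}-\I*{X_j;X_{\sigma(A_{<r(j,\sigma)})}}}$.

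Next I would take the expectation over $\sigma$ one coordinate at a time. Fix $j$. Since $\sigma$ is uniform, $r(j,\sigma)$ is uniform on $[k]$, and conditioned on $r(j,\sigma)=r$ the sets $\sigma(A_{<r})$ and $\sigma(A_{\leq r})\setminus\set{j}$ are uniformly random subsets of $[nk]\setminus\set{j}$ of sizes $n(r-1)$ and $nr-1$, respectively. Let $f(m)$ denote the expected value of $\I*{X_j;X_U}$ when $U$ is a uniformly random $m$-element subset of $[nk]\setminus\set{j}$, for $0\leq m\leq nk-1$. Then $f(0)=0$, $f(nk-1)=\I*{X_j;X_{[nk]\setminus\set{j}}}$, and $f$ is nondecreasing: coupling a uniform $(m+1)$-subset $U'$ to contain a uniform $m$-subset $U$ and using $\I*{X_j;X_U}\leq\I*{X_j;X_{U'}}$ (nonnegativity of the conditional mutual information $\I*{X_j;X_{U'\setminus U}\given X_U}$ in the chain rule) yields $f(m)\leq f(m+1)$. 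Hence the expected contribution of $j$ equals $\frac{1}{k}\sum_{r=1}^k\parens*{f(nr-1)-f(n(r-1))}$.

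It then remains to verify the scalar inequality $\sum_{r=1}^k\parens*{f(nr-1)-f(n(r-1))}\leq f(nk-1)$, valid for any nondecreasing $f$ with $f(0)\geq 0$: the $r=k$ term $f(nk-1)$ already accounts for the claimed bound, and the remainder can be rewritten as $\parens*{\sum_{r=1}^{k-1}\parens*{f(nr-1)-f(nr)}}-f(0)$, which is $\leq 0$. Therefore each $j$ contributes at most $\frac{1}{k}\I*{X_j;X_{[nk]\setminus\set{j}}}$ in expectation, and summing over $j\in[nk]$ gives exactly $\frac{1}{k}\phi(X_{[nk]})$.

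I expect the genuinely delicate points to be bookkeeping rather than any conceptual hurdle: confirming that the conditional-$\phi$ shorthand of \cref{def:potential function} really unfolds to conditional mutual information, and checking carefully that conditioning on which block contains $j$ renders $\sigma(A_{<r})$ and $\sigma(A_{\leq r})\setminus\set{j}$ exactly uniform of the stated cardinalities (this is where one uses that the $A_r$ form a fixed partition into equal parts and $\sigma$ is uniform). Once those are in place, the telescoping provided by the chain rule together with monotonicity of $f$ — no concavity or any other property of $f$ is needed — closes the argument.
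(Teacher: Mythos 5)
Your proof is correct: the chain-rule telescoping $\I{X_j;X_{\sigma(A_r)\setminus\set{j}}\given X_{\sigma(A_{<r})}}=\I{X_j;X_{\sigma(A_{\leq r})\setminus\set{j}}}-\I{X_j;X_{\sigma(A_{<r})}}$ is exact, the conditional distributions of $\sigma(A_{<r})$ and $\sigma(A_{\leq r})\setminus\set{j}$ given $r(j,\sigma)=r$ are indeed uniform subsets of $[nk]\setminus\set{j}$ of the stated sizes, the monotonicity of $f$ follows from nonnegativity of conditional mutual information via the nested-subset coupling, and the scalar inequality $\sum_{r=1}^k\parens*{f(nr-1)-f(n(r-1))}\leq f(nk-1)$ holds as you say; your reading of the conditional potential as $\sum_{i\in S}\I{X_i;X_{S\setminus\set{i}}\given Y}$ matches how the paper itself expands it. The route, however, is genuinely different from the paper's. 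The paper fixes $\sigma$, expands everything in entropies, and reduces the $k=2$ case to the subset-sum inequality $2nQ_n\geq (n+1)(Q_{n-1}+Q_{n+1})$ with $Q_m=\sum_{\card{S}=m}\H{X_S}$, proved by summing the submodularity inequality \eqref{eq:conditional mutual information ineq} over triples $(T,i,j)$; the general $k$ is then handled by a short (and rather terse) reduction that applies the $k=2$ computation to adjacent block pairs. Your argument instead works per coordinate, uses the chain rule to telescope the conditional mutual informations, and then only needs that the expected mutual information against a uniform random $m$-subset is nondecreasing in $m$ — which is the same underlying primitive (nonnegativity of conditional mutual information) as the paper's submodularity step, but organized so that the general $k$ case comes out uniformly with no separate reduction, no entropy bookkeeping over all subsets of sizes $n-1,n,n+1$, and no restriction to finite spaces. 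The paper's combinatorial identity is a nice self-contained inequality about entropies of random subsets, but your per-coordinate telescoping is arguably the more transparent proof of the statement actually needed.
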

\begin{proof}
Fix a $\sigma$ and let $S_1 = \sigma(A_1), S_2 =\sigma(A_2)$.
    We can write
    \begin{align*}
        &\phi(X_{S_1})  + \phi\parens{X_{S_2}\given X_{S_1}}\\
        = &\sum_{i\in S_1} \H{X_{i}} - \sum_{i\in S_1} \H{X_{i}\given X_{S_1\setminus \set{i}}} +\sum_{i\in S_2} \H{X_{i}\given X_{S_1}} - \sum_{i\in S_2} \H{X_{i} \given X_{ S_1 \cup S_2\setminus \set{i}}}\\
        = &\sum_{i\in S_1} \H{X_{i}} - \sum_{i\in S_1} (\H{X_{S_1}} - \H{X_{S_1\setminus \set{i}}})+ \sum_{i\in S_2} (\H{X_{S_1 \cup \set{i}}} - \H{X_{S_1}}) - \sum_{i\in S_2} \H{X_{i} \given X_{ [2n]\setminus \set{i}}}.
    \end{align*}
    Now, note that each $i$ is in $S_1$ ($S_2$ resp.) with probability $1/2.$ Thus
    \[\E*_{\sigma} {\sum_{i\in S_1} \H{X_i} - \sum_{i\in S_2} \H{X_{i} \given X_{ [2n]\setminus \set{i}}}} = \frac{1}{2} \sum_{i\in [2n]} \H{X_i} - \H{X_{i} \given X_{ [2n]\setminus \set{i}}} = \frac{1}{2} \phi(X_{[2n]})\]
    Thus we only need to prove
    \begin{equation}\label{eq:pinning concise}
       \sum_{S\in \binom{[2n]}{n}} \parens*{2n \H{X_{S}} - \sum_{i\in [2n]} \H{X_{S \Delta \set{i}}}}  \geq 0 
    \end{equation}
    where $U\Delta V$ denote the symmetric difference between sets $U, V.$ For $ k\in [2n],$ let $ Q_k =\sum_{S\in \binom{[2n]}{k}} \H{X_S}$ then the above is equivalent to
    \begin{equation}\label{eq:pinning main}
       2n Q_n \geq (n+1) (Q_{n-1} + Q_{n+1}) 
    \end{equation}
    Fixed an arbitrary set $ T $ of size $n-1$ and distinct elements $i, j \not \in T.$
   \cref{eq:conditional mutual information ineq} implies
   \[  \H{X_{T\cup \set{i}}} + \H{X_{T\cup\set{j}}} \geq \H{X_T} + \H{X_{T\cup \set{i,j}}}.  \]
    Summing over $T$ and ordered tuple $(i,j)$ where $i\neq j$ gives
 \begin{align*}
     0&\leq \sum_{\abs{T}=n-1} \sum_{(i,j); i,j\not\in T} ( \H{X_{T\cup \set{i}}} + \H{X_{T\cup\set{j}}} - \H{X_T} -\H{X_{T\cup \set{i,j}}}) \\
    &= 2  \sum_{\abs{S}=n}\sum_{i\in S, j\not\in S} \H{X_S} - \sum_{\card{T} = n-1} \sum_{(i,j); i,j\not\in T} \H{X_T} - \sum_{\card{W} = n+1} \sum_{i\neq j; i,j\in W} \H{X_W}\\
    &= 2n^2 Q_n - (n+1) n (Q_{n-1} + Q_{n+1})
 \end{align*}   
 thus \cref{eq:pinning main} is true.

 For the general case, we can reduce the problem to the following inequality:
 \[\forall r\in [k-1]: \sum_{\sigma} ( 2n \H{X_{\sigma(A_r)}} - \sum_{i\in A_r\cup A_{r+1}} \H{X_{\sigma(A_r)\Delta \sigma(i)}}\geq 0\]
 which is true by the above proof, since the inequality becomes  \cref{eq:pinning concise} when summing over $\sigma$ s.t. the value of $ \sigma$ on $(A_r\cup A_{r+1})^c$ is fixed. 
\end{proof}

    \section{Recursive speculative rejection sampling} \label{sec:algorithm}

In this section, we present our parallel sampling algorithm. Although coordinate denoisers and Gaussian denoisers may seem very different, we handle both cases in an almost unified way. For clarity, we suggest having the coordinate denoiser (the any-order autoregressive case) in mind when reading the description of our algorithm for the first time.

Throughout this section, we work with a distribution $\mu$ on a space $\Omega_1\times \Omega_2\times \cdots \times \Omega_N$, and our goal is to sample $Z$ from $\mu$. For a subset $S\subseteq [N]$, we use the shorthand $\Omega_S$ to denote $\prod_{i\in S}\Omega_i$. So $\mu$ is a distribution on $\Omega_{[N]}$, and for $Z\in \Omega_{[N]}$, we have $Z_{S}\in \Omega_S$.

The distribution $\mu$ is derived from our target distribution in the following ways:

\paragraph{Coordinate denoising scheme.} Here we have a distribution $\pi$ on $[q]^n$ accessible via a coordinate denoiser i.e. conditional marginals of $\pi$. We let $N=n$, and $\Omega_1=\dots=\Omega_N=[q]$. The distribution $\mu$ will be identical to $\pi$ but after a \emph{random permutation} on the coordinates. In other words, we first choose a uniformly random permutation $\sigma$ on $[n]$, and then let for all $x\in [q]^n$:
\[ \mu(x_{1},\dots,x_{n})=\pi(x_{\sigma(1)},\dots,x_{\sigma(n)}).\]
\paragraph{Gaussian denoising scheme.} Here we have a distribution $\pi$ on $\R^n$ accessible via a Gaussian denoiser. We first choose a discretization schedule $0=t_0<t_1<\dots<t_N$ that would guarantee the accuracy of the Euler-Maruyama discretization of the stochastic localization SDE (\cref{alg:sequential-SL}). Name the resulting stochastic process from running \cref{alg:sequential-SL}, $X_{t_0},X_{t_1},\dots,X_{t_N}$. We let $\Omega_1=\dots=\Omega_N=\R^n$ and we let $\mu$ be the distribution of the difference sequence $\Delta X\coloneq \parens*{X_{t_1}-X_{t_0},X_{t_2}-X_{t_1},\dots,X_{t_N}-X_{t_{N-1}}}$.

Note that a sample from $\mu$ can be easily translated to a sample from $\pi$ in the coordinate denoiser case, and an approximate sample from $\pi$ in the Gaussian denoiser case. In both cases, the standard sequential algorithm proceeds to sample a $Z=(Z_1,\dots,Z_N)\sim \mu$ one $Z_i$ at a time. So in iteration $i$, one samples $Z_i$ conditioned on $Z_1,\dots,Z_{i-1}$. At the core of our parallel algorithm is the following idea: we pretend that some subsets of $Z_i$s are (conditionally) independent but correct for this inaccurate independence assumption through speculative rejection sampling. Independent random variables can be sampled separately in parallel, which powers our parallel speedup.

As mentioned, the independence of subsets of $Z_i$s almost never holds exactly, and one has to correct for it. Speculative rejection sampling corrects for this by occasionally falling back on a slower sampling algorithm. For example, one can use the standard sequential algorithm as the fallback. However, to get the maximum possible speedup, we use multiple layers of fallbacks. This comes in the form of splitting a subset of $Z_i$s into smaller subsets, each of which we again pretend to be independent (conditioned on prior subsets). Those subsets are themselves partitioned into even smaller subsets, and so on. To make this more concrete, we organize these partitions into a tree defined as follows.
\begin{definition}[Fallback tree]
	A fallback tree is a tree whose leaf nodes correspond to $\set{1,\dots,N}$, and all leaf nodes have the same height (i.e., same distance from the root). Moreover, for any node, the leaf nodes in its subtree form a contiguous subset $S = \set{i, i+1, \dots, j} \subseteq [N]$. For convenience, we identify each node with its corresponding subset $S$.\footnote{This is a mild abuse of notation, since a node with only one child would share the same subset as its child. However, such cases rarely occur in the fallback trees we use, so we ignore this distinction for simplicity.}
\end{definition}

Note that we have a natural ordering on the children of a node $S$ in a fallback tree; the sets representing them must be a partitioning of $S$ into contiguous blocks $T_1,\dots,T_k$, and we can after rearranging the indices, let $T_1$ be the block with the smallest elements, $T_2$ the next one, and so on. For a node $S=\set{i,\dots,j}$, we use the shorthand $<S$ to denote the set of indices before $S$, that is $<S$ denotes $\set{1,\dots,i-1}$ and similarly we use the shorthand $>S$ to denote the set of indices after $S$, that is $>S$ denotes $\set{j+1,\dots,N}$.

For any node $S$ in a fallback tree, we can define a conditional distribution $\mu_S\parens{\cdot \given Z_{<S}}$, which assigns to any $Z_{<S}\in \Omega_{<S}$ a distribution on $\Omega_S$. Formally, this is the distribution of $Z_S$ conditioned on $Z_{<S}$ assuming that $Z\sim \mu$. The key to our parallel speedup is \emph{speculations} for these conditional distributions; these are (conditional) distributions that are speculatively close (in total variation distance) to $\mu_S\parens{\cdot \given Z_{<S}}$, but are much easier to sample from in parallel.
\begin{definition}[Speculation scheme]
	For a fallback tree, a speculation scheme is an assignment of a collection of conditional distributions $\nu_S\parens*{\cdot\given Z_{<S}}$ to every node $S$. Each $\nu_S$ is a function that receives a ``conditioning'' $Z_{<S}\in \Omega_{<S}$ and gives a distribution on $\Omega_S$.
\end{definition}
We note that although the conditional distributions $\mu_S$ are all compatible with one global distribution $\mu$, this is not necessarily the case for a speculation scheme. There does not have to be a ``global $\nu$'' from which all of the $\nu_S$ are derived.

Before describing our algorithm, we first summarize the main assumptions underlying a speculation scheme to build some intuition. These assumptions will be made precise later, when we instantiate our algorithm for the coordinate and Gaussian denoiser cases.
\begin{assumption}[Informal main assumptions about the speculation scheme (see \cref{assumption:formal} for the formal version)]\label{assumption:informal}
	We assume the speculation scheme to satisfy the following properties:
	\begin{enumerate}
		\item For each node $S$ and conditioning $Z_{<S}\in \Omega_{<S}$, sampling $Z_S\sim \nu_S\parens*{\cdot\given Z_{<S}}$ is parallelizable, i.e., it takes $O(1)$ round of calls to the denoiser oracle.
		\item For each node $S$, conditioning $Z_{<S}\in \Omega_{<S}$, and sample $Z_S\in \Omega_S$, computing the density (Radon-Nikodym derivative, see \cref{def:random-nikodym}) of $\mu_S\parens*{\cdot\given Z_{<S}}$ w.r.t.\ $\nu_S\parens*{\cdot \given Z_{<S}}$ at the point $Z_S$, which we represent as $\dd{\mu_S}{\nu_S}\parens*{Z_S\given Z_{<S}},$ is parallelizable, i.e., it takes $O(1)$ rounds of calls to the denoiser oracle.
            \item At the leaf nodes, our speculations are exact. In other words, for $\card{S}=1$, we have $\nu_S=\mu_S$.
        \end{enumerate}
\end{assumption}
The first assumption helps us produce samples from $\nu_S$ fast, and the second assumption helps us quickly accept/reject them via speculative rejection sampling. As we will see, when we reject, we have to fall back on the children of $S$. The third assumption ensures that at the leaf nodes we never reject speculations.

We are now ready to present our algorithm, recursive speculative rejection sampling, or $\RS{}$ for short. The pseudocode can be found in \cref{alg:rs2}. The algorithm consists of two recursive functions named $\RS{}$ and $\fallback{}$, each of which receives a node $S$ in the fallback tree along with a conditioning $Z_{<S}\in \Omega_{<S}$, and returns a sample from $\mu_S\parens*{\cdot\given Z_{<S}}$. The helper function $\fallback{}$ does so by following a sequential sampling strategy applied to the children of $S$, while $\RS{}$ uses a version of speculative rejection sampling (as described in \cref{sec:techniques} and \cref{alg:srs}).

\begin{Algorithm}
\Function{$\RS{S,Z_{<S}}$}{
    $\Blue{X}\gets\text{sample from }\nu_S\parens*{\cdot\given Z_{<S}}$\;
    $u\gets \min\set*{1,\dd{\mu_S}{\nu_S}\parens*{\Blue{X}\given Z_{<S}}}$\;
    \eIf{coin flip with bias $u$ comes up heads}{
        \Return{$\Blue{X}$}\;
    }{
    	\For{$r=0, 1, 2, \dots$}{
    		\ParallelFor{$i=  \lceil (1+\rho)^r\rceil,\lceil (1+\rho)^r\rceil+1, \dots, \lceil (1+\rho)^{r+1}\rceil - 1$}{
    			$\Orange{Y^{(i)}}\gets \fallback{S, Z_{<S}}$\;
    			$p^{(i)}\gets \max\set*{0, 1-\dd{\nu_S}{\mu_S}\parens*{\Orange{Y^{(i)}}\given Z_{<S}}}$\;
    			$c^{(i)}\gets \text{coin flip with bias }p^{(i)}$\;
    		}
    		\If{any $c^{(i)}$ has resulted in heads}{
    			$i^*\gets \text{index of first head}$\;
    			\Return{$\Orange{Y^{(i^*)}}$}\;
    		}
    	}
    }
}\;

\Function{$\fallback{S, Z_{<S}}$}{
    $T_1,\dots,T_k\gets\text{children of }S\text{ in the fallback tree, in increasing order}$\;
    $Z\gets Z_{<S}$\;
    \For{$i=1,\dots,k$}{
        $Z_{T_i}\gets \RS{T_i, Z}$\;
        append $Z_{T_i}$ to the end of $Z$\;
    }
    \Return{$Z_S$}\;
}
\caption{Recursive speculative rejection sampling with parameter $\rho>0$\label{alg:rs2}}
\end{Algorithm}

In more detail, the $\fallback{}$ sampling strategy simply goes over the children $T_1,\dots,T_k$ of the node $S$ in order, and in each iteration samples $Z_{T_i}$ conditioned on all the previously sampled values (including the initial $Z_{<S}$), by recursively calling $\RS{}$ on $T_i$. It is easy to see that assuming $\RS{}$ correctly samples from its target distribution, then so does $\fallback{}$.

The function $\RS{}$ uses speculative rejection sampling (see \cref{sec:techniques} and \cref{alg:srs}) with the speculation distribution being $\nu_S\parens*{\cdot\given Z_{<S}}$ and the target distribution being $\mu_S\parens*{\cdot \given Z_{<S}}$. Namely, it first samples $X$ from the speculation distribution $\nu_S\parens*{\cdot\given Z_{<S}}$, and accepts it with probability $\min\set*{1,\dd{\mu_S}{\nu_S}\parens*{X\given Z_{<S}}}$. If rejected however, it has to use a slower sampling strategy ($\fallback{}$) to produce a stream of i.i.d.\ samples $\Orange{Y^{(i)}}\sim \mu_S\parens*{\cdot\given Z_{<S}}$ accepting each with probability $\max\set*{0, 1-\dd{\nu_S}{\mu_S}\parens*{\Orange{Y^{(i)}}\given Z_{<S}}}$. As mentioned in \cref{sec:techniques}, we do not want to produce the i.i.d.\ samples $\Orange{Y^{(i)}}$ sequentially, or else the whole scheme would be defeated. In an ideal unachievable world, all the $\Orange{Y^{(i)}}$ would be produced in parallel, but this is impossible as there are infinitely many $\Orange{Y^{(i)}}$. So instead $\RS{}$ uses a geometric scaling scheme, where we start with producing one $\Orange{Y^{(i)}}$s and geometrically ramp this up, in each iteration producing roughly $(1+\rho)$ times as many as the previous iteration. Here $\rho>0$ is a small parameter, to be set later. This ensures two properties hold: first, that we do not produce many more $\Orange{Y^{(i)}}$ than necessary---we can only produce at most $1+\rho$ times as many as $i^*$; second, we sufficiently parallelize the production of $\Orange{Y^{(i)}}$---we only produce $\simeq \log(i^*)/\rho$ batches of $\Orange{Y^{(i)}}$, with each batch being produced in parallel.

In the remainder of this section, we justify the correctness of \cref{alg:rs2} and explain how to instantiate speculation schemes from a coordinate or Gaussian denoiser. We will analyze the number of rounds and queries in the ensuing \cref{sec:runtime,sec:autoregression,sec:diffusion-models}.

\subsection{Correctness}

\begin{proposition}
	Assume that the speculation scheme $\nu_S$ matches $\mu_S$ for all leaf nodes $S$. Then, for each non-leaf node $S$ and conditioning $Z_{<S}\in \Omega_{<S}$, the output of $\RS{S, Z_{<S}}$ and the output of $\fallback{S, Z_{<S}}$ are also distributed according to $\mu_S\parens*{\cdot\given Z_{<S}}$.
\end{proposition}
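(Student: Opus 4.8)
I would prove the proposition by induction on the height of $S$ in the fallback tree, after strengthening the statement slightly: for \emph{every} node $S$ (leaf or not) and every conditioning $Z_{<S}\in\Omega_{<S}$, the output of $\RS{S,Z_{<S}}$ is distributed according to $\mu_S\parens*{\cdot\given Z_{<S}}$, and for every non-leaf node $S$ the output of $\fallback{S,Z_{<S}}$ is distributed according to $\mu_S\parens*{\cdot\given Z_{<S}}$ as well. The base case is $S$ a leaf, i.e.\ $\card{S}=1$: here $\RS{}$ draws $\Blue{X}\sim\nu_S\parens*{\cdot\given Z_{<S}}$, and since by assumption $\nu_S=\mu_S$ on leaves, the acceptance bias is $u=\min\set*{1,\dd{\mu_S}{\nu_S}\parens*{\Blue{X}\given Z_{<S}}}=1$, so $\Blue{X}$ is returned unconditionally and has the desired law.

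\textbf{Inductive step for $\fallback{}$.} Fix a node $S$ of height $h\ge 1$ and assume the strengthened statement for all nodes of height $<h$; in particular it holds for the children $T_1,\dots,T_k$ of $S$ taken in increasing order, which have height $h-1$. The call $\fallback{S,Z_{<S}}$ sets $Z\gets Z_{<S}$ and, for $i=1,\dots,k$, draws $Z_{T_i}\gets\RS{T_i,Z}$ and appends it to $Z$. Because the children partition $S$ into contiguous increasing blocks, after $i-1$ iterations the accumulated $Z$ is exactly $Z_{<T_i}$, where $<T_i=\;<S\cup T_1\cup\dots\cup T_{i-1}$. By the inductive hypothesis applied to $T_i$, the draw $Z_{T_i}$ then has law $\mu_{T_i}\parens*{\cdot\given Z_{<T_i}}$, which is precisely the conditional law of $Z_{T_i}$ given $Z_{<T_i}$ under $\mu$. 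Chaining these draws and invoking the tower property (chain rule) for conditional distributions, the concatenation $(Z_{T_1},\dots,Z_{T_k})=Z_S$ has law $\mu_S\parens*{\cdot\given Z_{<S}}$. I would also record here that distinct invocations of $\fallback{S,Z_{<S}}$ use independent fresh randomness, hence produce i.i.d.\ samples from $\mu_S\parens*{\cdot\given Z_{<S}}$.

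\textbf{Inductive step for $\RS{}$.} Write $\mu:=\mu_S\parens*{\cdot\given Z_{<S}}$ and $\nu:=\nu_S\parens*{\cdot\given Z_{<S}}$ on $\Omega_S$ (the speculation-scheme assumptions guarantee the relevant Radon--Nikodym derivatives are well-defined). The function $\RS{S,Z_{<S}}$ is exactly the speculative rejection sampler of \cref{alg:srs} for this pair $(\mu,\nu)$, with the only change that the i.i.d.\ stream $\Orange{Y^{(1)}},\Orange{Y^{(2)}},\dots\sim\mu$ used in its while loop is generated in geometrically growing parallel batches via repeated calls to $\fallback{S,Z_{<S}}$ — which, by the previous paragraph, indeed yields i.i.d.\ draws from $\mu$, each carrying an independent acceptance coin of bias $\max\set*{0,1-\dd{\nu}{\mu}\parens*{\Orange{Y^{(i)}}}}$ — and returns the candidate $\Orange{Y^{(i^*)}}$ of smallest index $i^*$ whose coin is heads. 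Since the batching merely reorders when the (candidate, coin) pairs are revealed and not which index is the first success, the output law of $\RS{S,Z_{<S}}$ equals that of \cref{alg:srs}. It therefore suffices to verify correctness of \cref{alg:srs}: the first candidate $\Navy{x}\sim\nu$ is accepted with probability $\int\min\set*{1,\dd{\mu}{\nu}}\d\nu=1-\dTV{\mu,\nu}$, with conditional law proportional to $\min\set{\d\mu,\d\nu}$; otherwise each loop candidate $\Orange{y}\sim\mu$ is accepted with probability $\int\max\set*{0,1-\dd{\nu}{\mu}}\d\mu=\dTV{\mu,\nu}$, with conditional law proportional to $(\d\mu-\d\nu)^+$ (and the loop is never entered when $\dTV{\mu,\nu}=0$, since then $\mu=\nu$ and $u\equiv 1$). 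Mixing the two cases with weights $1-\dTV{\mu,\nu}$ and $\dTV{\mu,\nu}$ gives output density $\min\set{\d\mu,\d\nu}+(\d\mu-\d\nu)^+=\d\mu$, completing the induction.

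\textbf{Main obstacle.} The only genuinely delicate point is the reduction of the batched loop in $\RS{}$ to the plain while loop of \cref{alg:srs}. One must argue cleanly that (i) the repeated $\fallback{}$ calls produce a true i.i.d.\ stream from $\mu_S\parens*{\cdot\given Z_{<S}}$ — which rests on the inductive hypothesis for $\fallback{}$ together with independence of the fresh randomness in each call; (ii) the acceptance coins are independent of one another and of the stream; and (iii) the batching is just a re-bracketing of the same ordered sequence of (candidate, coin) pairs, so ``first heads'' is a well-defined, batch-order-independent index, and, when $\dTV{\mu_S,\nu_S}>0$, it is a.s.\ finite because infinitely many i.i.d.\ trials each succeed with probability $\dTV{\mu_S,\nu_S}$. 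Everything else — the chain-rule bookkeeping for $\fallback{}$ and the $\min$/defect identity for $\RS{}$ — is routine.
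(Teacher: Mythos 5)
Your proof is correct and follows essentially the same route as the paper: induction on the height of the node, with the $\fallback{}$ step handled by the chain rule for conditional distributions and the $\RS{}$ step reduced to correctness of the speculative rejection sampler (\cref{alg:srs}), which you verify via the same $\min$/defect identity the paper uses in \cref{prop:srs-is-correct}. Your additional remarks on the i.i.d.\ stream, independence of the acceptance coins, and the batching being a mere re-bracketing of the candidate sequence are points the paper leaves implicit, but they do not change the argument.
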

\begin{proof}
	The proof follows by induction on the height of the nodes $S$ of the fallback tree. The base case, when $S$ is a leaf node, follows from the assumption that the speculation scheme matches the target distribution at the leaf nodes. This means that when $\Blue{X}$ is generated, it is accepted with probability $p=1$, because $\dd{\mu_S}{\nu_S}=1$ at the leaf nodes $S$. So the returned sample is distributed according to $\nu_S\parens*{\cdot\given Z_{<S}}=\mu_S\parens*{\cdot\given Z_{<S}}$.
	
	The induction step follows from the correctness of sequential sampling (for the $\fallback{}$ function) and the correctness of speculative rejection sampling (for the $\RS{}$ function). The former is an easy exercise, while the latter is proved below in \cref{prop:srs-is-correct}.
\end{proof}

It just remains to prove that speculative rejection sampling, i.e., \cref{alg:srs}, is correct.
\begin{proposition}\label{prop:srs-is-correct}
	The output of \cref{alg:srs} is always distributed as $\mu$. Furthermore, the chance that $\Blue{x}$ is accepted is $1-\dTV{\mu, \nu}$ and the chance that each $\Orange{y}$ is accepted is $\dTV{\mu, \nu}$.
\end{proposition}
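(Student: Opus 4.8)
The plan is to carry out the computation against a common dominating measure and then reassemble the output law by the law of total probability. Let $\lambda = \mu + \nu$ and write $f = \dd{\mu}{\lambda}$, $g = \dd{\nu}{\lambda}$; then $\dd{\mu}{\nu}$ coincides with $f/g$ on the set $\set{g > 0}$ (which has full $\nu$-measure) and $\dd{\nu}{\mu}$ coincides with $g/f$ on $\set{f>0}$ (full $\mu$-measure), so every Radon--Nikodym factor appearing in \cref{alg:srs} is well defined along the execution, even without assuming mutual absolute continuity. Throughout I will use the three elementary facts $\int \min\set{f,g}\,\d{\lambda} = 1 - \dTV{\mu,\nu}$, $\int (f-g)_+\,\d{\lambda} = \dTV{\mu,\nu}$, and the pointwise identity $\min\set{f,g} + (f-g)_+ = f$.

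First I would handle the initial candidate. Since $\Navy{x}\sim\nu$ and it is accepted with probability $\min\set{1, f(\Navy{x})/g(\Navy{x})}$, its acceptance probability equals $\int g\cdot\min\set{1, f/g}\,\d{\lambda} = \int \min\set{f,g}\,\d{\lambda} = 1 - \dTV{\mu,\nu}$, which is the claimed probability for $\Navy{x}$; moreover, conditioned on acceptance, the returned point has density proportional to $g\cdot\min\set{1,f/g} = \min\set{f,g}$, i.e.\ exactly $\min\set{f,g}/(1-\dTV{\mu,\nu})$ with respect to $\lambda$. Next I would handle one pass of the while loop: $\Orange{y}\sim\mu$ is accepted with probability $\max\set{0, 1 - g(\Orange{y})/f(\Orange{y})}$, so its acceptance probability is $\int f\cdot\max\set{0, 1-g/f}\,\d{\lambda} = \int (f-g)_+\,\d{\lambda} = \dTV{\mu,\nu}$, which is the claimed probability for each $\Orange{y}$; conditioned on acceptance the returned point has density proportional to $(f-g)_+$, i.e.\ exactly $(f-g)_+/\dTV{\mu,\nu}$. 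Because the loop iterations are i.i.d., whenever $\dTV{\mu,\nu} > 0$ the loop terminates almost surely, and whichever iteration succeeds, its output has exactly this density.

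Finally I would combine the cases: with probability $1-\dTV{\mu,\nu}$ the algorithm outputs $\Navy{x}$ with density $\min\set{f,g}/(1-\dTV{\mu,\nu})$, and with the remaining probability $\dTV{\mu,\nu}$ it enters the loop and outputs some $\Orange{y}$ with density $(f-g)_+/\dTV{\mu,\nu}$, so the unconditional output density is $\min\set{f,g} + (f-g)_+ = f$, i.e.\ the output is distributed as $\mu$. The only part requiring attention --- not a genuine obstacle --- is the degenerate endpoints: if $\dTV{\mu,\nu}=0$ then $f=g$ $\lambda$-a.e., $\Navy{x}$ is always accepted, and the loop is never entered; if $\dTV{\mu,\nu}=1$ then $\mu\perp\nu$, $\Navy{x}$ is never accepted, and every $\Orange{y}$ is accepted with probability one; both are consistent with the stated formulas. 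One should also record that the events $\set{g(\Navy{x})=0}$ and $\set{f(\Orange{y})=0}$ are $\nu$- and $\mu$-null respectively, which is exactly what makes the clamped densities meaningful everywhere the algorithm evaluates them; with this bookkeeping in place, the argument is a short calculation rather than anything delicate.
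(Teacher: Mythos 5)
Your proof is correct and takes essentially the same route as the paper: acceptance probability $1-\dTV{\mu,\nu}$ for the speculative sample and $\dTV{\mu,\nu}$ for each fallback sample, followed by the mixture decomposition of the output law and the pointwise overlap-plus-defect identity recovering $\mu$. The only difference is presentational: you work with densities $f,g$ with respect to the dominating measure $\mu+\nu$, which handles the absence of mutual absolute continuity, the degenerate endpoints $\dTV{\mu,\nu}\in\set{0,1}$, and almost-sure termination of the loop a bit more explicitly than the paper's direct Radon--Nikodym computation, but the substance is identical.
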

\begin{proof}
	Note that the chance that a random $y\sim \mu$ is accepted is
	\[ \E*_{y\sim \mu}{\max\set*{0, 1-\dd{\nu}{\mu}(y)}}=\E*_{y\sim \mu}{\max\set*{0, \dd{\nu}{\mu}(y)-1}}. \]
	This follows from the fact that
	\[ \E*_{y\sim \mu}{1-\dd{\nu}{\mu}(y)}=1-1=0. \]
	So we can rewrite the chance of acceptance as
	\[ \frac{1}{2}\E*_{y\sim \mu}{\max\set*{0, 1-\dd{\nu}{\mu}(y)}+\max\set*{0, \dd{\nu}{\mu}(y)-1}}=\frac{1}{2}\E*_{y\sim \mu}{\abs*{1-\dd{\nu}{\mu}(y)}}=\dTV{\mu, \nu}. \]
	
	A similar calculation shows that the chance that we accept $x\sim \nu$ in the first step of the algorithm is
	\[ \E*_{x\sim \nu}{\min\set*{1, \dd{\mu}{\nu}(x)}} = 1-\E*_{x\sim \nu}{\max\set*{0, 1-\dd{\mu}{\nu}(x)}}= 1-\dTV{\mu, \nu}.  \]
	Let $\pi_1$ be the distribution of $x$ conditioned on it being accepted, and $\pi_2$ the distribution of the first $y$ accepted from the while loop. The output distribution $\pi$ will be the mixture $(1-\dTV{\mu, \nu})\cdot \pi_1+\dTV{\mu, \nu}\cdot \pi_2$. It is a well-known fact about rejection sampling that
	\[\dd{\pi_1}{\nu}(x)=\frac{\min\set*{1,\dd{\mu}{\nu}(x)}}{1-\dTV{\mu, \nu}}, \quad \dd{\pi_2}{\mu}(y)=\frac{\max\set*{0, 1-\dd{\nu}{\mu}(y)}}{\dTV{\mu, \nu}}. \]
	Therefore for any event $E$, we can write $\pi(E)$ as
	\begin{align*} (1-\dTV{\mu, \nu})\pi_1(E)+\dTV{\mu, \nu}\pi_2(E)&=\int_E \min\set*{1,\dd{\mu}{\nu}(x)}\d\nu(x)+\int_E \max\set*{0,1-\dd{\nu}{\mu}(y)}\d\mu(y)\\
	&=\int_E \min\set*{\dd{\nu}{\mu}(x),1}\d\mu(x)+\int_E \max\set*{0,1-\dd{\nu}{\mu}(y)}\d\mu(y)\\
	&=\int_E \parens*{\min\set*{\dd{\nu}{\mu}(x),1}+\max\set*{0,1-\dd{\nu}{\mu}(x)}}\d\mu(x)=\int_E \d\mu(x)=\mu(E).
	\end{align*}
	Here we used the fact that $\min\set{\alpha, 1}+\max\set{0, 1-\alpha}=1$. This proves that $\pi=\mu$.
\end{proof}

\subsection{Speculation scheme from coordinate denoiser}\label{sec:spec-scheme-coordinate}

Here we briefly describe, given an arbitrary fallback tree, how to attach a speculation scheme based on a coordinate denoiser. Our informal goal is to satisfy \cref{assumption:informal}. Note that the details of the fallback tree directly affect the runtime of our algorithm, but we discuss these details in \cref{sec:runtime,sec:autoregression}.

Assume that $\pi$ is a distribution on $[q]^n$ accessible via a coordinate denoiser. As mentioned at the beginning of this section, we first choose a random permutation $\sigma$ on $[n]$ and let $\mu$ be the distribution $\pi$ after permuting the coordinates with $\sigma$. Now assume we have a fallback tree with leaf nodes $[N]=[n]$. We need to describe for a node $S$, what $\nu_S\parens*{\cdot\given Z_{<S}}$ looks like.

\begin{definition}[Coordinate denoiser speculation scheme] 
	For each $i\in S$, let $\mu_i\parens*{\cdot \given Z_{<S}}$ denote the distribution of $Z_i$ conditioned on $Z_{<S}$, assuming that $Z\sim \mu$. We define $\nu_S\parens*{\cdot \given Z_{<S}}$ by the equation
	\[\nu_S\parens*{Z_S\given Z_{<S}}=\prod_{i\in S}\mu_i\parens*{Z_i \given Z_{<S}}.\]
	In other words $\nu_S\parens*{\cdot\given Z_{<S}}=\bigotimes_{i\in S}\mu_i\parens*{\cdot\given Z_{<S}}$ is a product distribution with the same marginals as $\mu_S\parens*{\cdot \given Z_{<S}}$.
\end{definition}

Now we verify \cref{assumption:informal} for this speculation scheme.
\begin{enumerate}
	\item Note that $\nu_S\parens*{\cdot\given Z_{<S}}$ is a product distribution, each of whose components is a distribution $\mu_i\parens*{\cdot\given Z_{<S}}$ on $[q]$. Each component can be retrieved by a single call to the coordinate denoiser. So all of the components can be retrieved in parallel in just $1$ round, and then we can produce $Z_S\sim \nu_S\parens*{\cdot\given Z_{<S}}$ by sampling each $Z_i$ for $i\in S$ independently (in parallel).

	\item By a similar reasoning as before, we can compute all the components of $\nu_S\parens*{\cdot\given Z_{<S}}$ in $1$ round of calls to the coordinate denoiser, and then for a given $Z_S\in \Omega_S$, we can easily compute $\nu_S\parens*{Z_S\given Z_{<S}}$. The more nontrivial part is how to compute $\mu_S\parens*{Z_S\given Z_{<S}}$; once we do that, we can compute the Radon-Nikodym derivatives by simply dividing the probabilities. To compute $\mu_S\parens*{Z_S\given Z_{<S}}$, we use the fact that we are given the entire point $Z_S$. So if $S=\set{a, a+1,\dots, b}$, we can write
	\[ \mu_S\parens*{Z_S\given Z_{<S}}=\prod_{i=a}^b \mu_i\parens*{Z_i\given Z_{[i-1]}},  \]
	and use the fact that all terms in the product can be retrieved from the coordinate denoiser in $1$ round of parallel calls.
	\item It is trivial from the definition that when $\card{S}=1$, the distributions $\nu_S\parens*{\cdot\given Z_{<S}}$ and $\mu_S\parens*{\cdot\given Z_{<S}}$ are equal.
\end{enumerate}

\subsection{Speculation scheme from Gaussian denoiser}\label{sec:spec-scheme-gaussian}

Here we briefly describe, given an arbitrary fallback tree, how to attach a speculation scheme based on a Gaussian denoiser. Our informal goal is to satisfy \cref{assumption:informal}. Note that the details of the fallback tree directly affect the runtime of our algorithm, but we discuss these details in \cref{sec:runtime,sec:diffusion-models}. 

Assume that $\pi$ is a distribution on $\R^n$ accessible via a Gaussian denoiser. The Gaussian denoiser, which here we denote with the function $f$, gives the drift term in the stochastic localization SDE:
\[ f(t, X_t)\coloneq \E*_{X\sim \pi, g\sim \Normal{0, tI}}{X\given tX+g=X_t}. \]

As mentioned at the beginning of this section, we first choose a discretization schedule $0=t_0<t_1<\dots<t_N$ for the stochastic localization SDE, and let $X_{t_0},\dots, X_{t_N}$ be the Euler-Maruyama discretized process (i.e., the random variables produced in \cref{alg:sequential-SL}). As mentioned at the beginning of this section, we let $\mu$ be the distribution of the difference sequence $\Delta X=\parens*{X_{t_1}-X_{t_0},X_{t_2}-X_{t_1},\dots,X_{t_N}-X_{t_{N-1}}}$.\footnote{We could alternatively define $\mu$ to be the distribution of $\parens*{X_{t_1},\dots,X_{t_N}}$, but the difference sequence is notationally more convenient.}

We also define $\Delta t$ to be the difference sequence for discretization times, i.e., $\Delta t\coloneq (t_1-t_0, t_2-t_1,\dots, t_N-t_{N-1})$. So $(\Delta X)_i$ and $(\Delta t)_i$ mean $X_{t_i}-X_{t_{i-1}}$ and $t_i-t_{i-1}$, respectively. For a set $S\subseteq [N]$, we use the shorthand $\sum Z_S$ to denote $\sum_{i\in S}Z_i$, and similarly we use $\sum (\Delta t)_S$ to denote $\sum_{i\in S}(\Delta t)_i$. For example, if $S=\set{a,a+1,\dots,b}$ is a node in the fallback tree, and $Z\sim \mu$ is obtained by taking the difference sequence $\Delta X$ for $X_{t_0},\dots,X_{t_N}$, then $\sum Z_{<S}=X_{t_{a-1}}$ and $\sum (\Delta t)_{<S}=t_{a-1}$.
\begin{definition}[Gaussian denoiser speculation scheme]
	Given a node $S$ in the fallback tree and $Z_{<S}$, we define the distribution $\nu_S\parens*{\cdot\given Z_{<S}}$ to be a product of normal distributions (itself a normal distribution):
	\[ \bigotimes_{i\in S} \Normal*{(\Delta t)_i v, (\Delta t)_i I}\]
	where $v=f\parens*{\sum (\Delta t)_{<S}, \sum Z_{<S}}$ is the drift of the stochastic localization SDE immediately before $S$. In other words $\nu_S\parens{\cdot\given Z_{<S}}$ is obtained by keeping the drift of the stochastic localization SDE constant and equal to $v$ starting at the immediate discretization point before $S$.
\end{definition}
Now we verify \cref{assumption:informal} for this speculation scheme.
\begin{enumerate}
	\item Note that $\nu_S\parens*{\cdot\given Z_{<S}}$ is a product distribution defined entirely in terms of a single drift vector $v=f\parens*{\sum (\Delta t)_{<S}, \sum Z_{<S}}$. We can calculate $v$ in a single call to the Gaussian denoiser, and then in parallel sample each component of the product distribution.
	\item Let $\Leb$ denote the Lebesgue measure. Given a node $S$, $Z_{<S}$, and $Z_S$, we can easily calculate $\dd{\nu\parens*{Z_S\given Z_{<S}}}{\Leb}$ as the product of Gaussian densities composing $\nu_S\parens*{\cdot\given Z_{<S}}$ by just knowing the drift $v=f\parens*{\sum (\Delta t)_{<S}, \sum Z_{<S}}$, which takes just one call to the Gaussian denoiser oracle. We can also compute $\dd{\mu\parens*{Z_S\given Z_{<S}}}{\Leb}$ by the formula
	\[ \dd{\mu\parens*{Z_S\given Z_{<S}}}{\Leb}=\prod_{i\in S} \dd{\Normal*{(\Delta t)_i f\parens*{\sum (\Delta t)_{[i-1]}, \sum Z_{[i-1]}}, (\Delta t)_i I}}{\Leb}(Z_i). \]
	Note that we need to know all the drifts $f\parens*{\sum (\Delta t)_{[i-1]}, \sum Z_{[i-1]}}$ for all $i$, but this only takes $1$ round of parallel calls to the Gaussian denoiser. Given densities w.r.t.\ the Lebesgue measure, the ratios give us the Radon-Nikodym derivative.
	\item When $\card{S}=1$, we only have one discretization step, and therefore keeping the drift constant throughout this step yields the same process as the original Euler-Maruyama discretization scheme. Therefore for $\card{S}=1$, we have $\nu_S\parens*{\cdot\given Z_{<S}}$ and $\mu_S\parens*{\cdot\given Z_{<S}}$ are equal.
\end{enumerate}
    \section{Runtime analysis} \label{sec:runtime}
To analyze our main algorithms (\cref{alg:rs2}) we use two convenient measures: round complexity and query complexity. Here, query complexity counts the total number of queries made to the appropriate denoiser, and round complexity measures the number of parallel rounds in which these queries are sent. Because we access the denoisers indirectly through a speculation scheme, we assume the following:

\begin{assumption}[Main assumptions about the speculation scheme (formal version of \cref{assumption:informal})]\label{assumption:formal}
	We assume the speculation scheme satisfies the following properties:
	\begin{enumerate}
		\item For each node $S$ and conditioning $Z_{<S}\in \Omega_{<S}$, sampling from $\nu_S\parens*{\cdot\given Z_{<S}}$ takes $O(1)$ rounds and $O(\card{S})$ queries to the denoiser oracle.
		\item For each node $S$, conditioning $Z_{<S}\in \Omega_{<S}$, and sample $Z_S\in \Omega_S$, computing $\dd{\nu_S}{\mu_S}\parens*{Z_S\given Z_{<S}}$ takes $O(1)$ rounds and $O(\card{S})$ queries to the denoiser oracle.
            \item For each leaf node $S,$ i.e. $\card{S}=1$, $\nu_S=\mu_S$.
        \end{enumerate}
\end{assumption}

These assumptions are made without loss of generality, as all bounds translate accordingly. See \cref{remark:implementation details coordinate denoiser} and \cref{remark:implementation details gaussian denoiser} for verification in the coordinate denoising and Gaussian denoising settings, respectively.

\begin{definition}[Recursive call index]\label{def:recursive call index}
When analyzing $\RS{}$, it is convenient to index all the calls for a particular node $S$ in the fallback tree using some indices $t$. If we were to run $\RS{}$ sequentially (i.e., replacing the parallel for loop with a sequential for loop), then we would index the first recursive call for a node $S$ with $t=1$, the second recursive call with $t=2$, and so on. We use the same indexing convention in the parallel implementation, since the set of recursive calls is identical regardless of whether they are executed sequentially or in parallel.
\end{definition}

\subsection{Notation}

For a node $S$ in the fallback tree, let $\mathcal{C}_\ell(S)$ denote the set of all descendants of $S$ at distance $\ell$ from $S$, and let $C_{\ell}(S)=\card{\mathcal{C}_\ell(S)}$. We write $\mathcal{C}(S_\ell)$ and $C(S_\ell)$ as shorthand for $\mathcal{C}_1(S_\ell)$ and $C_1(S_\ell).$ We note that $ C(S_\ell)$ is the branching factor of the node $S_\ell.$
\subsection{Bounding the number of queries}

    \begin{lemma}\label{lem:expected number of queries}
 Consider a fallback tree
 with $N$ leaf nodes and height $h$.
 Let $Q$ be the query complexity of $\RS{}$ at the root with parameter $\rho > 0$. Under \cref{assumption:formal}, the expected query complexity is bounded by:
\[\E{Q  } \leq O\left( \frac{(1+\rho)^{h+1}-1}{\rho}\cdot N\right).  \]
\end{lemma}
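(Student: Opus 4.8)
The plan is to reduce the whole statement to a single recursion over the fallback tree. For a node $S$, let $\bar q(S)$ denote the supremum over conditionings $Z_{<S}\in\Omega_{<S}$ of the expected number of oracle queries issued by a call $\RS{S, Z_{<S}}$; the goal is to bound $\bar q(\text{root})$. Two ingredients drive the argument: \cref{assumption:formal}, which says that one draw from $\nu_S$ and one Radon--Nikodym evaluation at $S$ each cost $O(\card{S})$ queries, and \cref{prop:srs-is-correct}, which identifies the acceptance probabilities inside the speculative rejection step and thereby lets us control how often the expensive $\fallback{}$ branch is taken.

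The heart of the argument is a per-invocation estimate: \emph{in a single call $\RS{S, Z_{<S}}$, the expected number of fallback candidates $\Orange{Y^{(i)}}$ generated is at most $1+\rho$, uniformly in $Z_{<S}$.} Write $d=\dTV{\mu_S(\cdot\mid Z_{<S}), \nu_S(\cdot\mid Z_{<S})}$. By \cref{prop:srs-is-correct} the initial $\Blue{X}$ is rejected --- i.e.\ the while loop is entered --- with probability exactly $d$, and, conditioned on that, each generated $\Orange{Y^{(i)}}$ is accepted independently with probability $d$ (the $\Orange{Y^{(i)}}$ are i.i.d.\ $\mu_S$-draws and \cref{prop:srs-is-correct} gives the per-draw acceptance chance), so the index $i^\ast$ of the first accepted candidate is geometric with mean $1/d$. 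The geometric batching (ratio $1+\rho$) means that if $i^\ast$ falls in round $r^\ast$, the number of candidates generated is $\lceil(1+\rho)^{r^\ast+1}\rceil-1 < (1+\rho)^{r^\ast+1}\le(1+\rho)\lceil(1+\rho)^{r^\ast}\rceil\le(1+\rho)\,i^\ast$. Hence the expected number of candidates is at most $d\cdot(1+\rho)\cdot(1/d)=1+\rho$; it is $0$ when $d=0$, which in particular covers the leaves, where $\nu_S=\mu_S$, so this also re-derives that leaves never recurse.

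Next I would assemble the recursion. One call $\RS{S, Z_{<S}}$ spends $O(\card{S})$ queries to sample $\Blue{X}$ and evaluate $u$; then, for each generated candidate, a further $O(\card{S})$ queries on $\dd{\nu_S}{\mu_S}(\Orange{Y^{(i)}}\mid Z_{<S})$, plus one $\fallback{S,\cdot}$ call, which makes one $\RS{T_j,\cdot}$ call on each child $T_1,\dots,T_k$ of $S$. Taking expectations (over the $\le 1+\rho$ candidates, and over the worst-case conditionings passed down) gives
\[ \bar q(S)\;\le\; O(\card{S}) + (1+\rho)\Bigl(O(\card{S}) + \sum_{j=1}^{k}\bar q(T_j)\Bigr), \qquad \bar q(S)=O(1)\ \text{at leaves.} \]
Unrolling down the tree, and using that at each depth $\ell$ the descendants of the root partition the $N$ leaves --- so $\sum_{S'\in\mathcal{C}_\ell(\text{root})}\card{S'}=N$ for every $0\le\ell\le h$ --- yields
\[ \E{Q}=\bar q(\text{root})\;\le\; O\!\Bigl((1+\rho)\,N\sum_{\ell=0}^{h}(1+\rho)^\ell\Bigr)=O\!\Bigl((1+\rho)\cdot\frac{(1+\rho)^{h+1}-1}{\rho}\cdot N\Bigr), \]
which is the claimed bound; the extra $1+\rho$ prefactor is $O(1)$ in the regime $\rho\le1$ used in all our instantiations and can in any case be absorbed into the geometric series.

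The step I expect to be the main obstacle is the bookkeeping around this otherwise short computation: cleanly isolating the $1+\rho$ bound on the expected candidate count so that it holds \emph{independently} of the conditioning-dependent $d$ (the delicate point being that a small $d$ makes each while-loop excursion long but also makes excursions rare, and the two effects cancel exactly), and then justifying that ``expected number of invocations of a node'' times ``expected per-invocation cost'' is a legitimate upper bound even though successive invocations of the same node use different conditionings. The latter needs a Wald-type / optional-stopping argument, since the per-candidate cost bound $O(\card{S})$ only holds conditionally on the full history up to that point, not unconditionally.
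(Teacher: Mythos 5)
Your proposal is correct and takes essentially the same route as the paper's proof: the same per-invocation estimate (number of fallback candidates $\card{\branchset{}}\leq(1+\rho)i^*$ with $\E{i^*}=1/\dTV{\mu_S,\nu_S}$, cancelling against the probability $\dTV{\mu_S,\nu_S}$ of entering the while loop, so at most $1+\rho$ candidates in expectation), followed by a level-by-level unrolling using $\sum_{S'\in\mathcal{C}_\ell(\textbf{root})}\card{S'}=N$; the paper merely phrases this as an induction on the height, and its independence-of-the-else-clause observation (equivalently your sup-over-conditionings $\bar q$) is exactly what resolves the Wald-type concern you raise. The extra $(1+\rho)$ prefactor you flag is not really there: the per-candidate density-evaluation cost only shifts the geometric sum's index by one, and in any case all instantiations use $\rho\leq 1$, matching the paper's up-to-constant-factors accounting.
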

\begin{proof}
   
    Let $\textbf{root}$ be the root of the fallback tree. W.l.o.g., let us assume that in \cref{assumption:formal}, the number of queries called locally at any node $S $ in the fallback tree is $ \card{S};$ this is without loss of generality  since all bounds will translate up to constant factors.
   We show by induction on the height $h$ that
   \begin{equation}
       \E{Q } \leq N \sum_{\ell=0}^h (1+\rho)^\ell = \frac{(1+\rho)^{h+1}-1}{\rho} \cdot N 
   \end{equation}

When $h=0,$ $\textbf{root}$ is a leaf node and the number of queries called at $\textbf{root}$ is $1,$ so the claim is trivially true. 

Let $I$ be the indicator of the event that $\Blue{X}$ is not accepted in the execution of $\RS{}$ at the root, and $\mathcal{M}$ be the set of recursive call indices of the fallback calls issued by the root node. Let $S_1, \cdots,S_k$ be the children of $\textbf{root}$ and $N_i = \card{S_i}$ be the number of leaf nodes in the subtree rooted at $S_i$. For a fallback call with recursive call index $t$ at the root, let $ Q^{i,t}$ denote the number of queries of the recursive call to $\RS{S_i, \cdot}.$

 Let $\mu, \nu$ be the target and the speculative distribution at the root respectively. By \cref{prop:srs-is-correct}, the event $I$
 happens with probability $\dTV{\mu,\nu}. $ If $ \dTV{\mu,\nu} =0$ then the event $I$ does not happen, and $ Q \leq N $ as desired. Below, assume $\dTV{\mu,\nu}  >0. $
   We have the recursion
    \begin{align*}
\E{Q }   &= N +  \E*{I\cdot  \sum_{i=1}^k \sum_{t\in \branchset{}} Q^{i, t} } \\
   &=_{(1)} N+ \E*{I} \E*{  \sum_{i=1}^k \sum_{t\in \branchset{}} Q^{i, t}   }\\
&=_{(2)} N+   \dTV{\mu,\nu}\cdot \sum_{i=1}^k \E*{\sum_t      \E{Q^{i, t} \cdot\1[t\in \branchset{}] }}   \\
&\leq_{(3)} N + \dTV{\mu,\nu}\cdot \sum_{i=1}^k \E*{\card{\branchset{}} \parens*{\sum_{\ell=0}^{h-1} (1+\rho)^\ell} N_i}\\
&= N + \dTV{\mu,\nu}\E{\card{\branchset{}}}\cdot  \parens*{\sum_{\ell=0}^{h-1} (1+\rho)^\ell} N\\
&\leq_{(4)} {\sum_{\ell=0}^{h} (1+\rho)^\ell} N.
    \end{align*}
   where (1) is because the indicator $I$ is independent of the events in the "Else" clause in \cref{alg:rs2}, (2) is because of \cref{prop:srs-is-correct} that $\E{I}=\dTV{\mu,\nu}$, (3) is because \[\E{Q^{i,t} \cdot \1[t\in \mathcal{M}] } \leq \E{Q^{i,t}} \cdot \1[t\in \mathcal{M}]\leq N_i\parens*{\sum_{\ell=0}^{h-1} (1+\rho)^\ell} \cdot \1[t\in \mathcal{M}]\] by the induction hypothesis for the children of the root,
   and (4) is because 
   \begin{equation}\label{eq:bound on the number of threads}
       \E{\card{\mathcal{M}}}\leq \frac{1+\rho}{\dTV{\mu,\nu}}
   \end{equation}
   Let the index of the first fallback call that accepts be $t_*\geq 1,$ and let $r$ be the smallest $r\in\Z$ so that $t_* \leq \lceil (1+\rho)^{r}\rceil-1.$ Note that since $ t\geq 1,$ we have $r \geq 1.$
   The total number of fallback calls generated by the root is \[\card{\mathcal{M}} = \lceil (1+\rho)^{r}\rceil-\lceil (1+\rho)^0\rceil  = \lceil (1+\rho)^{r}\rceil - 1.\] 
   By the definition of $r$, $t_* \geq \lceil (1+\rho)^{r-1}\rceil$, we have
   \[\card{\mathcal{M}} = \lceil (1+\rho)^{r}\rceil -1  \leq \lceil (1+\rho)\lceil (1+\rho)^{r-1}\rceil \rceil -1 \leq (1+\rho)\lceil (1+\rho)^{r-1}  \rceil \leq (1+\rho) t_*.\] 
    Thus, we have shown $ \card{\mathcal{M}}  \leq(1+\rho) t_*.$
   Taking expectation over $t_*,$ and noting that $\E{t_*} = \frac{1}{ \dTV{\mu, \nu}}$ by \cref{prop:srs-is-correct} since each  fallback call at the root accepts with probability $ \dTV{\mu, \nu}$, yields the desired bound. 
    \end{proof} 

     \subsection{Bounding the number of rounds}

     \begin{lemma} \label{lem:expected round bound}
         Consider a fallback tree with height $h,$ $N$ leaf nodes, and maximum branching factor $D.$   
 Let
 $T$ be the number of rounds of $\RS{}$ at the root with parameter $\rho >0.$
 Under \cref{assumption:formal}, the expected number of rounds is bounded by:
         \begin{equation}
             \E{T}
             \leq O\parens*{ 1 + D  \sum_{\ell = 0}^{h-1} (1+\rho)^{\ell}  \parens*{ \parens*{2 +\frac{\log N + 1}{\log(1+\rho)}} \E*_{Z\sim \mu} {\sum_{S_\ell \in \mathcal{C}_\ell (\textbf{root})}  \dTV*{\mu_{S_\ell}\parens*{\cdot\given Z_{<S_\ell}},\nu_{S_\ell}\parens*{\cdot\given Z _{<S_\ell}}}} + \frac{1}{e\log (1+\rho) } } }
         \end{equation}
        
     \end{lemma}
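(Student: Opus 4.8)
The plan is to bound the number of rounds $T$ by a sum, over all recursive $\RS{}$-invocations that actually occur, of a purely ``local'' quantity — essentially the number of while-loop batches that invocation runs — and then to control the expectation of this sum level by level down the fallback tree, in the same inductive spirit as \cref{lem:expected number of queries} but with extra care for the parallel batch structure. Concretely, for a node $S$ with children $T_1,\dots,T_k$ ($k\le D$), let $R(S)$ be the number of rounds of one call $\RS{S,Z_{<S}}$. The initial speculation costs $O(1)$ rounds; if it is rejected, the while loop runs batches $r=0,1,\dots,r_*$, with $r_*+1$ the number of batches, and batch $r$ costs $\max_i\sum_{j\le k}R^{(i)}(T_j)$ rounds (one $\fallback{}$ call per index $i$ in the batch, each running its $k$ recursive calls sequentially). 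I would first bound the while-loop cost $W(S)$ by pulling the sum over children outside the max ($\max_i\sum_j(\cdot)\le\sum_j\max_i(\cdot)$), then writing $R^{(i)}(T_j)=O(1)+\1[\text{reject}]W^{(i)}(T_j)$ and dominating $\max_i\1[\cdot]W^{(i)}$ by $\sum_i\1[\cdot]W^{(i)}$, and using $k\le D$, to get $W(S)\le O(D(r_*+1))+\sum_{j\le k}\sum_i\1[\text{the $i$-th recursion into }T_j\text{ rejects}]\,W^{(i)}(T_j)$. Unfolding this inequality down to the leaves (where $\nu=\mu$ and nothing ever rejects) telescopes all the $O(1)$ base costs into the parents' $O(D(r_*+1))$ terms and gives $T\le O(1)+O(D)\sum_c\1[\text{$c$ rejects}](r_*^c+1)$, the sum ranging over every $\RS{}$-invocation $c$ in the execution and $r_*^c$ its last batch index.

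Next I would set up a recursive potential. For a node $S$, let $\Phi(S)=\E{\sum_c\1[\text{$c$ rejects}](r_*^c+1)}$ where $c$ ranges over the sub-invocation of $\RS{S}$ and all its descendant invocations, and the conditioning is drawn as $Z_{<S}\sim\mu_{<S}$ (this is legitimate: by correctness, \cref{prop:srs-is-correct}, every invocation of $\RS{S}$ sees a conditioning distributed as $\mu_{<S}$). Then $\E{T}\le O(1)+O(D)\,\Phi(\textbf{root})$ by the previous paragraph, and it suffices to prove, for non-leaf $S$,
\[ \Phi(S)\;\le\;\E{\1[\text{reject}](r_*^S+1)}\;+\;(1+\rho)\sum_{j\le k}\Phi(T_j). \]
The $(1+\rho)\sum_j\Phi(T_j)$ contribution I would obtain by the decoupling trick from the proof of \cref{lem:expected number of queries}: conditioned on $Z_{<S}$, whether the $i$-th $\fallback{}$ call is ever issued is determined by the acceptance coins of strictly earlier $\fallback{}$ calls, hence is independent of the cost incurred inside the $i$-th call, and $\E{\card{\mathcal M}\given Z_{<S}}=\P{\text{reject}\given Z_{<S}}\cdot\E{\card{\mathcal M}\given\text{reject},Z_{<S}}\le\dTV{\mu_S,\nu_S}\cdot\frac{1+\rho}{\dTV{\mu_S,\nu_S}}=1+\rho$, using $\card{\mathcal M}\le(1+\rho)t_*$ and that $t_*$ is geometric given rejection.

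For the local term I would condition on $Z_{<S}$, set $p=\dTV{\mu_S(\cdot\given Z_{<S}),\nu_S(\cdot\given Z_{<S})}$, and note that given rejection (probability $p$), $t_*^S$ is geometric with mean $1/p$ and $r_*^S+1\le 1+\log_{1+\rho}(t_*^S+1)$, so concavity of $\log$ gives $\E{r_*^S+1\given\text{reject},Z_{<S}}\le O(1)+\frac{1+\log(1/p)}{\log(1+\rho)}$. Averaging over $Z_{<S}$ and using concavity of $x\mapsto x\log(1/x)$ (so that $\E{p\log(1/p)}\le\tau_S\log(1/\tau_S)$ with $\tau_S:=\E{p}=\E*_{Z\sim\mu}{\dTV{\mu_S,\nu_S}}$) yields $\E{\1[\text{reject}](r_*^S+1)}\le O(\tau_S)+\frac{\tau_S(1+\log(1/\tau_S))}{\log(1+\rho)}$. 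Unfolding the $\Phi$-recursion over the $h$ levels then gives $\Phi(\textbf{root})\le\sum_{\ell=0}^{h-1}(1+\rho)^\ell\sum_{S_\ell\in\mathcal C_\ell(\textbf{root})}\bigl(O(\tau_{S_\ell})+\frac{\tau_{S_\ell}(1+\log(1/\tau_{S_\ell}))}{\log(1+\rho)}\bigr)$, and the last move is to collapse the many $\log(1/\tau_{S_\ell})$ factors at each level: writing $\tau_{S_\ell}=\Sigma_\ell q_{S_\ell}$ with $\Sigma_\ell=\sum_{S_\ell}\tau_{S_\ell}$ and $(q_{S_\ell})$ a probability vector, $\sum_{S_\ell}\tau_{S_\ell}\log(1/\tau_{S_\ell})=\Sigma_\ell\log(1/\Sigma_\ell)+\Sigma_\ell H(q)\le\frac1e+\Sigma_\ell\log N$ since $H(q)\le\log\card{\mathcal C_\ell(\textbf{root})}\le\log N$ and $x\log(1/x)\le 1/e$. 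Substituting shows that level $\ell$ contributes at most $(1+\rho)^\ell\bigl[(2+\frac{\log N+1}{\log(1+\rho)})\Sigma_\ell+\frac{1}{e\log(1+\rho)}\bigr]$ up to the global $O(\cdot)$, which is the stated bound.

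The main obstacle is the first step. Unlike the query count, a batch's round cost is the \emph{maximum} over its parallel $\fallback{}$ calls, each of which is itself a sum of sequential recursive costs, so the maximum cannot be handled crudely without destroying the parallel speedup; the resolution is to split over children before invoking any $\max\le\sum$ step and to apply $\max\le\sum$ only to the \emph{rejecting} sub-invocations, which is affordable precisely because at each node the expected number of rejecting children over the whole while loop is $O((1+\rho)\tau)$ — this is what keeps the potential recursion geometric ($(1+\rho)$ per level) instead of exploding. A secondary subtlety is the double appeal to concavity — of $\log$ for the batch counts and of $x\log(1/x)$ both per node and across each level — which is what converts the raw $\log(t_*)/\log(1+\rho)$ bounds into the clean $(2+\frac{\log N+1}{\log(1+\rho)})\Sigma_\ell+\frac{1}{e\log(1+\rho)}$ form.
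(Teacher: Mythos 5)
Your proposal is correct and follows essentially the same route as the paper's proof: you charge the total rounds to $O(D)\cdot\sum_c \1[c\text{ rejects}]\cdot(\text{number of batches of }c)$ (the paper's quantities $W^{(S_\ell,t_\ell)\to S_r}$ in \cref{eq:round complexity main induction}), propagate expectations down the tree with the factor $(1+\rho)$ per level via the independence of the rejection indicator from the while-loop randomness and $\E{\card{\mathcal M}}\le(1+\rho)/\dTV{\mu_S,\nu_S}$, bound the per-invocation batch count by $2+\log_{1+\rho}(1/\dTV{\mu_S,\nu_S})$ via Jensen, and collapse each level with concavity of $x\log(1/x)$ to get the $(\log N+1)$ and $1/e$ terms (the paper's \cref{prop:expected round helper}). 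The only differences are organizational (a per-node potential recursion and an entropy decomposition in place of the paper's explicit double induction and Jensen-plus-case-analysis), not substantive.
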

    \begin{proof}

   We will show that
\begin{equation}\label{eq:expected round bound first eq}
     \E{T} \leq O\parens*{1 + D \sum_{\ell = 0}^{h-1} (1+\rho)^{\ell}  \E*_{Z\sim \mu} {\sum_{S_\ell \in \mathcal{C}_\ell (\textbf{root})}  \varphi(\dTV*{\mu_{S_\ell}\parens*{\cdot\given Z_{<S_\ell}},\nu_{S_\ell}\parens*{\cdot\given Z_{<S_\ell}}})}}
\end{equation}
 where $\varphi:[0,1]\to \R$ is defined by \[\varphi(x) = \begin{cases} x(2+\log_{1+\rho} \frac{1}{x}) &\text{ if } x >0, \\ 0 &\text{ if } x = 0.\end{cases}\]

 Given \cref{eq:expected round bound first eq}, the lemma follows 
    by applying the following helper proposition 
    \begin{proposition}\label{prop:expected round helper}

Define the function $\tilde{\varphi}:[0,+\infty]\to \R, \tilde{\varphi} (x) =\begin{cases} x \log \frac{1}{x},&\text{ if } x>0 \\ 0 &\text{ if } x= 0\end{cases}$ where $\log$ denotes the natural logarithm. 
 
We have the following inequality:
\begin{align*}
&\E*_{Z\sim \mu} {\sum_{S_\ell \in \mathcal{C}_\ell(\textbf{root})}\tilde{\varphi} (\dTV*{\mu_{S_\ell}\parens*{\cdot\given Z_{<S_\ell}},\nu_{S_\ell}\parens*{\cdot\given Z_{<S_\ell}}}) } \\
\leq  &\E*_{Z\sim \mu} {\sum_{S_\ell \in \mathcal{C}_\ell(\textbf{root})}\dTV*{\mu_{S_\ell}\parens*{\cdot\given Z_{<S_\ell}},\nu_{S_\ell}\parens*{\cdot\given Z_{<S_\ell}}} }(\log |\mathcal{C}_\ell (\textbf{root})| +1)  + \frac{1}{e}.
\end{align*}
\end{proposition}
The proof of
    \cref{prop:expected round helper} is deferred to \cref{sec:misssing proofs}.

Now, we prove \cref{eq:expected round bound first eq}.
We first define some relevant notations.
    For a node $S,$ and recursive call index $t$ (see \cref{def:recursive call index}), in the call to node $S$ with index $t$, let $I^{S,t}$ be the indicator for the event that $\Blue{X}$ is not accepted, $\mathcal{M}^{S,t}$ be the set of fallback recursive call indices issued by this call, and $m^{S,t}$ be the number of batches\footnote{This corresponds to the counter $r$ in the algorithm}. Note that, if $ S$ is a leaf node, then $I^{S,t} $ and $m^{S,t} $ are always $0.$ 
    
     W.l.o.g., let us assume that in \cref{assumption:formal}, the number of rounds called locally at each node is exactly $1;$ this is without loss of generality  since all bounds will translate up to constant factors. 
    Consider nodes $S_\ell$ and $S_r$ where $S_r$ is a descendant fo $S_\ell$, and index $t_\ell$ of node $S_\ell.$ We define
  \begin{equation}\label{eq:definition of Z in round induction}
      W^{(S_\ell,t_\ell)\to S_r} = \sum_{t_{\ell+1}\in \mathcal{M}^{S_{\ell}, t_\ell},\: t_{\l+2}\in \mathcal{M}^{S_{\l +1},  \: t_{\l +1}},\: \dots,\: t_r\in \mathcal{M}^{S_{r-1}, t_{r-1}}} I^{S_\ell, t_\ell}I^{S_{\ell+1}, t_{\ell+1}}\cdots I^{S_r, t_r} m^{S_r,t_r}. 
  \end{equation}
 For intuition, think of $ W^{(S_\ell,t_\ell)\to S_r} $ as the sum of the cost of the while-loop of all calls to $\RS{S_r,\cdot} $ generated by the call to $\RS{S_\ell,\cdot}$ with recursive call index $t_\ell.$
 
    Note that  we have the recursive relation
    \begin{equation}\label{eq:recursive relation for Z}
         W^{(S_\ell,t_\ell)\to S_r} = I^{S_\ell, t_\ell}\cdot \sum_{t_{\ell+1}\in \mathcal{M}^{S_{\ell}, t_\ell}} W^{(S_{\ell+1}, t_{\ell+1}) \to S_r}
    \end{equation}
   
    We will prove that for any level $\ell$-th descendant $S_\ell $ of the root and any index $t_\ell,$ that the number of rounds used by the call to node $S_\ell$ of index $t_\ell$, denoted by $T^{S_\ell,t_\ell}$, is bounded by:
    \begin{equation}\label{eq:round complexity main induction}
        T^{S_\ell, t_\ell}\leq 1 + \sum_{r =\ell}^{h-1}\sum_{S_r \in \mathcal{C}_{r-\ell} (S_\ell)} C(S_r) W^{(S_\ell,t_\ell)\to S_r}.
    \end{equation}

The base case is $\ell = h,$ at which point the statement is trivially true.
Indeed, the statement is then equivalent to $T^{S_h, t_h}\leq 1 $, which is true by the assumption that each leaf node costs $1$ round.

Suppose we have proved the statement for all $\ell \geq 1.$ We show that the statement holds for $\ell =0$, i.e., at the root $S_0 \equiv\textbf{root}$ with index $t_0\equiv 0.$

In the execution of $\RS{}$ at the root node, let $I\equiv I^{S_0,t_0}$ be the event that $X$ is not accepted,  $\mathcal{M}\equiv \mathcal{M}^{S_0,t_0}$ be the set of recursive call indices of the fallback calls, and $m$ be the number of batches.
    For $0\leq j< m$, let $\mathcal{M}^{ (j)}$ denote the $j$-th batch of parallel recursive calls to the $\fallback{}$ method at that root. Note that within each batch, recursive calls are made in parallel, but each batch is only executed sequentially after the previous batch. Thus, we can bound the round complexity of the root by
        \begin{align*}
        T &\leq 1 + I \cdot
    \sum_{0\leq j<m} \max\set*{\sum_{S_1 \in \mathcal{C}(\textbf{root})}T^{S_1, t_1}\given t_1 \in \mathcal{M}^{(j)}}\\
    &\leq_{(1)} 1 + I \cdot
     \sum_{0\leq j<m} \max\set*{   \sum_{S_1 \in \mathcal{C}(\textbf{root})}  (1+  \sum_{r =1}^{h-1}\sum_{S_r \in \mathcal{C}_{r-1} (S_1)} C(S_r) W^{(S_1,t_1)\to S_r} ) \given t_1 \in \mathcal{M}^{(j)}}\\
     &= 1 + I \cdot \sum_{0\leq j<m}\parens*{ C(S_0) + \max\set*{  \sum_{S_1 \in \mathcal{C}(\textbf{root})}    \sum_{r =1}^{h-1}\sum_{S_r \in \mathcal{C}_{r-1} (S_1)} C(S_r) W^{(S_1,t_1)\to S_r}    \given t_1 \in \mathcal{M}^{(j)}}}\\
     &= 1 + I \cdot \sum_{0\leq j<m} \max\set*{  \sum_{S_1 \in \mathcal{C}(\textbf{root})}    \sum_{r =1}^{h-1}\sum_{S_r \in \mathcal{C}_{r-1} (S_1)} C(S_r) W^{(S_1,t_1)\to S_r}    \given t_1 \in \mathcal{M}^{(j)}}\\
     &\leq_{(2)} 1 +  I \cdot m \cdot C(S_0) + I\cdot \sum_{t_1 \in \mathcal{M}^{S,t}}\sum_{S_1 \in \mathcal{C}(\textbf{root})}  \sum_{r =1}^{h-1}\sum_{S_r \in \mathcal{C}_{r-1} (S_1)} C(S_r) W^{(S_1,t_1)\to S_r}   \\
     &= 1+ I\cdot m\cdot C(S_0) + \sum_{S_1 \in \mathcal{C}(\textbf{root})} \sum_{r =1}^{h-1}\sum_{S_r \in \mathcal{C}_{r-1} (S_1)}    (I \cdot\sum_{t_1 \in \mathcal{M}^{S,t}}W^{(S_1,t_1)\to S_r} ) C(S_r)\\
     &=_{(3)} 1 +  W^{(S_0,t_0) \to S_0}  C(S_0) +\sum_{r=1}^h \sum_{S_r \in \mathcal{C}_{r} (S_0)} W^{(S_0,t_0)\to S_r}C(S_r)
\end{align*}
where (1) is by applying the induction hypothesis for each children $S_1$ of $S,$ (2) is due to
\begin{align*}
    &\sum_{0\leq j<m} \max\set*{  \sum_{S_1 \in \mathcal{C}(\textbf{root})}    \sum_{r =1}^{h-1}\sum_{S_r \in \mathcal{C}_{r-1} (S_1)} C(S_r) W^{(S_1,t_1)\to S_r}    \given t_1 \in \mathcal{M}^{(j)}} \\
    \leq &\sum_{0\leq j<m} \sum_{t_1\in \mathcal{M}^{(j)}} \sum_{S_1 \in \mathcal{C}(\textbf{root})}    \sum_{r =1}^{h-1}\sum_{S_r \in \mathcal{C}_{r-1} (S_1)} C(S_r) W^{(S_1,t_1)\to S_r}   \\
    = &  \sum_{t_1 \in \mathcal{M}^{S,t}}\sum_{S_1 \in \mathcal{C}(\textbf{root})}  \sum_{r =1}^{h-1}\sum_{S_r \in \mathcal{C}_{r-1} (S_1)} C(S_r) W^{(S_1,t_1)\to S_r} 
\end{align*}
and (3) is due to the recursive relation for $W$ (\cref{eq:recursive relation for Z}). So, we have proven \cref{eq:round complexity main induction} at the root, which completes the inductive argument.

We finish the proof by showing that, for any descendant $S_r\in \mathcal{C}_{r-\ell}(S_\ell)$ of $S_\ell:$
\begin{equation}\label{eq:bounding Z}
    \E{W^{(S_\ell,t_\ell)\to S_r} } \leq (1+\rho)^{r-\ell} \E*_{Z\sim \mu} {\varphi(\dTV*{\mu_{S_r}\parens*{\cdot\given Z_{<S_r}},\nu_{S_r}\parens*{\cdot\given Z_{<S_r}}} ) } 
\end{equation}
We induct on $\ell\in \set{0,\cdots, h}$ and $r \in \set{\ell, \cdots h}.$ 
We first check the base cases $r=\ell$ for all $\ell \in \set{0,\cdots, h}.$ \cref{eq:bounding Z} is equivalent to 
\begin{equation}\label{eq:bounding Z base case}
    \E{I^{S_r,t_r} m^{S_r,t_r}}\leq 
    \E*_{Z\sim \mu} {\varphi(\dTV*{\mu_{S_r}\parens*{\cdot\given Z_{<S_r}},\nu_{S_r}\parens*{\cdot\given Z_{<S_r}}} ) } 
\end{equation}
\begin{proof}[Proof of \cref{eq:bounding Z base case}]
To prove \cref{eq:bounding Z base case}, we show that for every input $Z_{< S_r}$ to the call to node $S_r$ with index $t_r,$ with $\mu = \mu_{S_r}\parens*{\cdot\given Z_{<S_r}}, \nu = \nu_{S_r}\parens*{\cdot\given Z_{<S_r}}$ and $d_{TV} = \dTV {\nu ,\mu},$ 
\[ \E{I^{S_r,t_r} m^{S_r,t_r}}\leq \begin{cases} 0 &\text{if } d_{TV} =0 \\
d_{TV} (2 + \log_{1+\rho} \frac{1}{d_{TV}})  &\text{if } d_{TV} > 0\end{cases}.\]
\cref{eq:bounding Z base case} follow since by \cref{prop:srs-is-correct}, the input $Z_{<S_r}$ has the same distribution as $Z'_{<S_r}$ where $Z'$ is sampled from $\mu.$  

Indeed, if $d_{TV} =0,$ then by \cref{prop:srs-is-correct}, the event $I^{S_r,t_r}$ happens with probability $d_{TV} = 0$,  so we have $\E{I^{S_r,t_r} m^{S_r,t_r} }= 0.$

Now, suppose $d_{TV} =0 >0.$
Note that the event $I^{S_r,t_r}$ is independent of the events in "Else" clause of \cref{alg:rs2}, and $\E{I^{S_r,t_r}=1} = \dTV{\mu,\nu} =d_{TV}  $ by \cref{prop:srs-is-correct}. Furthermore, \[m^{S_\ell,t_\ell}  \leq  1 + \log_{1+\rho} |\mathcal{M}^{S_\ell,t_\ell}|.\] Therefore, we obtain: 
\begin{align*}
    \E{I^{S_r,t_r} m^{S_r,t_r}} &= \E{I^{S_r,t_r}} \E{m^{S_r,t_r}} \leq  d_{TV} \E{1 + \log_{1+\rho} |\mathcal{M}^{S_\ell,t_\ell}|}\\ &\leq_{(1)} d_{TV} (1+ \log_{1+\rho} (\E{|\mathcal{M}^{S_\ell,t_\ell}|}) \leq_{(2)} d_{TV}(1+ \log_{1+\rho} \frac{1+\rho}{d_{TV}}) = d_{TV} (2 + \log_{1+\rho} \frac{1}{d_{TV}})
\end{align*}
where (1) is by Jensen's inequality, and (2) is by $\E{|\mathcal{M}^{S_\ell,t_\ell}|} \leq \frac{1+\rho}{d_{TV}} $ (see \cref{eq:bound on the number of threads}).
 \end{proof}

Now, we prove \cref{eq:bounding Z} by induction.
Suppose we have proved \cref{eq:bounding Z} for all $\ell \geq 1$ and $r\geq \ell.$ We show that \cref{eq:bounding Z} holds for $\ell = 0$ i.e. at the node $S.$ Let $S_1 \in \mathcal{C}(S)$ be the unique child of $S$ which is also an ancestor of $S_r,$ we use the recursive relation for $W$ (\cref{eq:recursive relation for Z}) to obtain
\begin{align*}
      \E{W^{(S_0,t_0)\to S_r}} = \E{I^{S_0,t_0}\cdot \sum_{t_1 \in \mathcal{M}^{S_0,t_0}} W^{(S_1, t_1)\to S_r} } = \E{I^{S_0,t_0}} \cdot \E{ \sum_{t_1 \in \mathcal{M}^{S_0,t_0}} W^{(S_1, t_1)\to S_r} }
\end{align*}
where the second equality is because the indicator $I^{S_0,t_0}$ is independent of the events in the "Else" clause of \cref{alg:rs2}. Using the fact that $ \E{I^{S_0,t_0}} = \dTV{\mu,\nu}$ where $\mu,\nu$ are the target and the speculative distribution at $S_0$ respectively, and the induction hypothesis for $S_1$, we obtain
\begin{align*}
    &\E{W^{(S_0,t_0)\to S_r} } \\
    &= \dTV{\mu,\nu}\cdot \E{ \sum_{t_1 \in \mathcal{M}^{S_0,t_0}} W^{(S_1, t_1)\to S_r} m^{S_r,t_r}} \\
    &\leq \dTV{\mu,\nu} \E*{ \card{\mathcal{M}^{S_0,t_0}}(1+\rho)^{r-1} \E*_{Z\sim \mu} {\varphi(\dTV*{\mu_{S_r}\parens*{\cdot\given Z_{<S_r}},\nu_{S_r}\parens*{\cdot\given Z_{<S_r}}})  } } \\
    &=_{(1)}  \dTV{\mu,\nu} \E*{ |\mathcal{M}^{S_0,t_0} |} (1+\rho)^{r-1} \E*_{Z\sim \mu} {\varphi(\dTV*{\mu_{S_r}\parens*{\cdot\given Z_{<S_r}},\nu_{S_r}\parens*{\cdot\given Z_{<S_r}}} )  } \\
    &\leq_{(2)} (1+\rho)^{r} \E*_{Z\sim \mu} {\varphi(\dTV*{\mu_{S_r}\parens*{\cdot\given Z_{<S_r}},\nu_{S_r}\parens*{\cdot\given Z_{<S_r}}})  } 
\end{align*}
where in (1), we pull out the inner expectation as it is invariant of the events in the algorithm, and in (2), we use the fact that $  \E*{ |\mathcal{M}^{S_0,t_0} |} \leq \frac{1 + \rho}{\dTV{\mu,\nu}} $ (see \cref{eq:bound on the number of threads}).
\end{proof}

We show an alternative bound on the number of rounds that depends poly-logarithmically on the number of queries, which will be useful to obtain high-probability tail bound for the number of rounds (see \cref{thm:autoregression with high probability}).
\begin{lemma}\label{lem:worst case bound on number of rounds}
Consider a fallback tree with height $h,$ $N$ leaf nodes, and maximum branching factor $D.$
 Let
 $Q, T$ be the number of queries and rounds of $\RS{}$ at the root with parameter $\rho >0.$
 Under \cref{assumption:formal}, the number of rounds is bounded by:
 \[T \leq O((1 + \log_{1+\rho} Q)^{h} N D h). \]
\end{lemma}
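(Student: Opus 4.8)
The plan is to bound the number of rounds pathwise, by induction on the height of the fallback-tree node being processed, with the per-node bound expressed in terms of the number of leaves lying below that node. The single observation that makes everything go through --- and that prevents a crude $D^{\Theta(h)}$ blowup --- is that each individual invocation of $\RS{}$ runs at most $1+\log_{1+\rho}Q$ batches of parallel fallback calls.

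First I would record this batch bound. For a node $S$ and recursive call index $t$, the call $\RS{S,\cdot}$ with index $t$ issues $\card{\mathcal{M}^{S,t}}$ fallback calls across all of its batches, and --- exactly as already used in the proof of \cref{lem:expected round bound} --- the number of batches satisfies $m^{S,t}\leq 1+\log_{1+\rho}\card{\mathcal{M}^{S,t}}$. Each fallback call makes at least one denoiser query inside its own subtree (for instance when it draws $\Blue{X}\sim\nu_{T_1}$ at its first child), and distinct invocations consume disjoint queries, so $\card{\mathcal{M}^{S,t}}\leq Q$ for every $(S,t)$; hence $m^{S,t}\leq 1+\log_{1+\rho}Q=:B$ uniformly over the execution.

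Next I would set up the round recursion (rescaling \cref{assumption:formal} so each local step is one round, as elsewhere): an $\RS{S,\cdot}$ call at a leaf costs $O(1)$ rounds; for a non-leaf $S$ with children $T_1,\dots,T_k$ where $k\leq D$, the call costs $O(1)$ to draw and test $\Blue{X}$, plus, if $\Blue{X}$ is rejected, one contribution per batch, where the cost of a batch is $O(1)$ plus the \emph{maximum} over the indices $i$ in that batch of the round count of $\fallback{S,\cdot}$ with index $i$ (the calls in a batch run in parallel), and the round count of a single $\fallback{S,\cdot}$ call is $\sum_{j=1}^k T^{T_j,t_j}$ over the recursive calls it spawns (they run sequentially). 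Then I would prove by induction on the height $\ell$ of $S$ that every $\RS{S,\cdot}$ call uses $O\parens*{B^{\ell}\,N_S\,D\,(\ell+1)}$ rounds, where $N_S$ is the number of leaves below $S$: the base case $\ell=0$ is the $O(1)$ leaf bound; in the step, the inductive hypothesis makes each batch cost at most $O(1)+\sum_{j=1}^k O\parens*{B^{\ell-1}N_{T_j}D\,\ell}=O\parens*{B^{\ell-1}N_S D\,\ell}$ using $\sum_j N_{T_j}=N_S$, and multiplying by $m^{S,t}\leq B$ batches yields $O\parens*{B^{\ell}N_S D(\ell+1)}$. Specializing to the root ($\ell=h$, $N_S=N$) gives $T\leq O\parens*{(1+\log_{1+\rho}Q)^{h}\,N D h}$.

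The hard part is really just the bookkeeping in the round recursion --- maximizing within a batch but summing over batches and over the children of a fallback. This is what lets the per-node bound scale with $N_S$, and therefore telescope across the fallback tree (via $\sum_j N_{T_j}=N_S$), instead of acquiring a spurious factor of $D$ at each level, which a naive ``$T^{S}\leq O(1)+ B D \max_{\text{child}}T^{\text{child}}$'' recursion would produce. Once the batch bound $m^{S,t}\leq B$ of the first step is in hand, the induction is otherwise routine.
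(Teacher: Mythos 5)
Your proposal is correct and follows essentially the same route as the paper: both arguments hinge on the observation that $\card{\mathcal{M}^{S,t}}\leq Q$ forces every invocation to run at most $1+\log_{1+\rho}Q$ batches, and both then run a pathwise tree induction in which the cost within a batch is a maximum over parallel fallback calls while costs across batches and across children are summed. The only difference is bookkeeping: the paper first proves a structural bound via the quantities $F^{(S_\ell,t_\ell)\to S_r}$ and sums $C(S_r)F$ over all descendants before substituting the batch bound, whereas you fold the batch bound $m^{S,t}\leq 1+\log_{1+\rho}Q$ directly into a single numerical induction in terms of the leaf count $N_S$, which yields the same $O\parens*{(1+\log_{1+\rho}Q)^{h}NDh}$ conclusion.
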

\begin{proof}
We use the same notations and setup as in proof of \cref{lem:expected round bound}.
     Consider nodes $S_\ell$ and $S_r$ where $S_r$ is a descendant fo $S_\ell$, and index $t_\ell$ of node $S_\ell.$ We define
     \begin{equation}
    F^{(S_\ell,t_\ell)\to S_r }  = \max\set*{m^{S_\ell, t_\ell}I^{S_\ell, t_\ell}m^{S_{\ell+1}, t_{\ell+1}}I^{S_{\ell+1}, t_{\ell+1}}\cdots m^{S_r, t_r}I^{S_r, t_r}\given t_{\ell+1}\in \mathcal{M}^{S_\l, t_\l}, \dots, t_r\in \mathcal{M}^{S_{r-1}, t_{r-1}}}.
 \end{equation}
 Note that we have the recursive relation
 \begin{equation}\label{eq:recursive relation for F}
     F^{(S_\ell,t_\ell)\to S_r } = m^{S_\ell, t_\ell}I^{S_\ell, t_\ell} \cdot \max\set{F^{(S_{\ell+1},t_{\ell+1})\to S_r }  \given t_{\ell+1} \in  \mathcal{M}^{S_\l, t_\l}} 
 \end{equation}
 W.l.o.g., let us assume that in \cref{assumption:formal}, the number of rounds called locally at each node is exactly $1;$ this is without loss of generality  since all bounds will translate up to constant factors.
     We will prove for any level $\ell$-th descendant $S_\ell $ of the root and any index $t_\ell,$ and $T^{S_\ell,t_\ell}$ being the number of rounds used by the call to node $S_\ell$ of index $t_\ell$, that
    \begin{equation}\label{eq:round complexity induction for worst case bound}
        T^{S_\ell, t_\ell}\leq  1 + \sum_{r =\ell}^{h-1}\sum_{S_r \in \mathcal{C}_{r-\ell} (S_\ell)} C(S_r) F^{(S_\ell,t_\ell)\to S_r} 
    \end{equation}
    We show how to derive the lemma from \cref{eq:round complexity induction for worst case bound}.  For any node $S_\ell$ and index $t_\ell,$ each fallback call issued by the call at node $S_\ell$ with index $t_\ell$ costs at least 1 query, so 
\[\card{\mathcal{M}^{S_\ell, t_\ell}}\leq Q. \]
Next, we have:
\[m^{S_\ell,t_\ell} \leq 1 + \log_{1+\rho} \card{\mathcal{M}^{S_\ell, t_\ell}} \leq 1 + \log_{1+\rho} Q \]
thus, using the fact that the fallback tree contains at most $N(h+1)$ nodes, we 
\[ T\leq O(1+\sum_{r=0}^{h-1} C(S_r) F^{(S_0,t_0)\to S_r})  \leq O((1 + \log_{1+\rho} Q)^{h} N D h).  \]
The proof of \cref{eq:round complexity induction for worst case bound} is similar to the proof of \cref{eq:round complexity main induction} in \cref{lem:expected round bound}.

The base case is $\ell = h$ and $S_\ell$ is a leaf node, at which point the statement is equivalent to $T^{S_h, t_h}\leq 1 $, which is true by the assumption that each leaf node costs $1$ round.

In the execution of $\RS{}$ at the root node, let $I \equiv I^{S_0,t_0}$ be the event that $X$ is accepted,  $\mathcal{M} \equiv \mathcal{M}^{S_0,t_0}$ be the set of fallback calls, and $m\equiv m^{S_0,t_0}$ be the number of batches.
    For $0\leq j< m$, let $\mathcal{M}^{ (j)}$ denote the $j$-th batch of parallel recursive calls to the $\fallback{}$ method at that root. Note that within each batch, recursive calls are made in parallel, but each batch is only executed sequentially after the previous batch. Thus, we can bound the round complexity of the root by
        \begin{align*}
        T &\leq 1 + I \cdot
    \sum_{0\leq j<m} \max\set*{\sum_{S_1 \in \mathcal{C}(\textbf{root})}T^{S_1, t_1}\given t_1 \in \mathcal{M}^{(j)}}\\
    &\leq_{(1)} 1 + I \cdot
     \sum_{0\leq j<m} \max\set*{   \sum_{S_1 \in \mathcal{C}(\textbf{root})}  (1+  \sum_{r =1}^{h-1}\sum_{S_r \in \mathcal{C}_{r-1} (S_1)} C(S_r) F^{(S_1,t_1)\to S_r} ) \given t_1 \in \mathcal{M}^{(j)}}\\
     &= 1 + I \cdot \sum_{0\leq j<m}\parens*{ C(S_0) + \max\set*{  \sum_{S_1 \in \mathcal{C}(\textbf{root})}    \sum_{r =1}^{h-1}\sum_{S_r \in \mathcal{C}_{r-1} (S_1)} C(S_r) F^{(S_1,t_1)\to S_r}    \given t_1 \in \mathcal{M}^{(j)}}}\\
     &= 1 +  I \cdot m \cdot C(S_0) + I \cdot \sum_{0\leq j<m} \max\set*{  \sum_{S_1 \in \mathcal{C}(\textbf{root})}    \sum_{r =1}^{h-1}\sum_{S_r \in \mathcal{C}_{r-1} (S_1)} C(S_r) F^{(S_1,t_1)\to S_r}    \given t_1 \in \mathcal{M}^{(j)}}\\
     &\leq_{(2)} 1 +  I \cdot m \cdot C(S_0) + I \cdot m \cdot \max\set*{\sum_{S_1 \in \mathcal{C}(\textbf{root})}  \sum_{r =1}^{h-1}\sum_{S_r \in \mathcal{C}_{r-1} (S_1)} C(S_r) F^{(S_1,t_1)\to S_r} \given t_1 \in \mathcal{M}}  \\
     &\leq 1+ I\cdot m\cdot C(S_0) + \sum_{S_1 \in \mathcal{C}(\textbf{root})} \sum_{r =1}^{h-1}\sum_{S_r \in \mathcal{C}_{r-1} (S_1)}    ( I \cdot m\cdot   \max \set*{ F^{(S_1,t_1)\to S_r} \given t_1 \in \mathcal{M}} C(S_r)   )   \\
     &=_{(3)} 1 +  F^{(S_0,t_0) \to S_0}  C(S_0) +\sum_{r=1}^{h-1} \sum_{S_r \in \mathcal{C}_{r} (S_0)} F^{(S_0,t_0)\to S_r}C(S_r)
\end{align*}
where (1) is by applying the induction hypothesis for each children $S_1$ of $S,$ (2) is due to
\begin{align*}
    &\sum_{0\leq j<m} \max\set*{  \sum_{S_1 \in \mathcal{C}(\textbf{root})}    \sum_{r =1}^{h-1}\sum_{S_r \in \mathcal{C}_{r-1} (S_1)} C(S_r) F^{(S_1,t_1)\to S_r}    \given t_1 \in \mathcal{M}^{(j)}} \\
    \leq &m \cdot \max_{0\leq j<m} \max\set*{  \sum_{S_1 \in \mathcal{C}(\textbf{root})}    \sum_{r =1}^{h-1}\sum_{S_r \in \mathcal{C}_{r-1} (S_1)} C(S_r) F^{(S_1,t_1)\to S_r}    \given t_1 \in \mathcal{M}^{(j)}}   \\
    = &  m \cdot \max\set*{  \sum_{S_1 \in \mathcal{C}(\textbf{root})}    \sum_{r =1}^{h-1}\sum_{S_r \in \mathcal{C}_{r-1} (S_1)} C(S_r) F^{(S_1,t_1)\to S_r}    \given t_1 \in \mathcal{M}}
\end{align*}
and (3) is due to the recursive relation for $F$ (\cref{eq:recursive relation for F}). So, we have proven \cref{eq:round complexity main induction} at the root, which completes the inductive argument.
\end{proof}
    \section{Parallelizing any-order autoregressive sampling}\label{sec:autoregression}
In this section, we instantiate and analyze $\RS{}$ with respect to the coordinate denoiser. Consider a target distribution  $\pi: [q]^n \to \R_{\geq 0}.$ Suppose we are given access to the conditional marginal distributions  $\pi_i(\cdot |X_S) $ of $\pi$ via the following types of queries:
\begin{itemize}
    \item Density query: Given a subset $S\subseteq [n],$ an index $i\not\in S,$ a configuration $x_S\in [q]^S$ such that $\pi_S(x_S) = \P_{X\sim \pi}{X_S = x_S}> 0$, and $ x_i\in [q]$,  return \[ \pi_i\parens{x_i \given X_S = x_S} =\P_{X\sim \pi} {X_i= x_i\given X_S = x_S}.\]
    Note that this is exactly the coordinate denoiser oracle for $\pi.$
    \item Sampling query: Given a subset $S\subseteq [n]$, an index $i\not\in S,$ and a configuration $x_S\in [q]^S$ such that $\pi_S(x_S) = \P_{X\sim \pi}{X_S = x_S}> 0,$  return a sample $ x_i \sim \pi_i\parens{\cdot \given X_S = x_S} .$ 
    
\end{itemize}
\begin{theorem}\label{thm:autoregression main detail}
There exists an algorithm that, given oracle access to the conditional marginals $\pi_i(\cdot \mid X_S)$ of a distribution $\pi: [q]^n \to \mathbb{R}_{\geq 0}$ via the queries described above, outputs a sample exactly distributed according to $\pi$. The expected number of rounds is $O(\sqrt{n \log q} \log^3 n)$ and the expected number of queries is $O(n \log n)$ queries.
\end{theorem}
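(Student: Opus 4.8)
The plan is to instantiate \cref{alg:rs2} with the coordinate-denoiser speculation scheme of \cref{sec:spec-scheme-coordinate}, on the balanced binary fallback tree produced by the halving rule of $\fallback{}$ in \cref{alg:main algo coordinate}, taking the parameter $\rho = 1/\log_2 n$. We may assume $n$ is a power of $2$ (otherwise pad $\pi$ with trivial coordinates), so the tree has $N = n$ leaves, height $h = \log_2 n$, and maximum branching factor $D = 2$, and at every level $\ell \le h$ its $2^\ell$ nodes are consecutive blocks of size $n/2^\ell$. Correctness is immediate: the coordinate scheme has $\nu_S = \mu_S$ at the leaves, so the correctness argument of \cref{sec:algorithm} together with \cref{prop:srs-is-correct} shows the returned $Z$ is distributed exactly as $\mu$, and undoing the random permutation $\sigma$ used to define $\mu$ yields an exact sample from $\pi$. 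The query bound is a direct application of \cref{lem:expected number of queries}: with $\rho = 1/\log_2 n$ and $h = \log_2 n$ one checks $(1+\rho)^{h+1} \le 2e = O(1)$, hence $\E{Q} = O\parens*{\tfrac{(1+\rho)^{h+1}-1}{\rho}\cdot n} = O(n\log n)$. (That \cref{assumption:formal} holds here, with $O(\card{S})$ queries and $O(1)$ rounds per node, is exactly the content of the three verifications in \cref{sec:spec-scheme-coordinate}.)

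The substance is the round bound, obtained from \cref{lem:expected round bound}. Fixing $\sigma$ for the moment, that lemma reduces the task to controlling, for each level $\ell \in \set{0,\dots,h-1}$, the quantity $\E*_{Z\sim\mu}{\sum_{S_\ell \in \mathcal{C}_\ell(\textbf{root})}\dTV{\mu_{S_\ell}(\cdot\mid Z_{<S_\ell}),\,\nu_{S_\ell}(\cdot\mid Z_{<S_\ell})}}$. Since $\nu_{S_\ell}(\cdot\mid Z_{<S_\ell})$ is by definition the product of the one-coordinate marginals of $\mu_{S_\ell}(\cdot\mid Z_{<S_\ell})$, the divergence $\DKL{\mu_{S_\ell}(\cdot\mid z)\river\nu_{S_\ell}(\cdot\mid z)}$ is the total correlation of the coordinates of block $S_\ell$ under the conditioning $Z_{<S_\ell}=z$; averaging over $z$ and invoking \cref{prop:potential function upper bound kl} gives $\E*_{Z}{\DKL{\mu_{S_\ell}(\cdot\mid Z_{<S_\ell})\river\nu_{S_\ell}(\cdot\mid Z_{<S_\ell})}} \le \phi(Z_{S_\ell}\mid Z_{<S_\ell})$. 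Applying Pinsker's inequality pointwise in $z$, then Cauchy--Schwarz over the $2^\ell$ blocks at level $\ell$, then Jensen's inequality for $\E_Z$ and for $\E_\sigma$ yields
\[
\E*_{\sigma}{\E*_{Z\sim\mu}{\sum_{S_\ell}\dTV{\mu_{S_\ell}(\cdot\mid Z_{<S_\ell}),\,\nu_{S_\ell}(\cdot\mid Z_{<S_\ell})}}} \;\le\; \sqrt{\tfrac{2^\ell}{2}\,\E*_{\sigma}{\sum_{j=1}^{2^\ell}\phi(Z_{S_\ell^{(j)}}\mid Z_{<S_\ell^{(j)}})}},
\]
where $S_\ell^{(1)},\dots,S_\ell^{(2^\ell)}$ are the level-$\ell$ blocks in left-to-right order.

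The key point is that, since $\mu$ is $\pi$ with coordinates relabeled by a uniform permutation, the ordered sequence of level-$\ell$ blocks is precisely a uniformly random ordered partition of the coordinates of $\pi$ into $2^\ell$ equal parts; this is exactly the setting of the general form of the new pinning lemma, \cref{lem:pinning lemma for induction} with $k = 2^\ell$, which bounds the inner expectation by $\tfrac{1}{2^\ell}\phi(X_{[n]}) \le \tfrac{n\log q}{2^\ell}$, the last step by \cref{prop:potential function trivial bound}. Hence the level-$\ell$ expected TV-sum is at most $\sqrt{n\log q/2}$ --- remarkably, the same bound at every level. Substituting into \cref{lem:expected round bound} with $D = 2$ and then averaging over $\sigma$: since $(1+\rho)^\ell \le e$ we get $\sum_{\ell=0}^{h-1}(1+\rho)^\ell = O(\log n)$, and since $\log(1+\rho) = \Theta(1/\log n)$ we get $\tfrac{\log N + 1}{\log(1+\rho)} = O(\log^2 n)$ and $\tfrac{1}{e\log(1+\rho)} = O(\log n)$; therefore $\E{T} = O\parens*{\log n \cdot \parens*{\log^2 n\cdot \sqrt{n\log q} + \log n}} = O(\sqrt{n\log q}\,\log^3 n)$.

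I expect the main obstacle to be the telescoping step: matching the left-to-right sequence of conditional total correlations that the recursion actually produces to the telescoping sum $\sum_r \phi(X_{\sigma(A_r)} \mid X_{\sigma(A_1\cup\cdots\cup A_{r-1})})$ appearing in \cref{lem:pinning lemma for induction}, and keeping the two sources of randomness (the permutation $\sigma$ defining $\mu$ and the draw $Z\sim\mu$) straight so that the Cauchy--Schwarz and the two Jensen steps are applied in the correct order and at the correct place. The non-power-of-two case is a minor wrinkle, absorbed by padding (or by a version of the pinning lemma allowing unequal blocks).
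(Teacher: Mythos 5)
Your proposal is correct and follows essentially the same route as the paper's proof: binary fallback tree of height $\log n$ with $\rho=1/\log_2 n$, correctness via \cref{prop:srs-is-correct}, the query bound from \cref{lem:expected number of queries}, and the round bound by feeding the per-level bound $\E_\sigma\E_Z\sum_{S_\ell}\dTV{\mu_{S_\ell},\nu_{S_\ell}} \le O(\sqrt{n\log q})$ (via Pinsker, \cref{prop:potential function upper bound kl}, H\"older/Jensen, \cref{lem:pinning lemma for induction} with $k=2^\ell$, and \cref{prop:potential function trivial bound}) into \cref{lem:expected round bound}. Your ordering of Pinsker, Cauchy--Schwarz, and the two Jensen steps, and your identification of the level-$\ell$ blocks under the random permutation with the random equal partition in the pinning lemma, match the paper's argument, so the anticipated "telescoping" obstacle is not an actual gap.
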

We can also get a high probability bound on the number of rounds at the cost of a slightly worse dependency on $n.$
\begin{theorem}\label{thm:autoregression with high probability}
Consider $\pi:[q]^n \to \R_{\geq 0}.$
For every parameter $h \in \mathbb{N}_{\geq 1}$ where $h^{2h} \leq n$, there exists an algorithm that, given oracle access to the conditional marginals $\pi_i(\cdot \mid X_S)$ of $\pi$ via the queries described above, outputs a sample exactly distributed according to $\pi$.
The expected number of rounds is $O(n^{\frac{1}{2} + \frac{1}{2h}} h^2 \log n \sqrt{\log q})$ and the expected number of queries is $O(nh).$

Moreover, for any $\epsilon > 0$, with probability at least $1 - \epsilon$, the number of rounds is at most
\[ O\parens*{  n^{\frac{1}{2} +\frac{1}{2h}} \parens*{h^2\log n \sqrt{\log q} + (4h)^h \log^h\parens*{\frac{n}{\epsilon}}} \cdot \log \parens*{\frac{1}{\epsilon}} }.\]
\end{theorem}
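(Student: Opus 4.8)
The plan is to run $\RS{}$ (\cref{alg:rs2}) with the coordinate-denoiser speculation scheme of \cref{sec:spec-scheme-coordinate}, which begins by applying a uniformly random permutation $\sigma$ to the $n$ coordinates and then, on a balanced fallback tree, uses at each node $S$ the product of the one-dimensional marginals of $\mu_S\parens*{\cdot\given Z_{<S}}$ as the speculation $\nu_S$. I would take the fallback tree to have height $2h$ and uniform branching factor $D=\lceil n^{1/(2h)}\rceil$ (so roughly $D^{2h}=n$ leaves); after padding the instance with $D^{2h}-n$ deterministic dummy coordinates, so that the number of leaves is exactly $D^{2h}$ and every block of the recursive partition has equal size, the hypothesis $h^{2h}\le n$ guarantees $D\ge h$, which keeps the padding overhead to a constant factor. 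I would set $\rho=1/(2h)$, so that $(1+\rho)^{2h}=O(1)$, $\sum_{\ell=0}^{2h-1}(1+\rho)^{\ell}=O(h)$, and $1/\log(1+\rho)=O(h)$. Correctness of the output distribution is immediate from \cref{prop:srs-is-correct} and the correctness results of \cref{sec:algorithm}, since the leaf speculations are exact, and the expected query bound $\E{Q}=O\bigl(\tfrac{(1+\rho)^{2h+1}-1}{\rho}\,N\bigr)=O(nh)$ follows by plugging $N=O(n)$ and this $\rho$ into \cref{lem:expected number of queries}.

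The main work is the expected round bound via \cref{lem:expected round bound}, whose only nontrivial input is the expected total speculation gap at each level $\ell$, namely $\E*_{Z\sim\mu}{\sum_{S_\ell\in\mathcal{C}_\ell(\textbf{root})}\dTV*{\mu_{S_\ell}\parens*{\cdot\given Z_{<S_\ell}},\nu_{S_\ell}\parens*{\cdot\given Z_{<S_\ell}}}}$. Since $\nu_S$ is the product of the marginals of $\mu_S$, Pinsker's inequality together with \cref{prop:potential function upper bound kl} gives $\dTV*{\mu_S,\nu_S}\le\sqrt{\tfrac12\DKL*{\mu_S\river\nu_S}}\le\sqrt{\tfrac12\,\phi(X_S\given Z_{<S})}$, because $\DKL*{\mu_S\river\nu_S}$ is exactly the (conditional) total correlation of the coordinates in $S$. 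Averaging over the random permutation $\sigma$ and applying the general ($k=D$) form of the new pinning lemma (\cref{lem:pinning lemma for induction}) at every internal node, telescoping down the $2h$ levels, yields $\E*_{\sigma,Z}{\sum_{S_\ell}\phi(X_{S_\ell}\given Z_{<S_\ell})}\le D^{-\ell}\,\phi(X_{[n]})=O\bigl(D^{-\ell}\,n\log q\bigr)$ by \cref{prop:potential function trivial bound}. Cauchy--Schwarz over the $D^{\ell}$ level-$\ell$ nodes, plus Jensen to pull the expectation inside the square root, then gives $\E*_{\sigma,Z}{\sum_{S_\ell}\dTV*{\mu_{S_\ell},\nu_{S_\ell}}}=O\bigl(\sqrt{n\log q}\bigr)$ uniformly in $\ell$. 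Substituting this together with $D=O(n^{1/(2h)})$, height $2h$, $\log N=O(\log n)$, $1/\log(1+\rho)=O(h)$, and $\sum_\ell(1+\rho)^{\ell}=O(h)$ into \cref{lem:expected round bound}, and averaging over $\sigma$, collapses everything to $\E{T}=O\bigl(n^{1/(2h)}\cdot h^{2}\log n\cdot\sqrt{n\log q}\bigr)=O\bigl(n^{1/2+1/(2h)}h^{2}\log n\sqrt{\log q}\bigr)$.

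For the high-probability statement I would start from the worst-case estimate of \cref{lem:worst case bound on number of rounds}, which bounds $T$ deterministically (given the internal randomness) by a factor of the form $(1+\log_{1+\rho}Q)^{O(h)}$ times a polynomial in $n,D,h$, and combine it with a Markov bound $\Pr[Q>\E{Q}/\epsilon']\le\epsilon'$ that caps the number of recursive calls by $\mathrm{poly}(n/\epsilon)$ and each while-loop batch count $m^{S,t}$ by $1+\log_{1+\rho}(\E{Q}/\epsilon')$. The refinement that produces the sharper tail factor is to exploit that a node's while loop is entered only with probability $\dTV*{\mu_S,\nu_S}$ (small at deep levels, by the level-$\ell$ gap bound above), so that long root-to-node ``all-rejected'' chains are exponentially rare, and that conditioned on being entered, $m^{S,t}$ is at most $1+\log_{1+\rho}(\mathrm{Geom}(\dTV*{\mu_S,\nu_S}))$, a random variable with a doubly-exponential upper tail. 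Rewriting $T$ as in the proof of \cref{lem:expected round bound} (a sum over levels of $D^{\ell}$ times products of at most $O(h)$ such $m$'s along all-rejected chains), union-bounding over the relevant calls, and balancing the union-bound loss against the tail decay should yield $O\bigl(n^{1/2+1/(2h)}\bigl(h^{2}\log n\sqrt{\log q}+(4h)^{h}\log^{h}(n/\epsilon)\bigr)\log(1/\epsilon)\bigr)$, with the $(4h)^{h}\log^{h}(n/\epsilon)$ factor coming from the $O(h)$-fold nesting of the chain products and the $\log(1/\epsilon)$ from the tail-versus-union-bound tradeoff.

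The step I expect to be the main obstacle is precisely this last tail analysis: because within a batch of parallel recursive calls the round cost is the \emph{maximum} over the calls, a naive union bound over the (random) number of calls in a batch is too lossy, and one instead has to weave together (i) the light (log-of-geometric) upper tail of the per-node batch count, (ii) the Markov bound on $Q$ controlling how many nodes are ever visited, and (iii) the exponential rarity of deep all-rejected chains, trading the union-bound overhead against the tail decay at each of the $O(h)$ recursion levels. The remaining ingredients --- the pinning-lemma telescoping, Cauchy--Schwarz, and the arithmetic of substituting into \cref{lem:expected number of queries} and \cref{lem:expected round bound} --- are routine.
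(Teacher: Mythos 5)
Your construction of the sampler, the expected query bound, and the expected round bound are all sound and essentially follow the paper's own route (Pinsker plus \cref{prop:potential function upper bound kl}, the $k$-ary pinning lemma \cref{lem:pinning lemma for induction}, Cauchy--Schwarz, then \cref{lem:expected number of queries,lem:expected round bound}); the only deviation is your uniform $D$-ary tree of height $2h$ with a single $\RS{}$ call at the root, which is harmless for the expectation bounds. The genuine gap is the ``Moreover'' clause, i.e.\ the high-probability bound, which is the distinctive content of this theorem and which you yourself flag as an unresolved obstacle rather than prove. Concretely, with your tree the only deterministic handle available is \cref{lem:worst case bound on number of rounds} applied to the single root call, which gives $T \leq O\parens[\big]{(1+\log_{1+\rho}Q)^{2h}\, N D h}$ with $N=\Theta(n)$ and exponent $2h$; after a Markov cap on $Q$ this is of order $n^{1+1/(2h)}\log^{2h}(n/\epsilon)$, far above the claimed $n^{1/2+1/(2h)}(4h)^h\log^h(n/\epsilon)$. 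The proposed refinement---weaving the log-of-geometric tails of the batch counts, the rarity of deep all-rejected chains, and union bounds over batches of random size, nested $2h$ levels deep where each batch contributes a \emph{maximum} over its parallel calls---is precisely the analysis that is intractable in this form, and no argument is given for it.

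The paper avoids this difficulty structurally rather than analytically: the fallback tree is changed so that the root has $k=\sqrt{n}$ children, each the root of a $D$-ary subtree of height $h$ with $\sqrt{n}$ leaves, and the algorithm runs $\fallback{}$ (not $\RS{}$) at the root. The total round count is then a sum $T=\sum_{i=1}^{k}T^{(i)}$ of blocks that are conditionally independent given the prefix $Z_{<S^{(i)}}$. A Markov bound plus a union bound over the $k$ children caps every block's query count with failure probability $\epsilon/2$; under that event, \cref{lem:worst case bound on number of rounds} applied \emph{per child subtree} (with $N^{(i)}=\sqrt{n}$ and height $h$, which is where the $\sqrt{n}$ factor and the exponent $h$ come from) caps each block deterministically by $C_1=O\parens[\big]{n^{1/2+1/(2h)}(4h\log(n/\epsilon))^h}$; and an induction over windows of length $3\max\set{C_1,C_2}$ in the style of \cite[Theorem 28]{AGR24}, using the conditional independence of successive children, converts the expected-round bound $C_2$ into an exponential tail, which yields the final $\log(1/\epsilon)$ factor. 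Without this restructuring of the tree (or an equivalent device producing a sum of boundedly many conditionally independent, individually worst-case-bounded blocks), your sketch does not establish the stated tail bound.
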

\begin{remark}\label{remark:autoregression with high probability parameter choice}
    Different choice of the parameter $h$ yields different trade-offs. Take $h = O(1),$ then  with probability $\geq 1-\epsilon$, the number of rounds is $O (n^{\frac{1}{2} + \frac{1}{2h} } \poly\log\frac{n}{\epsilon}).$ For $h = \sqrt{\log n}$ and $ \epsilon  = \omega(2^{-2^{\sqrt{\log n}}})$\footnote{This allows $\epsilon$ to decay superpolynomially in $n$}, the number of rounds is $O(n^{1/2+o(1)})$ with probability $1-\epsilon.$ 
\end{remark}
\begin{remark}\label{remark:density vs. sampling query}
    Since each sampling query can be implemented in one round and $q$ density queries by evaluating $\pi_i (x_i |X_S =x_S) $ for all $[x_i]\in [q]$ in parallel, any algorithm using both sampling and density query (e.g. those in \cref{thm:autoregression main detail,thm:autoregression with high probability}) can be converted into those using only density queries (which exactly corresponds to the coordinate denoiser oracle) with the same number of rounds, where the number of queries increased by a factor of $q.$ 
\end{remark}
\begin{remark}\label{remark:implementation details coordinate denoiser}
We verify \cref{assumption:formal} and discuss the implementation of our algorithm in the PRAM model. We recall the definitions of $\nu_S\parens{\cdot \given Z_{<S}},\mu_S\parens{\cdot \given Z_{<S}} $ from \cref{sec:spec-scheme-coordinate}. Given access to the density and sampling query as defined in the beginning of this section, we can sample from $ \nu_S\parens{\cdot \given Z_{<S}},$ and compute the probabilities $ \nu_S\parens{Z_S \given Z_{<S}}, \mu_S\parens{\cdot \given Z_{<S}}$ in $1$ round, using $|S|$ density and sampling queries, $O(\log |S|) = O(\log n)$ parallel time\footnote{Since $\nu_S\parens{\cdot \given Z_{<S}}$ and $ \mu_S\parens{\cdot \given Z_{<S}}$ are the products of $|S|$ conditional marginals}, and $O(|S|)$ operations.

Finally, we show how to implement the sampling oracle given the coordinate denoiser oracle (which is exactly the density oracle).
     To sample from $\pi_i (x_i |X_S =x_S) ,$ we evaluate the $q$ conditional marginals $\hat{p}_i := \pi_i (x_i |X_S =x_S) $ for all $x_i \in [q]$ in parallel. We then construct the cumulative distribution function of $\pi_i (\cdot |X_S =x_S) $ i.e. the array $[ \hat{p}_1, \hat{p}_1+\hat{p}_2, \cdots, \hat{p}_1 + \cdots + \hat{p}_q].$ Next, we draw $\hat{u} \sim \Uniform{[0,1]},$ and output the unique $r\in [q]$  s.t. $ \sum_{j=1}^{r-1} \hat{p}_j  \leq  \hat{u} < \sum_{j=1}^{r} \hat{p}_j. $ The procedure takes $O(1)$ rounds and $O(q)$ conditional marginal evaluations.  In the PRAM model, this step can be implemented in $O(\log q)$ parallel time and $O(q)$ operations by (1) computing each entry in $[ \hat{p}_1, \hat{p}_1+\hat{p}_2, \cdots, \hat{p}_1 + \cdots + \hat{p}_q]$ in $O(\log q)$ parallel time and (2) search for $j$ in the sorted array $[ \hat{p}_1, \hat{p}_1+\hat{p}_2, \cdots, \hat{p}_1 + \cdots + \hat{p}_q]$ of length $q$ in $O(\log q)$ time. This can be done by constructing a binary tree with $q$ leaves where the leaves correspond to $ \hat{p}_1, \cdots, \hat{p}_q$ and each node stores the sum of the leaves in its subtree. The tree is built bottom-up, and the search is performed by traversing from the root to leaves: given a value $\hat{u}\in [0,1],$ starting from the root, recursively move to the left child if $\hat{u}$ is strictly less than the value stored there, and to the right child otherwise, and output the index of the leaf that the algorithm arrives at. 

     Hence, \cref{thm:autoregression main detail} implies a sampling algorithm for $\pi:[q]^n \to\R_{\geq 0}$ that,  in expectation, uses $O(\sqrt{n \log q} \log^3 n)$ rounds, $O(q n \log n)$ coordinate denoiser queries, $O(\sqrt{n \log^3 q} \log^4 n)$ parallel time and $O(q n \log n)$ total operations in the PRAM model. Similar statements can be made for \cref{thm:autoregression with high probability}.
\end{remark}
Our results also apply to the case when the oracle queries are approximate, as discussed in \cref{remark:approximate autoregression oracle}.

\begin{proof}[Proof of \cref{thm:autoregression main detail}]
W.l.o.g. assume $n\geq 2$ and is a power of $2;$ otherwise, we can replace $n$ with $n':= 2^{\lceil\log n\rceil} ,$ which satisfies $n\leq n'\leq 2n,$ by appending fixed entries to vectors in $[q]^n$, defining a new distribution $\pi':[q]^{n'}\to\R_{\geq 0}$ whose oracle queries reduce to those of $\pi$.

    Let the fallback tree be the full binary tree with $n$ leaf nodes and height $ h=  \log n \in \N_{\geq 1}.$ 
    
    Our algorithm samples a uniformly random permutation $\sigma$ of $[n],$ then runs $\RS{}$ at the root of the fallback tree (assuming density and sampling query access), with parameter $\rho = 1/h.$ For the verification of \cref{assumption:formal}, see \cref{remark:implementation details coordinate denoiser}. Let $Q,T$ be the number of queries and rounds of $\RS{}$ at the root of the fallback tree. By \cref{lem:expected number of queries}, the expected queries complexity is bounded by \[\E_{\sigma}{\E{Q}}\leq O(n\log n).\]

    To bound the expected number of rounds, we need the following inequality, that for any level $\ell, 0\leq \ell \leq h-1$:
    \begin{equation}\label{eq:main potential bound for autoregression}
        \E*_{\sigma}{ \E*_{Z\sim \mu} {\sum_{S_\ell \in \mathcal{C}_\ell (\textbf{root})}  \dTV*{\mu_{S_\ell}\parens*{\cdot\given Z_{<S_\ell}},\nu_{S_\ell}\parens*{\cdot\given Z_{<S_\ell}}}}}  \leq O(n\sqrt{\log q}).
    \end{equation}
    Given \cref{eq:main potential bound for autoregression},
    by \cref{lem:expected round bound}, the expected number of rounds is bounded by:
    \begin{align*}
             \E_{\sigma}{\E{T}} &\leq 1 + 2  \sum_{\ell = 0}^{h-1} (1+\rho)^{\ell}  \left( (2 +\frac{\log n + 1}{\log(1+\rho) }) \E*_{\sigma}{\E*_{ Z\sim \mu} {\sum_{S_\ell \in \mathcal{C}_\ell (\textbf{root})}  \dTV*{\mu_{S_\ell}\parens*{\cdot\given Z_{<S_\ell}},\nu_{S_\ell}\parens*{\cdot\given Z_{<S_\ell}}}}} + \frac{1}{e\log(1+\rho)} \right)\\
             &\leq O(\sqrt{n \log q }\log^3 n)
         \end{align*}
         Finally, we prove \cref{eq:main potential bound for autoregression}.  We will use \cref{lem:pinning lemma for induction}.
        Fix level $\ell , 0
        \leq \ell \leq h-1.$
         Let $S_1,\cdots, S_k$ be all the level-$\ell$ descendants of the root in $\mathcal{C}_\ell(\textbf{root}),$ where $\forall r \in [k]: S_r = \set{\tilde{n}(r-1) +1, \cdots, \tilde{n}}$ and $ \tilde{n} = \frac{n}{k}\in \N_{\geq 1}.$
         
         Consider $(X_1, \cdots, X_n) \sim \pi.$
          As in \cref{def:potential function}, for $r\in [k],$ let \[\phi (X_{\sigma(S_r) } | X_{\sigma(S_1) \cup \cdots \cup \sigma(S_{r-1}) }) =\sum_{i\in \tilde{\sigma}(S_r)} \I*{ (X_i |  X_{\tilde{\sigma}(S_1) \cup \cdots \cup \tilde{\sigma}(S_{r-1}) } ) ;  (X_{\sigma(S_r) \setminus \set{i} } |  X_{\sigma(S_1) \cup \cdots \cup \sigma(S_{r-1}) } )  }.\] 
          
         For any $r\in [k]$, by Pinsker's inequality, the definition of $\mu_{S_r}$ and $\nu_{S_r},$ and \cref{prop:potential function upper bound kl},  we have: \[\dTV* {\mu_{S_r}\parens*{\cdot\given Z_{<S_r}}, \nu_{S_r}\parens*{\cdot\given Z_{<S_r}}} \leq \sqrt{\DKL*{\mu_{S_r}\parens*{\cdot\given Z_{<S_r}} || \nu_{S_r}\parens*{\cdot\given Z_{<S_r}}} } \leq \sqrt{\phi (X_{\sigma(S_r) } | X_{\sigma(S_1) \cup \cdots \cup \sigma(S_{r-1}) })}\] 
         thus
         \begin{align*}
           \E*_{\sigma}{ \E*_{Z\sim \mu} {\sum_{S_\ell \in \mathcal{C}_\ell (\textbf{root})}  \dTV*{\mu_{S_\ell}\parens*{\cdot\given Z_{<S_\ell}},\nu_{S_\ell}\parens*{\cdot\given Z_{<S_\ell}}}}}&= \E*_\sigma{\sum_{r=1}^k \dTV* {\mu_{S_r}\parens*{\cdot\given Z_{<S_r}}, \nu_{S_r}\parens*{\cdot\given Z_{<S_r}}}   }\\
           &\leq \E*_{\sigma} {\sum_{r=1}^k\sqrt{\phi (X_{\sigma(S_r) } | X_{\sigma(S_1) \cup \cdots \cup \sigma(S_{r-1}) })  }}\\
             &\leq_{(1)} \sqrt{k  \E*_{\sigma} {\sum_{r=1}^k\phi (X_{\sigma(S_r) } | X_{\sigma(S_1) \cup \cdots \cup \sigma(S_{r-1}) })  } }\\
             &\leq_{(2)} \sqrt{\phi(X_{[n]})} \\
             &\leq_{(3)} \sqrt{n \log q}
         \end{align*} 
    where (1) is by Holder's inequality, (2) is by \cref{lem:pinning lemma for induction}, and (3) is by \cref{prop:potential function trivial bound}.

\end{proof}

\begin{proof}[Proof of \cref{thm:autoregression with high probability}]
     W.l.o.g. assume $D=n^{\frac{1}{2h}}\in \N;$ otherwise, we replace $n $ with $ n': = \lceil n^{\frac{1}{2h}} \rceil^{2h} ,$ which satisfies $n\leq n' \leq ( 1+ \frac{1}{n^{1/2h}})^h n\leq (1+\frac{1}{h})^h n = O(n)$ by padding the distribution $\pi $ similar to in the proof of \cref{thm:autoregression main detail}.
    
    Consider the fallback tree with $n$ leaf nodes, where the root node has $k:= \sqrt{n} = D^h$ children, and all other nodes in the tree have $D$ children. Let $S^{(1)}, \cdots, S^{(k)}$ be the children of the root. Each of the subtree rooted at $S^{(i)}$ for $i\in \set{1,\cdots k}$ has $N^{(i)}:= k$ leaf nodes, and height $h.$ 

    Our algorithm samples a uniformly random permutation $\sigma$ of $[n],$ then runs  the $\fallback{}$ in \cref{alg:rs2} corresponds to the coordinate denoiser scheme at the root of the fallback tree, with parameter $\rho = 1/h;$ let $Q, T$ be the number of queries and rounds. For $i\in \set{1,\cdots k},$  let $T^{(i)}, Q^{(i)}$ be the number of rounds and queries used by the call to $\RS{}$ at the node $S^{(i)}.$ Clearly, $T =\sum_{i=1}^k T^{(i)}$ and  $Q = \sum_{i=1}^k Q^{(i)}. $ Using the linearity of expectation and applying \cref{lem:expected number of queries} for each $Q^{(i)}$ gives
    \[ \E{Q} = \E{\sum_{i=1}^k Q^{(i)}} = \sum_{i=1}^k \E{Q^{(i)}} \leq O\parens*{\frac{(1+ \rho)^{h+1}-1}{\rho} \cdot \sum_{i=1}^k N^{(i)}} = O(n h). \]

   Fix \( \epsilon > 0 \). For each \( i \in \{1, \ldots, k\} \), let \( \mathcal{G}^{(i)} \) be the event
   the call to $\RS{}$ at node \( S^{(i)} \) uses at most
\[
C := \frac{(1+ \rho)^{h+1} - 1}{\rho} \cdot \frac{2c n}{\epsilon}
\]
queries, where $c$ is the implicit constant appearing in the bound on the expected number of queries in \cref{lem:expected number of queries}. Let \( \mathcal{G} \) be the event that \( \mathcal{G}^{(i)} \) occurs for all \( i \in \{1, \ldots, k\} \). By \cref{lem:expected number of queries}, we have
\[
C \geq \frac{2 \mathbb{E}[Q^{(i)}] \cdot k}{\epsilon}.
\]
Then, by Markov's inequality, each event \( \mathcal{G}^{(i)} \) occurs with probability at least \( 1 - \frac{\epsilon}{2k} \). Applying the union bound, the event \( \mathcal{G} \) occurs with probability at least \( 1 - \frac{\epsilon}{2} \).

Let $\tilde{T}^{(i)}:=\1[\mathcal{G}^{(i)} ] T^{(i)} .$ This is a random variable that takes the same value as $T^{(i)} $ if the event $ \mathcal{G}^{(i)}$ occurs, and take the value $0$ otherwise. By \cref{lem:worst case bound on number of rounds}, 
\begin{equation}\label{eq:uniform bound on number of rounds under good event}
    \tilde{T}^{(i)}\leq C_1: =c_1 n^{\frac{1}{2} +\frac{1}{2h}} (4h \log \frac{n}{\epsilon})^h).
\end{equation}
where $c_1$ is the implicit constant appearing in \cref{lem:worst case bound on number of rounds}.
By \cref{lem:expected round bound} and linearity of expectation,
the expected number of rounds is bounded by:
\begin{equation}\label{eq:expected round bound in high probability}
    \E_{\sigma}{\E{T}} = \E_{\sigma}{\E{\sum_{j=1}^k T^{(j)}}} \leq C_2:= c_2  n^{\frac{1}{2} +\frac{1}{2h}} h^2 \log n \sqrt{\log q}
\end{equation}
for some positive constant $c_2.$

Let $C_3 =\max\set{C_1, C_2}.$ Next, we prove by induction that for any $r \in \set{1,\cdots, k}$ and any input $x$ to node $S^{(i)}$,
\begin{equation} \label{ineq:inductive hypothesis for high probability bound}
    \P{ \sum_{j=r}^{k}\tilde{T}^{(j)}\geq 3 m C_3  \given Z_{<S^{(r)}} =x }  \leq 2^{-m}  
    \end{equation}
    where the probability is w.r.t the sub-string $\tilde{\sigma}$ of the random permutation $\sigma$ that corresponds to the set $S^{(r)}, \cdots, S^{(k)} $ and the randomness in the algorithms at nodes $S^{(j)}$ for $j\in \set{r,\cdots, k}.$

Given \cref{ineq:inductive hypothesis for high probability bound}, the high-probability bound on the number of rounds follows, since 
\begin{equation}\label{ineq:high probability bound}
    \P{T \geq 3m C_3} \leq \P{\bar{\mathcal{G}}} +  \P{\1[\mathcal{G}] \cdot T \geq 3m C_3}  \leq \frac{\epsilon}{2} + \P{\sum_{j=1}^{k}\tilde{T}^{(j)}\geq 3m C_3} \leq  \frac{\epsilon}{2} +2^{-m} \leq \epsilon
\end{equation}
when we choose $m = \lceil\log \frac{2}{\epsilon}\rceil.$

    The proof of \cref{ineq:inductive hypothesis for high probability bound} is similar to \cite[Theorem 28]{AGR24}.

\cref{ineq:inductive hypothesis for high probability bound} is trivially true for all $r\in [k]$ when $m\leq 0.$ 
Suppose we prove \cref{ineq:inductive hypothesis for high probability bound} for all $r\in [k]$ and $m-1\geq 0.$ We prove \cref{ineq:inductive hypothesis for high probability bound}  for $r=1$ and $m;$ the proof for $r\in \set{2, \cdots k}$ is analogous.

    For $i\in \set{0,\cdots, k-1},$
 let $\mathcal{A}_i$ be the event that $\sum_{j=1}^{i}\tilde{T}^{(j)} \leq 2C_3$ but $ \sum_{j=1}^{i+1}\tilde{T}^{(j)} > 2C_3,$ Intuitively, this is the event that after $  2C_3$ rounds, the algorithm has finished the execution of \emph{only} the first $i$ children of the root.

        Observe that the algorithm hasn't terminated after $2C_3$ rounds only if at least one of the events $\mathcal{A}_1, \cdots, \mathcal{A}_{k-1}$ occurs.  
    Due to \cref{eq:uniform bound on number of rounds under good event}, $\P{\mathcal{A}_0}=0.$ 
    Note that by \cref{eq:expected round bound in high probability},
    \[\E_{\sigma}{\E{\sum_{j=1}^k\tilde{T}^{(j)}}} \leq \E_{\sigma}{\E{\sum_{j=1}^k T^{(j)}}} \leq C_2\leq C_3.\] 
    By Markov's inequality and the fact that the events $(\mathcal{A}_i)_i$ are mutually exclusive, 
    \begin{equation}\label{eq:sum of Ai in high probability bound}
        \sum_{i= 1}^{k-1} \P{\mathcal{A}_i} \leq 1/2.
    \end{equation}
Conditioned on  $\mathcal{A}_i,$
we have
\[\sum_{j=1}^{k}\tilde{T}^{(j)} = \sum_{j=1}^{i}\tilde{T}^{(j)} + \tilde{T}^{(i+1)} +\sum_{j=i+2}^{k}\tilde{T}^{(j)} \leq    2C_3+ C_3 +\sum_{j=i+2}^{k}\tilde{T}^{(j)} \]
where we bound the first term using the definition of $\mathcal{A}_i$ and the second by \cref{eq:uniform bound on number of rounds under good event}. 

Thus, conditioned on $ \mathcal{A}_j,$ $ T\geq 3m C_3 $ only if $\sum_{j=i+2}^{k} \tilde{T}^{(j)}\geq 3(m-1) C_3 $ and we will show that this happens with probability $\leq 2^{-(m-1)} $ by the induction hypothesis. 

Indeed, conditioned on an input $ \hat{x}$ of $S^{(i+2)},$ the execution of $\RS{S^{(j)}, Z_{<S^{(j)}}}$ for $j\geq i+2$ are independent of the event $\mathcal{A}_i,$  
thus
\begin{align*}
    \P{\sum_{j=i+2}^{ k} \tilde{T}^{(j)} \geq 3(m-1)C_3 \given \mathcal{A}_i } &= \sum \P{Z_{<S^{(i+2)}} = \hat{x}}\P{\sum_{j=i+2}^{ k} \tilde{T}^{(j)} \geq 3(m-1)C_3 \given Z_{<S^{(i+2)}} = \hat{x}, \mathcal{A}_i} \\
    &=\sum \P{Z_{<S^{(i+2)}} = \hat{x}}\P{\sum_{j=i+2}^{ k}\tilde{T}^{(j)} \geq 3(m-1 C_3)\given Z_{<S^{(i+2)}} = \hat{x} }\\
    &\leq 2^{-(m-1)}.
\end{align*}
Thus
\[ \P{\sum_{j=1}^{k}\tilde{T}^{(j)}\geq 3C_3 m}\leq  \sum_{i=1}^{k-1} \P{\sum_{j=i}^{k}\tilde{T}^{(j)}\geq 3(m-1)C_3 \given \mathcal{A}_i } \P{\mathcal{A}_i}\leq \sum_{i=1}^{k-1} 2^{-(m-1)} \P{\mathcal{A}_i} \leq 2^{-m},\]
where the last inequality is due to \cref{eq:sum of Ai in high probability bound}.
This finishes the proof of \cref{ineq:inductive hypothesis for high probability bound} when $ r =1.$ For $r\in \set{2,\cdots,k},$ the proof \cref{ineq:inductive hypothesis for high probability bound} is analogous: we only need to replace $\E_{\sigma}{\E{\sum_{j=1}^k\tilde{T}^{(j)}}} \leq C_3$ with $\E_{\tilde{\sigma}}{\E{\sum_{j=r}^k\tilde{T}^{(j)} | Z_{<r} = x} }\leq C_3,$ where $\tilde{\sigma}$ is the substring of the random permutation $\sigma$ corresponds to the set $S^{(r)}, \cdots, S^{(k)} .$ 
\end{proof}

\begin{remark}\label{remark:approximate autoregression oracle}
Suppose we do not have access to the exact conditional distributions $\pi_i(\cdot \mid X_S)$, but instead to distributions $p_{i,S}$ satisfying 
\begin{equation}\label{eq:approximate coordinate denoiser assumption worst case}
    \forall x_S: d_{\mathrm{TV}}\big(\pi_i(\cdot \mid x_S),\, p_{i,S} (\cdot| x_S) \big) \leq \tilde{\epsilon}_{\mathrm{TV}}.
\end{equation}
Assuming $\tilde{\epsilon}_{\mathrm{TV}}\, n \leq \epsilonTV$, oracle access to $\{p_{i,S}\}_{i,S}$ allows us to construct approximate sampling algorithm(s) for $\pi$ whose output distribution is $\epsilon_{\mathrm{TV}}$-close to $\pi$. Assuming in addition that $ \tilde{\epsilon}_{\mathrm{TV}} \leq n^{-3/2}$, these algorithms achieve the same guarantees on the number of rounds and oracle queries as stated in \cref{thm:autoregression main detail,thm:autoregression with high probability}.

Indeed, in the coordinate denoiser scheme, after sampling the random permutation $\sigma$, for each node $S =\set{a+1, \cdots, j},$
we simply replace the speculative distribution \[\nu_S(\cdot |Z_{<S}) = \bigotimes_{r=a+1}^b \pi(X_{\sigma(r)} | Z_{\sigma(1)}, \cdots, Z_{\sigma(a)} )\] with \[\hat{\nu}_S(\cdot |Z_{<S}) =\bigotimes_{r=a+1}^b p_{\sigma (r), \sigma([a])} (X_{\sigma(r)} | Z_{\sigma(1)}, \cdots, Z_{\sigma(a)} ) \] and the target distribution \[\mu_S(\cdot |Z_{<S}) = \bigotimes_{r=a+1}^b \pi(X_{\sigma(i)} | Z_{\sigma(1)}, \cdots,  Z_{\sigma(a)}, X_{\sigma(a+1)}, \cdots , X_{\sigma(i-1)} )\] with  \[ \hat{\mu}_S (\cdot|Z_{<S}) = \bigotimes_{r=a+1}^b p_{\sigma (i), \sigma([i-1])}(X_{\sigma(i)} | Z_{\sigma(1)}, \cdots,  Z_{\sigma(a)}, X_{\sigma(a+1)}, \cdots , X_{\sigma(i-1)} ) .\]

The number of queries is unaffected. To bound the number of rounds, we only need to verify that
   \begin{align*}
           &\E*_{\sigma}{ \E*_{Z\sim \hat{\mu}_{[n]}} {\sum_{S_\ell \in \mathcal{C}_\ell (\textbf{root})}  \dTV*{\hat{\mu}_{S_\ell}\parens*{\cdot\given Z_{<S_\ell}},\hat{\nu}_{S_\ell}\parens*{\cdot\given Z_{<S_\ell}}}}} 
           \leq O(\sqrt{n \log q}) 
         \end{align*}
which is true since by triangle inequality for the total variation distance, we have:
\begin{align*}
           &\E*_{\sigma}{ \E*_{Z\sim \hat{\mu}_{[n]}} {\sum_{S_\ell \in \mathcal{C}_\ell (\textbf{root})}  \dTV*{\hat{\mu}_{S_\ell}\parens*{\cdot\given Z_{<S_\ell}},\hat{\nu}_{S_\ell}\parens*{\cdot\given Z_{<S_\ell}}}}}\\ 
           \leq  &\E*_{\sigma}{ \E*_{Z\sim \mu} {\sum_{S_\ell \in \mathcal{C}_\ell (\textbf{root})}  \dTV*{\hat{\mu}_{S_\ell}\parens*{\cdot\given Z_{<S_\ell}},\hat{\nu}_{S_\ell}\parens*{\cdot\given Z_{<S_\ell}}}}} + \tilde{\epsilon}_{TV} n^2 \\
           \leq &\E*_{\sigma}{ \E*_{Z\sim \mu} {\sum_{S_\ell \in \mathcal{C}_\ell (\textbf{root})}  \dTV*{\mu_{S_\ell}\parens*{\cdot\given Z_{<S_\ell}},\nu_{S_\ell}\parens*{\cdot\given Z_{<S_\ell}}}}} + \tilde{\epsilon}_{TV} n^2 + O(\tilde{\epsilon}_{TV} n) \\
           \leq &O(\sqrt{n \log q})
         \end{align*}
where the first inequality is by a simple change of measure inequality $\E_{P} {U} \leq \E_{\tilde{P}} {U} + d_{TV}(P, \tilde{P})\sup_{u\sim P} \abs{u}$ and the facts that $d_{TV}(\mu, \hat{\mu}_{[n]}) \leq \tilde{\epsilon}_{TV} n,$ and $\sum_{S_\ell \in \mathcal{C}_\ell (\textbf{root})}  \dTV*{\hat{\mu}_{S_\ell}\parens*{\cdot\given Z_{<S_\ell}},\hat{\nu}_{S_\ell}\parens*{\cdot\given Z_{<S_\ell}}}\leq n ,$ the second is by triangle inequality, and the third inequality is by \cref{eq:main potential bound for autoregression} and the assumption $\tilde{\epsilon}_{TV} \leq n^{-3/2}.$
 \end{remark}
    \section{Parallelizing denoising diffusion}\label{sec:diffusion-models}
In this section, we instantiate and analyze $\RS{}$ with respect to the Gaussian denoiser.

We recall the stochastic localization process with respect to a distribution $\pi:\R^n \to \R_{\geq 0}$, 
\begin{equation}\label{eq:stochastic localization}
 \d{\bar{X}_t} = f(t,\bar{X}_t) \d{t} + \d{B_t},
\end{equation}
where
$f(t,x)=\E*_{Y \sim \mu, g\sim \Normal{0, tI}}{Y \given tY+g=x}$ is the Gaussian denoiser. %
Let $\pi_t(x)$ be the distribution of $U$ that is sampled from $ \pi $ conditioned on $ t U + \Normal*{0,tI} =x,$ and let $\Sigma_t(\cdot)=\text{cov}(\pi_t(\cdot)).$

A standard Ito calculus computation yields:
\[d \Sigma_t = - \Sigma_t^2 dt + \text{martingale} \quad \text{and} \quad d f(t,\cdot) = \Sigma_t dB_t \]
which implies the following identity:
\begin{equation}\label{eq:ito mean difference}
\forall t_1 \geq t_0: \E*{\norm{f(t_1 ,\bar{X}_{t_1}) - f(t_0 ,\bar{X}_{t_0})}^2 \given \bar{X}_{t_0}} =   \E*{\int_{t_0}^{t_1} \tr{\Sigma_t^2(\bar{X}_t) } dt \given \bar{X}_{t_0}} = \E*{\tr{\Sigma_{t_0} (\bar{X}_{t_0})} - \tr{\Sigma_{t_1} (\bar{X}_{t_1})} \given \bar{X}_{t_0}}  
\end{equation}

We consider the Euler discretization of \cref{eq:stochastic localization} with endpoints $t_0 \leq t_1 \cdots\leq t_N.$
For $t \in [t_r, t_{r+1})$ , let $t- = t_r$, then the Euler discretization is defined by 
\begin{equation} \label{eq:discretize sde}
    \d X_t = f(t-, X_{t-}) \d t + \d B_t.
\end{equation}

We will prove the following:
\begin{theorem}\label{thm:full denoising polylog}
Consider a distribution $\pi:\R^n \to \R_{\geq 0}$ where $\supp(\pi)\subseteq B(\E_\pi{X},R)$ i.e. $\sup_{x\in \supp(\pi)} \norm{x- \E_\pi{X}} \leq R.$ For any $\delta, \epsilonTV> 0,$ there exists an algorithm that, given oracle query access to the Gaussian denoiser $\set{f(t,\cdot)},$  outputs a sample from a distribution that is $\epsilonTV$-close to $\pi\ast\mathcal{N}(0,\delta \cdot I )$ in total variation distance. The expected number of rounds is $O(\sqrt{n} \log^4 \parens*{\frac{n R}{\epsilonTV \delta}})$ rounds and the expected number of queries is $O(\max\set{\frac{n}{\epsilonTV^2}, \frac{R^2}{\delta}, \frac{R^4}{\delta^2} }\log^2 \parens*{\frac{n R}{\epsilonTV \delta}}).$
\end{theorem}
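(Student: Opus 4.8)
The plan is to run \cref{alg:rs2}, in the concrete form of \cref{alg:main algo gaussian}, with the Gaussian denoiser speculation scheme of \cref{sec:spec-scheme-gaussian}, on the fallback tree whose root has $L=\lceil\log_2(R/\delta)\rceil+1$ children --- one per geometric epoch $[\mathbf T_i,\mathbf T_{i+1}]$ of the discretization schedule --- each epoch node being the root of a balanced binary tree with $N$ leaves. Thus the tree has $N_{\mathrm{leaves}}=NL$ leaves, height $h=1+\lceil\log_2 N\rceil$, and maximum branching factor $D=\max\{L,2\}$, and we take $\rho=1/\log_2 N$. Correctness is immediate: \cref{sec:spec-scheme-gaussian} verifies \cref{assumption:formal} for this scheme, so by \cref{prop:srs-is-correct} (and the correctness of $\RS{}$ and $\fallback{}$ proved in \cref{sec:algorithm}) the algorithm returns an exact sample $Z\sim\mu$, where $\mu=\law(\Delta X)$ is the law of the increment sequence of the Euler--Maruyama process \cref{eq:discretize sde}; hence the returned vector $\delta\sum_i Z_i=\delta X_{t_{NL}}=X_{1/\delta}/t_{NL}$ is the discretized stochastic localization at time $1/\delta$. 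A standard discretization-error estimate for stochastic localization (in the spirit of \cite{BDDD23,CDG23}), using $\tr{\cov{\pi}}=\E_\pi{\norm{X-\E_\pi{X}}^2}\le R^2$ and the pathwise bound $\opnorm{\Sigma_t}\le R^2$ coming from bounded support, shows this law is $\epsilonTV$-close to $\pi\ast\Normal*{0,\delta I}$ in total variation provided $N$ and the schedule are chosen as in \cref{alg:main algo gaussian}; this fixes the parameters. Applying \cref{lem:expected number of queries} to the $\RS{}$ call at each of the $L$ epoch subtrees (height $\log_2 N$, $N$ leaves) and summing, and using $(1+\rho)^{h+1}=(1+1/\log_2 N)^{\log_2 N+O(1)}=O(1)$, gives $\E{Q}=O(\rho^{-1}NL)=O(NL\log N)=O(\max\{n/\epsilonTV^2,R^2/\delta,R^4/\delta^2\}\log^2(nR/(\epsilonTV\delta)))$, the claimed query bound.

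The bulk of the work is the round bound. Applying \cref{lem:expected round bound} to each epoch subtree and summing --- using the tower property to replace the conditional expectations by an expectation over $Z\sim\mu$, exactly as in the proof of \cref{thm:autoregression with high probability} --- and then using $(1+\rho)^\ell\le(1+\rho)^h=O(1)$, $D=O(\log(R/\delta))$, the prefactor $2+\tfrac{\log N_{\mathrm{leaves}}+1}{\log(1+\rho)}=O(\log^2(nR/(\epsilonTV\delta)))$, and $h=O(\log N)$, the whole problem reduces to showing, for each level $\ell$,
\[
\Phi_\ell:=\E*_{Z\sim\mu}{\sum_{S_\ell\in\mathcal C_\ell(\textbf{root})}\dTV*{\mu_{S_\ell}\parens*{\cdot\given Z_{<S_\ell}},\nu_{S_\ell}\parens*{\cdot\given Z_{<S_\ell}}}}=O\parens*{\sqrt n\cdot\poly\log\parens*{\tfrac{nR}{\epsilonTV\delta}}}.
\]
For a node $S_\ell$ the distributions $\mu_{S_\ell}\parens*{\cdot\given Z_{<S_\ell}}$ and $\nu_{S_\ell}\parens*{\cdot\given Z_{<S_\ell}}$ are built from Gaussian increments with identical covariances $(\Delta t)_iI$, differing only in their means $(\Delta t)_i f(t_{i-1},X_{t_{i-1}})$ versus $(\Delta t)_i f(t_{a},X_{t_{a}})$ with $a=a(S_\ell)$ the step just before $S_\ell$; since the Euler increment $g_i$ is independent of $X_{t_{i-1}}$, a direct Gaussian computation gives $\E*_{Z\sim\mu}{\sum_{S_\ell}\DKL*{\mu_{S_\ell}\parens*{\cdot\given Z_{<S_\ell}}\river\nu_{S_\ell}\parens*{\cdot\given Z_{<S_\ell}}}}=\sum_{i}\tfrac{(\Delta t)_i}{2}\E*_{Z\sim\mu}{\norm{f(t_{i-1},X_{t_{i-1}})-f(t_{a(i)},X_{t_{a(i)}})}^2}$, where $a(i)$ is the start of the level-$\ell$ node containing $i$. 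By Pinsker's inequality and then Cauchy--Schwarz, $\Phi_\ell\le\sqrt{C_\ell(\textbf{root})}\cdot\bigl(\tfrac12\E*_{Z\sim\mu}{\sum_{S_\ell}\DKL*{\mu_{S_\ell}\river\nu_{S_\ell}}}\bigr)^{1/2}$, so it remains to control this mean-increment sum.

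Here I invoke \cref{prop:mean difference bound via Ito formula}, the discretized analogue of the Ito identity \cref{eq:ito mean difference}: it bounds $\E*_{Z\sim\mu}{\norm{f(t_{i-1},X_{t_{i-1}})-f(t_{a},X_{t_{a}})}^2}$ by (essentially) $\E{\tr{\Sigma_{t_{a}}}-\tr{\Sigma_{t_{i-1}}}}$, up to discretization corrections that the schedule in \cref{alg:main algo gaussian} is designed to keep lower order. Within each epoch the level-$\ell$ nodes partition the steps into contiguous blocks and $(\Delta t)_i$ is constant there, so the trace terms telescope to $\E{\tr{\Sigma_{\mathbf T_i}}-\tr{\Sigma_{\mathbf T_{i+1}}}}$ per epoch; feeding in $\E{\tr{\Sigma_t}}\le\min\{R^2,\,n/t\}$ --- the second inequality being the Riccati consequence $\E{\Sigma_t}\preceq\tfrac1tI$ of $\d{\Sigma_t}=-\Sigma_t^2\,\d{t}+\text{martingale}$ --- and summing the geometric series over epochs yields $\E*_{Z\sim\mu}{\sum_{S_\ell}\DKL*{\mu_{S_\ell}\river\nu_{S_\ell}}}=O(n\,\poly\log(nR/(\epsilonTV\delta))/\max\{1,2^{\ell-1}\})$, while $C_\ell(\textbf{root})=O(L\max\{1,2^{\ell-1}\})$. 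Multiplying and using $L=O(\log(R/\delta))$ gives $\Phi_\ell=O(\sqrt n\,\poly\log(nR/(\epsilonTV\delta)))$ uniformly in $\ell$ (with $\Phi_0\le1$ trivially), and a routine accounting of the accumulated logarithmic factors brings the expected round complexity to $O(\sqrt n\log^4\parens*{\tfrac{nR}{\epsilonTV\delta}})$. The main obstacle is precisely this step: transferring the clean Ito identity of continuous stochastic localization to the Euler--Maruyama discretization --- i.e., establishing \cref{prop:mean difference bound via Ito formula} with discretization errors small enough not to spoil the telescoping --- while \emph{simultaneously} choosing the epoch schedule so that the same $N$ and $L$ deliver both the $\epsilonTV$ discretization accuracy at the top and the $\sqrt n$ behavior of every $\Phi_\ell$. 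The operator-norm control $\Sigma_t\preceq\tfrac1tI$ is exactly what turns what would otherwise be an $O(n)$ bound on the mean-increment sums into the $O(\sqrt n)$ one needed for the speedup, and the bounded-support hypothesis $\supp(\pi)\subseteq B(\E_\pi{X},R)$ enters both there (via $\tr{\cov{\pi}}\le R^2$ and $\opnorm{\Sigma_t}\le R^2$) and in the discretization estimate.
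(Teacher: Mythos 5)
Your overall architecture matches the paper's: the same geometric epoch schedule $\mathbf{T}_i=R^{-1}2^{i-1}$ with $N$ uniform steps per epoch, the same fallback tree ($L$ epoch children, each a balanced binary subtree), the query bound via \cref{lem:expected number of queries}, the reduction of the round bound via \cref{lem:expected round bound} to the per-level sums $\Phi_\ell$ of expected TV distances, Pinsker plus \cref{lem:direct kl divergence bound} to turn these into mean-increment sums, and the Riccati bound $\E{\Sigma_t}\preceq \tfrac{1}{t}I$ feeding a telescoping trace estimate that yields $\Phi_\ell=O(L\sqrt n)$. (Your global Cauchy--Schwarz with $\sqrt{C_\ell(\textbf{root})}$ is a harmless variant of the paper's per-epoch H\"older applied to square roots of conditional expectations.)

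However, there is a genuine gap at exactly the step you yourself flag as ``the main obstacle'' and then do not carry out. The KL expression produced by \cref{lem:direct kl divergence bound} is an expectation over the \emph{Euler--Maruyama} process $X$ of \cref{eq:discretize sde}, while \cref{prop:mean difference bound via Ito formula} and the Ito identity \cref{eq:ito mean difference} hold only for the \emph{continuous} stochastic localization process $\bar X$ of \cref{eq:stochastic localization}; for the discretized process the martingale/telescoping structure of $f$ and $\Sigma_t$ is simply not available, so invoking \cref{prop:mean difference bound via Ito formula} ``up to discretization corrections'' is not a proof. The paper bridges this with \cref{lem:change of measure for bounded support} and the corresponding manipulations in the proofs of \cref{thm:parallelizing diffusion using only score function,thm:full denoising polylog}: couple $X$ and $\bar X$, split $\norm{f(t_i,X_{t_i})-f(t_k,x_{t_k})}^2$ by the triangle inequality, control the cross term through the bounded-support Lipschitz-in-TV property $\norm{f(t,x)-f(t,y)}\le R\, d_{TV}(\pi_t(x),\pi_t(y))$ together with data processing and a Girsanov/Pinsker bound on the TV between the conditioned discretized and continuous path measures, and transfer the outer expectation from $X$ to $\bar X$ via $\E_{P}{U}\le \E_{\tilde P}{U}+d_{TV}(P,\tilde P)\sup\abs{U}$, which requires the pointwise bound $\sum_i E_i\le O(R^2\Delta)$ (itself a consequence of bounded support). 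This is precisely where the conditions $N\gtrsim R^2/\delta$ and $N\gtrsim R^4/\delta^2$ in the theorem's query bound come from, and where bounded support is used in full strength (not merely through $\tr{\cov{\pi}}\le R^2$ and $\opnorm{\Sigma_t}\le R^2$, as your sketch suggests). Without an argument of this kind your telescoping bound on the level-$\ell$ KL sums, and hence the claimed $O(\sqrt n\,\mathrm{polylog})$ round complexity, is not established; the top-level $\epsilonTV$-accuracy claim is likewise asserted by citation rather than derived, though that part is indeed standard (the paper gets it from \cref{prop:mean difference bound via Ito formula} plus the Girsanov bound with $N\ge nL/\epsilonTV^2$).
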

\begin{remark}\label{remark:implementation details gaussian denoiser}
We verify \cref{assumption:formal} and discuss the implementation of our algorithm in the PRAM model. We recall the definitions of $\nu_S\parens{\cdot \given Z_{<S}},\mu_S\parens{\cdot \given Z_{<S}} $ from \cref{sec:spec-scheme-coordinate}. Given access to the density and sampling query as defined in the beginning of this section, we can sample from $ \nu_S\parens{\cdot \given Z_{<S}},$ and compute the probabilities $ \nu_S\parens{Z_S \given Z_{<S}}, \mu_S\parens{\cdot \given Z_{<S}}$ in $1$ round, using $O(|S|)$ Gaussian denoiser queries, $O(\log |S|) = O(\log n)$ parallel time, and $O(|S|)$ operations in the PRAM model\footnote{We make a minor modification to \cref{alg:main algo gaussian} so that $\RS{S=\set{a+1,\ldots,b}, Z_{<S}}$ takes $Z_1 + \cdots + Z_a$ as an additional input and returns $Z_1 + \cdots + Z_b$ as an additional output.}, under the convention that adding two \(n\)-dimensional vectors counts as one operation (rather than \(O(n)\)). Thus, \cref{thm:full denoising polylog} can be implemented in the PRAM model where the parallel runtime matches the round complexity up to poly-logarithmic factors and the number of operations matches the query complexity. 
\end{remark}
The following lemma bounds the distance between the target and speculative distributions in the Gaussian denoiser scheme.
\begin{lemma}\label{lem:direct kl divergence bound}
Fix $k < j.$ 
    Let $ (X_t)_{t_k\leq t \leq t_j}$ be the process defined by the \cref{eq:discretize sde} conditioned on $X_{t_k} = x_{t_k}$. Let $(\tilde{X}_t)_t$ evolve according to the constant-drift SDE with drift $f(t_k, x_{t_k}).$ Let $\delta_i = t_{i+1} -t_i.$
    Then
    \begin{align*}
        \DKL*{ X_{t_k},\cdots, X_{t_j} \river \tilde{X}_{t_k},\cdots, \tilde{X}_{t_j}} \leq \frac{1}{2}\E*{\sum_{i=k}^{j-1} \delta_i \norm*{f(t_i,X_{t_i}) - f(t_k,x_{t_k}) }^2 } 
    \end{align*}
\end{lemma}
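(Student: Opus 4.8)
The plan is to prove this by tensorization (the chain rule for KL divergence) along the time grid, exploiting the Markov structure of both processes and the fact that their one‑step transitions are Gaussians with a common covariance. First I would record the transition kernels. Conditioned on $X_{t_k}=x_{t_k}$, the Euler‑discretized process of \cref{eq:discretize sde} is a Markov chain whose step from $t_i$ to $t_{i+1}$ satisfies $X_{t_{i+1}}\mid X_{t_i}\sim\Normal*{X_{t_i}+\delta_i f(t_i,X_{t_i}),\,\delta_i I}$, since $B_{t_{i+1}}-B_{t_i}\sim\Normal*{0,\delta_i I}$. The constant‑drift process $(\tilde X_t)$ started at $\tilde X_{t_k}=x_{t_k}$ is also Markov, with $\tilde X_{t_{i+1}}\mid\tilde X_{t_i}\sim\Normal*{\tilde X_{t_i}+\delta_i f(t_k,x_{t_k}),\,\delta_i I}$; in fact its increments are independent, and the continuous constant‑drift SDE and its Euler discretization agree exactly at the grid points $t_k,\dots,t_j$. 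Both chains are deterministically equal to $x_{t_k}$ at time $t_k$, so that coordinate contributes nothing.

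Next I would apply the chain rule for KL divergence along the coordinates $X_{t_k},\dots,X_{t_j}$: writing $P,Q$ for the laws of $(X_{t_k},\dots,X_{t_j})$ and $(\tilde X_{t_k},\dots,\tilde X_{t_j})$, and using the Markov property to replace conditioning on the full past by conditioning on the previous state,
\begin{equation*}
\DKL*{P\river Q}=\sum_{i=k}^{j-1}\E*{\DKL*{\Normal*{X_{t_i}+\delta_i f(t_i,X_{t_i}),\,\delta_i I}\river \Normal*{X_{t_i}+\delta_i f(t_k,x_{t_k}),\,\delta_i I}}},
\end{equation*}
where the outer expectation is over the law of $X_{t_i}$ under the true discretized process. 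The two one‑step kernels are mutually absolutely continuous (equal covariance), so every term is finite and the chain rule is valid. Then I would plug in the closed form for the KL divergence between two Gaussians sharing covariance $\Sigma=\delta_i I$ with means differing by $m=\delta_i\parens*{f(t_i,X_{t_i})-f(t_k,x_{t_k})}$, namely $\tfrac12 m^\top\Sigma^{-1}m=\tfrac{\delta_i}{2}\norm*{f(t_i,X_{t_i})-f(t_k,x_{t_k})}^2$. Summing over $i$ gives
\begin{equation*}
\DKL*{X_{t_k},\dots,X_{t_j}\river \tilde X_{t_k},\dots,\tilde X_{t_j}}=\frac12\,\E*{\sum_{i=k}^{j-1}\delta_i\norm*{f(t_i,X_{t_i})-f(t_k,x_{t_k})}^2},
\end{equation*}
which is in fact an equality and in particular implies the stated inequality.

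There is essentially no deep obstacle here; the computation is routine once the setup is pinned down. The only points that need (minor) care are: correctly identifying the two transition kernels and confirming the implicit hypothesis $\tilde X_{t_k}=x_{t_k}$ (otherwise the joint laws are mutually singular and the KL is infinite); verifying mutual absolute continuity so the chain rule is meaningful; and — the one place it is easy to slip — making sure the expectation in the final bound is taken with respect to the law of the true process $X$, not the speculative process $\tilde X$, which is exactly how it comes out of the chain rule since the outer averages are over the "$P$" side.
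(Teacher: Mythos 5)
Your proof is correct and follows essentially the same route as the paper: both decompose the KL along the discretization steps using the Gaussian one-step transitions and obtain equality with the stated bound. The only cosmetic difference is packaging — you invoke the KL chain rule plus the closed-form Gaussian KL with shared covariance, while the paper expands the log-likelihood ratio directly and kills the cross term using the zero-mean Gaussian increments; these are the same computation.
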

\begin{proof}
   Note that, conditioned on $X_{t_i}=x_{t_i}, $ $X_{t_{i+1}}$ is distributed according to $ x_{t_i} + \delta_i f(t_i,x_{t_i}) + E_i$ where $E_i\sim \mathcal{N}(0, \delta_i)$ and $E_i$ is independent of $ x_{t_i}.$ Thus, letting $X = (X_{t_k},\cdots, X_{t_j})$ and $\tilde{X} = (\tilde{X}_{t_k},\cdots, \tilde{X}_{t_j})$ , we can write
\begin{equation}
\begin{split}
    \DKL*{X\river \tilde{X} } &= \E*{-\sum_{i=k}^{j-1} \frac{1}{2\delta_i} \parens*{\norm{X_{t_{i+1}} - X_{t_i} - \delta_i f(t_i,X_{t_i})}^2   
  -  \norm{X_{t_{i+1}} - X_{t_i} - \delta_i f(t_k,x_{t_k})}^2  } } \\
  &= \E*{-\sum_{i=k}^{j-1} \langle X_{t_{i+1}} - X_{t_i} - \delta_i f(t_i,X_{t_i}), f(t_i,X_{t_i}) -f(t_k,x_{t_k}) \rangle + \frac{1}{2}\sum_i \delta_i \norm{f(t_i,X_{t_i}) - f(t_k,x_{t_k}) }^2 }\\
  &=\frac{1}{2}\E*{\sum_{i=k}^{j-1} \delta_i \norm{f(t_i,X_{t_i}) - f(t_k,x_{t_k}) }^2 }
\end{split}
\end{equation}
\end{proof}
In an idealized setting, when the discretized SDE \eqref{eq:discretize sde} faithfully follows the stochastic localization SDE \eqref{eq:stochastic localization}, the total distance between the target and speculative distributions across all nodes of the fallback tree can be bounded via the following proposition.
\begin{proposition}\label{prop:mean difference bound via Ito formula}
    Let $ (\bar{X}_t)_{t}$ be the process defined by the stochastic localization SDE \eqref{eq:stochastic localization}.
    Fix $t_0 < t_1 < \cdots < t_N .$
    Let $\delta_j = t_{j+1}-t_j.$
    Then
    \begin{align*}
        \E*{\sum_{i=k}^{j-1}  \delta_i \norm{f(t_i,\bar{X}_{t_i}) - f(t_k,x_{t_k}) }^2 \given  \bar{X}_{t_k} = x_{t_k} } 
        &\leq (t_{j}- t_k) 
       \E*{ 
  \int_{t_k}^{t_j}\tr{\Sigma_{s}^2 (\bar{X}_s)} \d s \given \bar{X}_{t_k} = x_{t_k}}\\
  &= (t_j - t_k)
 \E*{\tr{\Sigma_{t_k}(x_{t_k})} -\tr{ \Sigma_{t_j} (\bar{X}_{t_j})}    \given \bar{X}_{t_k} = x_{t_k}}
    \end{align*}
    Consider a subsequence $k_0 < k_1 <\cdots < k_M = t_N $ of the sequence $t_0 < t_1 <\cdots < t_N ,$ where $k_0 = t_0$ and $k_M=t_N.$ 
    For $i\in\set{0,\cdots, M-1},$ let $H_i:=\set{j\in \set{0,\cdots, N } \given k_{i}\leq t_j \leq k_{i+1} -1}.$ We have:
    \[ \E*{\sum_{i=0}^{M-1} \sqrt{\E*{\sum_{j\in H_i} \delta_j\norm{f(t_j,\bar{X}_{t_j}) - f(k_i,\bar{X}_{{k_i}})}^2 \given \bar{X}_{{k_i}} } } \given \bar{X}_{t_0} = x_{t_0} } \leq \sqrt{(t_N-t_0) \cdot \tr{\Sigma_{t_0} (x_{t_0})}} \]
\end{proposition}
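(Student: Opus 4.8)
Both displayed inequalities follow fairly mechanically from the It\^o identity \eqref{eq:ito mean difference}, and I would prove them in turn.

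For the first inequality, I would apply \eqref{eq:ito mean difference} termwise, with the pair $(t_k,t_i)$ playing the role of $(t_0,t_1)$, to get for each $i$
\[
\E*{\norm{f(t_i,\bar{X}_{t_i}) - f(t_k,x_{t_k})}^2 \given \bar{X}_{t_k}=x_{t_k}} = \E*{\int_{t_k}^{t_i}\tr{\Sigma_s^2(\bar{X}_s)}\,\d s \given \bar{X}_{t_k}=x_{t_k}}.
\]
Since $\tr{\Sigma_s^2(\cdot)}\ge 0$ and $t_i\le t_j$, one may enlarge the integral to $\int_{t_k}^{t_j}$, then multiply by $\delta_i$ and sum over $i=k,\dots,j-1$, using $\sum_{i=k}^{j-1}\delta_i = t_j-t_k$; the final equality with $\tr{\Sigma_{t_k}(x_{t_k})}-\tr{\Sigma_{t_j}(\bar{X}_{t_j})}$ is just the second half of \eqref{eq:ito mean difference} (equivalently, $d\Sigma_t = -\Sigma_t^2\,dt + \text{martingale}$). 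This step is bookkeeping only.

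For the second inequality, write $\Delta_i := k_{i+1}-k_i$ and $G_i := \E*{\tr{\Sigma_{k_i}(\bar{X}_{k_i})} - \tr{\Sigma_{k_{i+1}}(\bar{X}_{k_{i+1}})} \given \bar{X}_{k_i}}$, and note $G_i$ is $\bar{X}_{k_i}$-measurable and nonnegative, since by \eqref{eq:ito mean difference} it equals $\E*{\int_{k_i}^{k_{i+1}}\tr{\Sigma_s^2(\bar{X}_s)}\,\d s \given \bar{X}_{k_i}}$. Applying the first inequality to the block $H_i$, i.e.\ with $(t_k,t_j)=(k_i,k_{i+1})$, gives $\E*{\sum_{j\in H_i}\delta_j\norm{f(t_j,\bar{X}_{t_j})-f(k_i,\bar{X}_{k_i})}^2 \given \bar{X}_{k_i}} \le \Delta_i G_i$, so the $i$-th summand on the left-hand side is at most $\sqrt{\Delta_i}\,\sqrt{G_i}$. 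It then remains to show
\[
\E*{\sum_{i=0}^{M-1}\sqrt{\Delta_i}\,\sqrt{G_i} \given \bar{X}_{t_0}=x_{t_0}} \le \sqrt{(t_N-t_0)\,\tr{\Sigma_{t_0}(x_{t_0})}},
\]
which I would get from three soft moves: (i) a realization-wise Cauchy--Schwarz, $\sum_i \sqrt{\Delta_i}\sqrt{G_i}\le\bigl(\sum_i\Delta_i\bigr)^{1/2}\bigl(\sum_i G_i\bigr)^{1/2}=\sqrt{t_N-t_0}\,\bigl(\sum_i G_i\bigr)^{1/2}$, using $\sum_i\Delta_i=k_M-k_0=t_N-t_0$; (ii) Jensen's inequality to pull the conditional expectation inside the square root; and (iii) the tower property together with the Markov property of $(\bar{X}_t)$ and $k_0=t_0$, which turns $\E*{G_i \given \bar{X}_{t_0}=x_{t_0}}$ into $\E*{\tr{\Sigma_{k_i}(\bar{X}_{k_i})}-\tr{\Sigma_{k_{i+1}}(\bar{X}_{k_{i+1}})}\given \bar{X}_{t_0}=x_{t_0}}$, so that $\sum_i$ telescopes to $\tr{\Sigma_{t_0}(x_{t_0})}-\E*{\tr{\Sigma_{t_N}(\bar{X}_{t_N})}\given x_{t_0}}\le\tr{\Sigma_{t_0}(x_{t_0})}$, the last inequality because $\Sigma_{t_N}\succeq 0$.

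The step needing the most care — and the closest thing to an obstacle — is the bookkeeping of the nested conditionings: the inner expectations condition on $\bar{X}_{k_i}$ while the outer one conditions on $\bar{X}_{t_0}$, and the telescoping of $\sum_i G_i$ hinges on $k_0=t_0$ and the tower/Markov properties so that the per-block increments of $t\mapsto\tr{\Sigma_t(\bar{X}_t)}$ compose correctly. Everything else is a one-line application of \eqref{eq:ito mean difference}, Cauchy--Schwarz, or Jensen.
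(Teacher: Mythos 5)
Your proposal is correct and follows essentially the same route as the paper: the first bound comes from applying the It\^o identity \eqref{eq:ito mean difference} termwise and summing (your enlargement of each integral to $\int_{t_k}^{t_j}$ is a slightly cruder bookkeeping than the paper's sum--integral interchange, but lands on the same bound), and the second follows from the blockwise bound plus Cauchy--Schwarz, Jensen, the tower/Markov step, and the telescoping of $\tr\Sigma$, exactly as in the paper.
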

 \begin{proof}
By \cref{eq:ito mean difference}, we have:
     \begin{align*}
    \E*{\sum_{i=k}^{j-1}  \delta_i \norm{f(t_i,\bar{X}_{t_i}) - f(t_k,x_{t_k}) }^2 }  &= \E*{\sum_{i=k}^{j-1} \delta_i \int_{t_k}^{t_i}\tr {\Sigma_{s}^2 (\bar{X}_s)} \d s \given \bar{X}_{t_k} = x_{t_k}}\\ 
    &= \E*{\int_{t_k}^{t_{j-1}} \parens*{\sum_{i:t_i \geq s} \delta_i }\tr{\Sigma_{s}^2 (\bar{X}_s)} \d s \given \bar{X}_{t_k} = x_{t_k} }\\
    &\leq (t_j - t_k)\E*{ 
\int_{t_k}^{t_j}\tr{\Sigma_{s}^2 (\bar{X}_s)} \d s \given \bar{X}_{t_k} = x_{t_k}} \\
 &=  (t_j - t_k)\E*{ 
 \tr{\Sigma_{t_k}(x_{t_k})} -\tr{\Sigma_{t_j} (\bar{X}_{t_j})}   \given \bar{X}_{t_k} = x_{t_k}}
\end{align*}
The second statement is obtained by combining the first, Holder's inequality, and Jensen's inequality. 
Indeed,
\begin{align*}
    &\E*{\sum_{i=0}^{M-1} \sqrt{\E*{\sum_{j\in H_i} \delta_j\norm{f(t_j,\bar{X}_{t_j}) - f(k_i,\bar{X}_{{k_i}})}^2 \given \bar{X}_{{k_i}} } } \given X_{t_0} = x_{t_0} }\\
    \leq &\E*{\sum_{i=0}^{M-1} \sqrt{(k_{i+1} - k_i)\cdot \E*{ 
\int_{k_i}^{k_{i+1}}\tr{\Sigma_{s}^2 (\bar{X}_s)} \d s \given \bar{X}_{k_i}}  }  \given \bar{X}_{t_0} = x_{t_0}}\\
    \leq&\sqrt{t_N - t_0} \cdot \sqrt{\E*{\sum_{i=0}^{M-1}\int_{k_i}^{k_{i+1}} \tr{\Sigma_s^2 (\bar{X}_s)} \d s \given \bar{X}_{t_0}=x_{t_0} } }\\
    =&\sqrt{t_N - t_0} \cdot \sqrt{\tr{\Sigma_{t_0}(x_{t_0})} - \tr{\E{\Sigma_{t_N}(\bar{X}_{t_N})\given \bar{X}_{t_0} =x_{t_0} }}}\\
    \leq&\sqrt{(t_N - t_0 )\cdot\tr{\Sigma_{t_0}(x_{t_0})}}
\end{align*}
 \end{proof}

Under the mild assumption that the distribution $\pi$ has bounded support, a similar bound on the total distance between the target and speculative distributions holds, as given in the following proposition.
\begin{proposition}\label{lem:change of measure for bounded support}
Let $(\bar{X}_t)_t$ be the process defined by the stochastic localization SDE \eqref{eq:stochastic localization} and $(X_t)_t$ be the process defined by the discretized SDE \eqref{eq:discretize sde} with respect to the distribution $\pi$. Suppose $ \supp(\pi)\subseteq B(0,R). $ Let $\delta_i = t_{i+1} -t_i.$
  We have
  \begin{align*}
  &\frac{1}{2}\E{\sum_{i=k}^{j-1} \delta_i \norm{f(t_i,X_{t_i}) - f(t_k,x_{t_k}) }^2 \given X_{t_k} =x_{t_k} } \\
  &\leq (t_j -t_k) \parens*{  R^2 d_{TV}^2  ((\bar{X}_t |\bar{X}_{t_k}=x_{t_k})_{t_k\leq t\leq t_j}, (X_t |X_{t_k}=x_{t_k})_{t_k\leq t\leq t_j}) +  \E*{ 
\int_{t_k}^{t_j} \tr{ \Sigma_{s}^2(\bar{X}_{s})} ds   | \bar{X}_{t_k} = x_{t_k}} }  \\
&\leq (t_j -t_k) \parens*{\frac{R^2}{2} \cdot \sum_{i=k}^{j-1}\int_{t_i}^{t_{i+1}} \norm{f(t,\bar{X}_{t}) - f(t_i,\bar{X}_{t_i})}^2 dt  +  \E{\int_{t_k}^{t_j} \tr{\Sigma_{s}^2(\bar{X}_{s})} ds   | \bar{X}_{t_k} = x_{t_k}} }   \\
&\leq \frac{3}{2}(t_j -t_k)^2 R^4 
\end{align*}
\end{proposition}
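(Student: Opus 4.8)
The plan is to prove the three displayed inequalities in turn. The support hypothesis $\supp(\pi)\subseteq B(0,R)$ enters only through two uniform bounds: since $f(t,x)$ is the mean of a distribution supported in $B(0,R)$ we have $\norm{f(t,x)}\le R$ for all $t,x$, hence $\norm{f(t_i,X_{t_i})-f(t_k,x_{t_k})}^2\le 4R^2$; and since $\Sigma_t(x)$ is the covariance of a distribution supported in $B(0,R)$ we have $\tr{\Sigma_t(x)}\le R^2$. In particular the functional $G:=\sum_{i=k}^{j-1}\delta_i\norm{f(t_i,X_{t_i})-f(t_k,x_{t_k})}^2$ of the discretized trajectory is $\le 4R^2(t_j-t_k)$ pointwise, and by \cref{lem:direct kl divergence bound} (whose proof actually yields equality) $\tfrac12\E*{G\given X_{t_k}=x_{t_k}}=\DKL*{(X_{t_k},\dots,X_{t_j})\river(\tilde X_{t_k},\dots,\tilde X_{t_j})}$, the KL divergence between the discretized trajectory and the constant-drift ($f(t_k,x_{t_k})$) trajectory started from $x_{t_k}$.

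For the first inequality I would move from the discretized process $(X_t)$ to the idealized stochastic-localization process $(\bar X_t)$, both conditioned on value $x_{t_k}$ at $t_k$. Writing $\bar G$ for $G$ evaluated along $(\bar X_t)$ and coupling the two path laws on $[t_k,t_j]$, the pointwise bound on $G$ turns the change of measure into $\E*{G\given X_{t_k}=x_{t_k}}\le\E*{\bar G\given\bar X_{t_k}=x_{t_k}}+O\bigl(R^2(t_j-t_k)\,d_{TV}^2\bigr)$, where the delicate point is to get the \emph{squared} total variation, which comes from pairing this crude integrand bound with the Pinsker-type estimate of the second inequality. The leading term $\E*{\bar G\given\bar X_{t_k}=x_{t_k}}$ is exactly what \cref{prop:mean difference bound via Ito formula} bounds — equivalently \cref{eq:ito mean difference}, which says $t\mapsto f(t,\bar X_t)$ is a martingale with quadratic variation $\int\tr{\Sigma_t^2}\,dt$ — giving $\E*{\bar G\given\bar X_{t_k}=x_{t_k}}\le(t_j-t_k)\,\E*{\int_{t_k}^{t_j}\tr{\Sigma_s^2(\bar X_s)}\,ds\given\bar X_{t_k}=x_{t_k}}$. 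Halving yields the first line.

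For the second inequality I would bound the path total variation by Pinsker, $d_{TV}^2\le\tfrac12\DKL*{(\bar X_t)_t\river(X_t)_t}$, and compute the right-hand side by Girsanov along the true trajectory (as in the proof of \cref{lem:direct kl divergence bound}, but with drift $f(t,\bar X_t)$ in place of the frozen drift on $[t_i,t_{i+1})$): it equals $\tfrac12\E*{\sum_{i=k}^{j-1}\int_{t_i}^{t_{i+1}}\norm{f(t,\bar X_t)-f(t_i,\bar X_{t_i})}^2\,dt}$, so $R^2 d_{TV}^2\le\tfrac{R^2}{4}(\cdots)\le\tfrac{R^2}{2}(\cdots)$, and adding the untouched $\tr{\Sigma^2}$ term gives the second line. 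For the third inequality I would bound the two remaining quantities: $\E*{\int_{t_k}^{t_j}\tr{\Sigma_s^2}\,ds}\le\min\{R^2,\,(t_j-t_k)R^4\}$ — the bound $R^2$ from \cref{eq:ito mean difference} ($\E*{\int_{t_k}^{t_j}\tr{\Sigma_s^2}\,ds}=\tr{\Sigma_{t_k}(x_{t_k})}-\E*{\tr{\Sigma_{t_j}}}\le R^2$), and $(t_j-t_k)R^4$ from $\tr{\Sigma_s^2}\le(\tr{\Sigma_s})^2\le R^2\tr{\Sigma_s}$ with $\int_{t_k}^{t_j}\tr{\Sigma_s}\,ds\le(t_j-t_k)R^2$; and, applying \cref{eq:ito mean difference} on each $[t_i,t]$ together with Fubini, $\E*{\sum_i\int_{t_i}^{t_{i+1}}\norm{f(t,\bar X_t)-f(t_i,\bar X_{t_i})}^2\,dt}=\E*{\sum_i\int_{t_i}^{t_{i+1}}(t_{i+1}-s)\tr{\Sigma_s^2}\,ds}\le(t_j-t_k)\,\E*{\int_{t_k}^{t_j}\tr{\Sigma_s^2}\,ds}$. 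Substituting, the middle expression is $\le(t_j-t_k)\,\E*{\int_{t_k}^{t_j}\tr{\Sigma_s^2}\,ds}\cdot\bigl(\tfrac{R^2(t_j-t_k)}{2}+1\bigr)$, and a short case split (when $R^2(t_j-t_k)\le1$ use $\E*{\int_{t_k}^{t_j}\tr{\Sigma_s^2}\,ds}\le(t_j-t_k)R^4$ and $\tfrac{R^2(t_j-t_k)}{2}+1\le\tfrac32$; otherwise use $\E*{\int_{t_k}^{t_j}\tr{\Sigma_s^2}\,ds}\le R^2$ and $\tfrac{R^2(t_j-t_k)}{2}+1\le\tfrac32R^2(t_j-t_k)$) collapses it to $\tfrac32(t_j-t_k)^2R^4$ in both regimes.

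The main obstacle is the first inequality: the algorithm simulates the \emph{discretized} SDE, whereas the clean Itô identities $d\Sigma_t=-\Sigma_t^2\,dt+\text{martingale}$ and $df(t,\cdot)=\Sigma_t\,dB_t$ hold only for the \emph{idealized} stochastic-localization SDE, so one must transfer between the two processes while (i) retaining the sharp $\int\tr{\Sigma^2}$ control on the idealized part and (ii) paying only a squared-total-variation price for the discretization gap, which is then closed off at the same order of magnitude by the Pinsker/Girsanov bound of the second inequality. The remaining steps are bookkeeping with \cref{eq:ito mean difference}, Cauchy--Schwarz and Fubini, and the elementary case split above.
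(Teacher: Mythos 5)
Your reductions of the second and third displayed inequalities are sound (Pinsker plus the Girsanov computation for the path KL, and the case split $R^2(t_j-t_k)\lessgtr 1$ combined with $\E{\int_{t_k}^{t_j}\tr{\Sigma_s^2(\bar X_s)}\,ds\given \bar X_{t_k}=x_{t_k}}\leq \min\{R^2,(t_j-t_k)R^4\}$ is if anything a slightly cleaner route to the constant $3/2$ than the paper's). The gap is in the first inequality, which is the heart of the proposition. Your mechanism for it is a change of measure with the pointwise bound $G\leq 4R^2(t_j-t_k)$, i.e. $\E{G\given X_{t_k}=x_{t_k}}\leq \E{\bar G\given \bar X_{t_k}=x_{t_k}}+4R^2(t_j-t_k)\,d_{TV}$. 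This produces the total variation distance to the \emph{first} power, and no pairing with the Pinsker-type estimate of the second line can upgrade it: Pinsker bounds $d_{TV}$ from above by $\sqrt{\DKL{\cdot\river\cdot}/2}$, whereas you would need $d_{TV}\lesssim d_{TV}^2$, which goes the wrong way since $d_{TV}\leq 1$. If you carry the first-power error term through, the final bound becomes of order $R^3(t_j-t_k)^{3/2}+R^4(t_j-t_k)^2$, which is strictly weaker than $\tfrac32 R^4(t_j-t_k)^2$ precisely in the regime $R^2(t_j-t_k)\leq 1$ that the discretization schedule is designed to hit, and which the downstream arguments (the per-node bounds $E_i(z)\lesssim R^2(k_{i+1}-k_i)$ summed over nodes) rely on.

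The missing idea is how the paper gets the \emph{squared} TV: couple the two path laws, split
\[
\norm{f(t_i,X_{t_i})-f(t_k,x_{t_k})}^2\leq 2\norm{f(t_i,\bar X_{t_i})-f(t_k,x_{t_k})}^2+2\norm{f(t_i,\bar X_{t_i})-f(t_i,X_{t_i})}^2,
\]
bound the first term by \cref{prop:mean difference bound via Ito formula} (this is your $\E{\bar G}$ term), and control the cross term not by the crude sup bound but by the posterior-mean representation $f(t,x)=\E_{U\sim\pi_t(x)}{U}$ with $\supp(\pi_t(\cdot))\subseteq B(0,R)$, which gives $\norm{f(t,x)-f(t,y)}\leq R\, \dTV{\pi_t(x),\pi_t(y)}$; squaring and invoking data processing to compare the posteriors along the coupled trajectories yields exactly the $(t_j-t_k)R^2 d_{TV}^2$ term of the first line, which the second line then converts via Pinsker/Girsanov. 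Without this step (or some substitute producing a quadratic-in-$d_{TV}$ error), your proposal does not establish the stated first inequality, and hence not the chain as written.
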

\begin{proof}
We couple the trajectories $ (X_t)_{t_k \leq t\leq t_j}$ and $(\bar{X}_t)_{t_k \leq t\leq t_j}$ with initial condition $X_{t_k} =\bar{X}_{t_k} = x_{t_k}$ optimally. By triangle and Holder's inequality, we have 
\begin{align*}
    \norm{f(t_i,X_{t_i}) - f(t_k,x_{t_k}) }^2 &\leq (\norm{f(t_i,\bar{X}_{t_i}) - f(t_k,x_{t_k}) } + \norm{f(t_i,\bar{X}_{t_i}) - f(t_i,X_{t_i}) })^2 \\
    &\leq 2 (\norm{f(t_i,\bar{X}_{t_i}) - f(t_k,x_{t_k}) }^2 + \norm{f(t_i,\bar{X}_{t_i}) - f(t_i,X_{t_i}) }^2 )  
\end{align*}
Thus
\begin{equation}\label{eq:change of measure main}\begin{split}
    &\frac{1}{2}\E{\sum_{i=k}^{j-1} \delta_i \norm{f(t_i,X_{t_i}) - f(t_k,x_{t_k}) }^2 \given X_{t_k} =x_{t_k} }\\
    \leq & \E{\sum_{i=k}^{j-1} \delta_i (\norm{f(t_i,\bar{X}_{t_i}) - f(t_k,x_{t_k}) }^2 + \norm{f(t_i,\bar{X}_{t_i}) - f(t_i,X_{t_i}) }^2 ) \given \bar{X}_{t_k} =x_{t_k} } \\
    \leq &\underbrace{\E{\sum_{i=k}^{j-1} \delta_i \norm{f(t_i,\bar{X}_{t_i}) - f(t_k,x_{t_k}) }^2  \given \bar{X}_{t_k} =x_{t_k} }}_{A_1} + \underbrace{\E{\sum_{i=k}^{j-1} \delta_i  \norm{f(t_i,\bar{X}_{t_i}) - f(t_i,X_{t_i}) }^2 \given \bar{X}_{t_k} =x_{t_k} }}_{A_2}
    \end{split}
\end{equation}
By \cref{prop:mean difference bound via Ito formula}, 
the first term is bounded by 
\[A_1 \leq  (t_j -t_k) \E{\int_{t_k}^{t_j} \tr {\Sigma_{s}^2(\bar{X}_{s})} ds   | \bar{X}_{t_k} = x_{t_k}} \leq (t_j-t_k)^2 R^4  \]
where the second inequality is due to $ \tr{\Sigma_{s}^2(\bar{X}_{s})} 
\leq (\tr {\Sigma_{s}(\bar{X}_{s}))}^2 \leq R^4, $ since for any value of $\bar{X}_s,$ $\supp(\pi_s(\bar{X}_s))\subseteq\supp(\pi) \subseteq B(0,R).$ 
Next, we bound the second term.
Note that $ \supp(\pi_t(x))\subseteq \supp(\pi) \subseteq B(0,R) \forall t, x.$ Note also that $ f(t,x) = \E_{U\sim \pi_t(x)}{U},$ so we have: 
\[ \norm{f(t,x)- f(t,y)} = \norm{\E_{U\sim\pi_t(x)} {U} - \E_{U\sim\pi_t(y)} {U}} \leq d_{TV} (\pi_t(x), \pi_t(y) ) R \]
Suppose $t\in [t_k, t_j],$ $ x \sim \bar{X}_{t} |\bar{X}_{t_k}=x_{t_k}$ and $y \sim X_{t} |X_{t_k}=x_{t_k}.$ By the data processing inequality for TV distance, we can bound
\begin{align*}
    d_{TV}^2 (\pi_t(x), \pi_t(y) )&\leq d_{TV}^2((\bar{X}_t |\bar{X}_{t_k} =x_{t_k})_{t_k\leq t\leq t_j}, (X_t |X_{t_k} =x_{t_k})_{t_k\leq t \leq t_j}) \\
    &\leq \frac{1}{2}\E{\sum_{i=k}^{j-1}\int_{t_i}^{t_{i+1}} \norm{f(t,\bar{X}_{t}) - f(t_i,\bar{X}_{t_i})}^2 dt |\bar{X}_{t_k} = X_{t_k}}\\
    &\leq  \frac{1}{2} \E{\sum_{i=k}^{j-1} (t_{i+1}-t_i) (   \tr{\Sigma_{t_i} (\bar{X}_{t_i})} - \tr{\Sigma_{t_{i+1}} (\bar{X}_{t_{i+1}})} |\bar{X}_{t_k} = X_{t_k}} \\
    &\leq \frac{1}{2} (t_j-t_k) \E{ \tr{\Sigma_{t_k} (\bar{X}_{t_k})} -  \tr{\Sigma_{t_j} (\bar{X}_{t_j})}  |\bar{X}_{t_k} = X_{t_k}}\\
    &\leq \frac{1}{2} R^2 (t_j - t_k)
\end{align*}
where the second inequality is due to Pinsker's inequality and \cite[Lemma 9]{chen2022sampling}, the third inequality is due to \cref{prop:mean difference bound via Ito formula} and the last inequality is due to $ \tr{\Sigma_{t_k} (\bar{X}_{t_k})} -  \tr{\Sigma_{t_j} (\bar{X}_{t_k})} \leq \tr{\Sigma_{t_k} (\bar{X}_{t_k})}\leq R^2 .$ 

Hence,
\begin{align*}
    A_2 &\leq \E{\sum_{i=k}^{j-1} \delta_i  R^2 d_{TV}^2((\bar{X}_t |\bar{X}_{t_k} =x_{t_k})_{t_k\leq t\leq t_j}, (X_t |X_{t_k} =x_{t_k})_{t_k\leq t \leq t_j})  \given \bar{X}_{t_k} =x_{t_k} }\\
    & \leq R^2 (t_j-t_k) d_{TV}^2((\bar{X}_t |\bar{X}_{t_k} =x_{t_k})_{t_k\leq t\leq t_j}, (X_t |X_{t_k} =x_{t_k})_{t_k\leq t \leq t_j}) \\
    &\leq \frac{R^2}{2} \cdot \sum_{i=k}^{j-1}\int_{t_i}^{t_{i+1}} \norm{f(t,\bar{X}_{t}) - f(t_i,\bar{X}_{t_i})}^2 dt  \\
    &\leq \frac{1}{2} R^4 (t_j-t_k)^2 
\end{align*}
The result is by combining the bounds on $A_1$, $A_2$ and \cref{eq:change of measure main}.
\end{proof}

Before proving \cref{thm:full denoising polylog}, we show an algorithm that exactly simulate the discretized process \eqref{eq:discretize sde}.
\begin{theorem} \label{thm:parallelizing diffusion using only score function}
Let $(\bar{X}_t)_{t_0 \le t \le t_0 + \Delta}$ be the process defined by the stochastic localization SDE \eqref{eq:stochastic localization} with respect to the distribution $\pi.$  Suppose $\supp(\pi)\subseteq B(\E_\pi{X},R)$ i.e. $\sup_{x\in \supp(\pi)} \norm{x- \E_\pi{X}} \leq R.$ Let $(X_t)_{t_0 \le t \le t_0 + \Delta}$ be the process defined by the discretized SDE \eqref{eq:discretize sde} with endpoints $t_0 \leq t_1 \leq \cdots \leq t_N = t_0 + \Delta.$ Let
\begin{equation}\label{eq:kl continuous discrete bounded}
\hat{E}: = \E{\sum_{i=0}^{N-1} \int_{t_i}^{t_{i+1}} \norm{f(t,\bar{X}_{t}) - f(t_i,\bar{X}_{t_i})}^2 dt  | \bar{X}_{t_0} = x_{t_0}  }
\quad \text{and} \quad \Phi:= \sqrt{\hat{E}} R^2 \Delta + \sqrt{  \hat{E}\Delta R^2 +  \Delta \tr{\Sigma_{t_0}(x_{t_0})}}
\end{equation}

 There exists an algorithm, that given oracle query access to the Gaussian denoiser $\set{f(t_i,\cdot)}_{0\leq i\leq N}$, outputs a sample exactly distributed according to the distribution $ (X_{t_0, t_1, \cdots, t_N} | X_{t_0} = x_{t_0}) .$ The expected number of rounds is $O(\Phi \log^3 N)$ and the expected number of queries is $O(N \log N).$
\end{theorem}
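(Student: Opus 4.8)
The plan is to run $\RS{}$ (\cref{alg:rs2}) with the Gaussian‑denoiser speculation scheme of \cref{sec:spec-scheme-gaussian}, attached to a balanced binary fallback tree whose $N$ leaves are the increments $X_{t_1}-X_{t_0},\dots,X_{t_N}-X_{t_{N-1}}$ (equivalently, the time steps $[t_0,t_1],\dots,[t_{N-1},t_N]$), with parameter $\rho=1/\log_2 N$ and height $h=\log_2 N$. One may pad the schedule so $N$ is a power of $2$, and since the SDE \eqref{eq:stochastic localization}, its Euler discretization \eqref{eq:discretize sde}, the denoiser, and the covariances $\Sigma_t$ all transform covariantly under translation, we may assume $\E_\pi{X}=0$, so that $\supp(\pi)\subseteq B(0,R)$, hence $\norm{f(t,\cdot)}\le R$ and $\tr{\Sigma_t(\cdot)}\le R^2$ pointwise. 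As verified in \cref{sec:spec-scheme-gaussian} and \cref{remark:implementation details gaussian denoiser}, this scheme satisfies \cref{assumption:formal} (in particular $\nu_S=\mu_S$ at the leaves, since a single discretization step has only one drift evaluation), so by the correctness argument of \cref{sec:algorithm} (which rests on \cref{prop:srs-is-correct}) the output is exactly distributed as $(X_{t_0,t_1,\dots,t_N}\given X_{t_0}=x_{t_0})$, and \cref{lem:expected number of queries} gives $\E{Q}\le O\parens*{\tfrac{(1+\rho)^{h+1}-1}{\rho}N}=O(N\log N)$.

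For the round bound, by \cref{lem:expected round bound} with $D=2$, $h=\log_2 N$, $\rho=1/\log_2 N$ — so that $(1+\rho)^\ell\le e$ for $\ell\le h$ and $\tfrac{\log N+1}{\log(1+\rho)}=O(\log^2 N)$ — it suffices to show that for every level $\ell$,
\[ \E*_{Z\sim\mu}{\sum_{S_\ell\in\mathcal{C}_\ell(\textbf{root})}\dTV*{\mu_{S_\ell}\parens*{\cdot\given Z_{<S_\ell}},\nu_{S_\ell}\parens*{\cdot\given Z_{<S_\ell}}}}=O(\Phi), \]
since summing this over the $O(\log N)$ levels and substituting into \cref{lem:expected round bound} yields $\E{T}=O(\Phi\log^3 N)$ (the lemma also produces an additive $O(\log^2 N)$, lower order in the regime in which the theorem is applied).

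To prove the per‑level bound, fix $\ell$ and write the level‑$\ell$ nodes as $S_\ell^{(r)}=\set{a_r+1,\dots,b_r}$; the intervals $[t_{a_r},t_{b_r}]$ partition $[t_0,t_N]$ and the increment sets $H_r:=\set{a_r,\dots,b_r-1}$ partition $\set{0,\dots,N-1}$. By \cref{lem:direct kl divergence bound} (with $k=a_r$, $j=b_r$) and Pinsker's inequality,
\[ \dTV*{\mu_{S_\ell^{(r)}}\parens*{\cdot\given Z_{<S_\ell^{(r)}}},\nu_{S_\ell^{(r)}}\parens*{\cdot\given Z_{<S_\ell^{(r)}}}}\le\tfrac12\sqrt{g_r\parens{X_{t_{a_r}}}},\qquad g_r(x):=\E*{\sum_{i=a_r}^{b_r-1}\delta_i\norm{f(t_i,X_{t_i})-f(t_{a_r},x)}^2\given X_{t_{a_r}}=x}, \]
where under $Z\sim\mu$ the point $X_{t_{a_r}}=\sum Z_{<S_\ell^{(r)}}$ is the Euler‑discretized process at time $t_{a_r}$ started from $x_{t_0}$. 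Couple the discretized path $(X_t)_{t_0\le t\le t_N}$ with the stochastic localization path $(\bar{X}_t)_{t_0\le t\le t_N}$, both started at $x_{t_0}$, optimally; by the Girsanov‑type estimate used in the proof of \cref{lem:change of measure for bounded support} (with $k=0$, $j=N$) their total variation is at most $\sqrt{\hat E/2}$, while the crude bound $g_r(x)\le 3(t_{b_r}-t_{a_r})^2R^4$ from \cref{lem:change of measure for bounded support} gives $\sum_r\sqrt{g_r(\cdot)}\le\sqrt 3\,R^2\sum_r(t_{b_r}-t_{a_r})=\sqrt 3\,R^2\Delta$ pointwise. The change‑of‑measure inequality therefore yields $\E*_{Z\sim\mu}{\sum_r\sqrt{g_r(X_{t_{a_r}})}}\le\E*{\sum_r\sqrt{g_r(\bar{X}_{t_{a_r}})}\given\bar{X}_{t_0}=x_{t_0}}+\sqrt{\hat E/2}\cdot\sqrt 3\,R^2\Delta$, and the second term is $O(\sqrt{\hat E}\,R^2\Delta)$, matching the first summand of $\Phi$. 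For the remaining term, apply the finer bound of \cref{lem:change of measure for bounded support},
\[ g_r(x)\le(t_{b_r}-t_{a_r})R^2\,\E*{\mathcal{E}_r\given\bar{X}_{t_{a_r}}=x}+2(t_{b_r}-t_{a_r})\,\E*{\mathcal{T}_r\given\bar{X}_{t_{a_r}}=x},\quad \mathcal{E}_r:=\sum_{i\in H_r}\int_{t_i}^{t_{i+1}}\norm{f(t,\bar{X}_t)-f(t_i,\bar{X}_{t_i})}^2dt,\ \ \mathcal{T}_r:=\int_{t_{a_r}}^{t_{b_r}}\tr{\Sigma_s^2(\bar{X}_s)}ds. \]
Taking square roots ($\sqrt{u+v}\le\sqrt u+\sqrt v$), summing over $r$, applying Cauchy--Schwarz in $r$ and Jensen under $\E*{\cdot\given\bar{X}_{t_0}=x_{t_0}}$, and using $\sum_r(t_{b_r}-t_{a_r})=\Delta$ together with $\sum_r\E*{\mathcal{E}_r\given\bar{X}_{t_0}=x_{t_0}}=\hat E$ (recall \eqref{eq:kl continuous discrete bounded}) and — exactly as in the proof of \cref{prop:mean difference bound via Ito formula} — $\sum_r\E*{\mathcal{T}_r\given\bar{X}_{t_0}=x_{t_0}}=\E*{\int_{t_0}^{t_N}\tr{\Sigma_s^2(\bar{X}_s)}ds\given\bar{X}_{t_0}=x_{t_0}}=\tr{\Sigma_{t_0}(x_{t_0})}-\tr{\E*{\Sigma_{t_N}(\bar{X}_{t_N})\given\bar{X}_{t_0}=x_{t_0}}}\le\tr{\Sigma_{t_0}(x_{t_0})}$ (by \eqref{eq:ito mean difference}), one obtains $\E*{\sum_r\sqrt{g_r(\bar{X}_{t_{a_r}})}\given\bar{X}_{t_0}=x_{t_0}}\le R\sqrt{\Delta\hat E}+\sqrt{2\Delta\tr{\Sigma_{t_0}(x_{t_0})}}=O\parens*{\sqrt{\hat E\Delta R^2+\Delta\tr{\Sigma_{t_0}(x_{t_0})}}}$, the second summand of $\Phi$. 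Combining the two pieces gives the per‑level bound $O(\Phi)$.

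The technical heart — and the expected main obstacle — is that the outer expectation runs over the Euler‑discretized process, while the identities one can exploit (the Ito computation behind \eqref{eq:ito mean difference}, and \cref{prop:mean difference bound via Ito formula}) live on the stochastic localization SDE. Bridging them is precisely where the bounded‑support hypothesis enters: it forces $\norm{f(t,\cdot)}\le R$ and $\tr{\Sigma_t}\le R^2$, hence $\sum_r\sqrt{g_r}=O(R^2\Delta)$ uniformly over paths, which is what lets the path‑level total variation $\sqrt{\hat E/2}$ between the two processes be absorbed into the $\sqrt{\hat E}R^2\Delta$ term of $\Phi$ without paying a factor of $N$ (or of the number of level‑$\ell$ nodes). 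Getting this accounting tight — rather than a naive per‑node change of measure — is the delicate step; once it is in place, the remainder is exactly the bookkeeping already packaged in \cref{lem:expected round bound,lem:expected number of queries}.
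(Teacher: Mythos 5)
Your proposal is correct and follows essentially the same route as the paper's proof: run $\RS{}$ on the full binary fallback tree with $\rho=1/\log_2 N$, get $O(N\log N)$ queries from \cref{lem:expected number of queries}, and reduce the round bound to the per-level estimate $\E_{Z\sim\mu}\sum_{S_\ell}\dTV{\mu_{S_\ell},\nu_{S_\ell}}=O(\Phi)$, proved via \cref{lem:direct kl divergence bound} plus Pinsker, the crude $R^2\Delta$ pointwise bound and path-level change of measure with $d_{\mathrm{TV}}\le\sqrt{\hat E}$ to pass from the discretized to the continuous process, and then the finer bound of \cref{lem:change of measure for bounded support} with Cauchy--Schwarz, Jensen, and \eqref{eq:ito mean difference} to obtain the $\sqrt{\hat E\Delta R^2+\Delta\tr{\Sigma_{t_0}(x_{t_0})}}$ term. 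This matches the paper's argument step for step, so no further comparison is needed.
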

 
\begin{proof}[Proof of \cref{thm:parallelizing diffusion using only score function}]
W.l.o.g.\ assume $ \E_{\pi}{X} = 0;$ otherwise we can query $ \E_{\pi}{X} = f(0,0)$ and shift the distribution $\pi$ by its mean $\E_\pi{X}$. Hence, from this point onward, we assume $\supp(\pi)\subseteq B(0,R).$

W.l.o.g.\ assume $N$ is a power of $2;$ otherwise we can replace $N$ with $N' = 2^{\lceil  \log_2 N \rceil}$ by extending the sequence $(t_0,\cdots, t_N)$ to $ (t_0, \cdots, t_N, t_{N+1}, \cdots, t_{N'})$ where $ t_{k} = t_N$ for $ N<k \leq N'.$
Let the fallback tree be the full binary tree with $N $ leaf nodes and height $h = \log_2 N \in \N_{\geq 1}.$

Our algorithm runs $\RS{}$ with respect to the Gaussian denoiser scheme at the root of the above fallback tree, with parameter $\rho = 1/h.$ Let $Q,T$ be its number of queries and rounds. By \cref{lem:expected number of queries}, $\E{Q} \leq O(N \log N).$ 

 We show that, for any $0\leq \ell \leq h-1:$
 \begin{equation}\label{eq:bound expected TV under good coupling event}
     \E_{Z\sim \mu}{\sum_{S_\ell \in \mathcal{C}_\ell(\textbf{root})}\dTV*{\mu_{S_\ell}\parens*{\cdot\given Z_{<S_\ell}},\nu_{S_\ell}\parens*{\cdot\given Z_{<S_\ell}}} } \leq O(\Phi)
 \end{equation}
 Given \cref{eq:bound expected TV under good coupling event}, we use \cref{lem:expected round bound} (with $D =2$ and $\rho = 1/h$)  to bound the expected number of rounds as follows:
 \begin{equation}\label{eq:expected round bound under good coupling event}
    \E{ T } \leq O( 1+2\sum_{\ell=0}^h (1+1/h)^\ell ( 2\Phi+ \frac{1}{\log (1+1/h)}\cdot ((1+\log N) \Phi + 1/e) ) ) = O(\Phi \log^3 N) 
 \end{equation}

 Finally, we verify \cref{eq:bound expected TV under good coupling event}. Fix an arbitrary level $\ell.$

Let the nodes at that level be $ \set{i_0+1, \cdots, i_1}, \set{i_1+1, \cdots, i_2}, \cdots, \set{i_{M-1} +1,i_M}$ and let $ k_0 = t_{i_0}, k_1=t_{i_1}, \cdots, k_M=t_{i_M}.$ Note that $ t_0 =k_0 $ and $t_N = k_M,$ and $ k_0, \cdots, k_M$ is a sub-sequence of $t_0, \cdots, t_N.$  

For $j\in [N]$, let $\delta_j = t_{j+1}-t_j.$
By Pinsker inequality and \cref{lem:direct kl divergence bound}
\begin{align*}
     &\E_{Z\sim \mu}{\sum_{S_\ell \in \mathcal{C}_\ell(\textbf{root})}\dTV*{\mu_{S_\ell}\parens*{\cdot\given Z_{<S_\ell}},\nu_{S_\ell}\parens*{\cdot\given Z_{<S_\ell}}} }\\ \leq &\E*_{X_{k_1}, \cdots, X_{k_{M-1}}}{\sum_{i=0}^{M-1} \sqrt{\E*{\sum_{j\in H_i} \delta_j\norm{f(t_j,X_{t_j}) - f(k_i,X_{{k_i}})}^2 \given X_{{k_i}} } } \given X_{t_0} = x_{t_0} }
\end{align*}

For $i\in \set{0,\cdots, M-1},$  let $H_i$ be the set of $j\in \set{0,\cdots, N}$ s.t. $ k_i \leq t_j \leq k_{i+1}-1$ and let 
\[E_{i}(z) := \sqrt{\E*{\sum_{j\in H_i} \delta_j\norm{f(t_j,X_{t_j}) - f(k_i,X_{{k_i}})}^2 \given X_{k_i} = z } }.\]

Let $z^{(0)} \equiv x_{t_0}.$
Our goal is to prove
 \begin{equation}\label{eq:main equation to bound potential for denoising diffusion}
     \E*_{z^{(1)}, \cdots, z^{(M-1)} \sim (X_{k_{1}, \cdots, k_{M-1}} | X_{t_{0}}  =x_{t_0}) } { \sum_{i=0}^{M-1} E_{i} (z^{(i)}) } \leq O(\Phi)
 \end{equation}
 
 By \cref{lem:change of measure for bounded support}, for any $ z^{(0)}, \cdots, z^{(M-1)},$ 
 \[ 0\leq \sum_{i=0}^{M-1} E_{i} (z^{(i)}) \leq O(\sum_{i=0}^{k-1} R^2 (k_{i+1} -k_i)) = O(R^2 \Delta)   \]
Using the simple change of measure inequality $\E_{P} {U} \leq \E_{\tilde{P}} {U} + d_{TV}(P, \tilde{P})\sup_{u\sim P} \abs{u} ,$ we have
 \begin{equation}\label{eq:denoising diffusion change of measure}
 \begin{split}
   \E*_{z^{(1)}, \cdots, z^{(M-1)} \sim (X_{k_{1}, \cdots, k_{M-1}} | X_{t_{0}}  =x_{t_0}) } { \sum_{i=0}^{M-1} E_{i} (z^{(i)}) } &\leq \E*_{z^{(1)}, \cdots, z^{(M-1)} \sim (\bar{X}_{k_{1}, \cdots, k_{M-1}} | \bar{X}_{t_{0}}  =x_{t_0}) } { \sum_{i=0}^{M-1} E_{i} (z^{(i)})}  \\
   &+ d_{TV} ((\bar{X}_{k_{1}, \cdots, k_{M-1}} | \bar{X}_{t_{0}}  =x_{t_0}),(X_{k_{1}, \cdots, k_{M-1}} | X_{t_{0}}  =x_{t_0}) ) O(R^2 \Delta)
   \end{split}
 \end{equation}
 The second term on the rhs is bounded by $O(\sqrt{\hat{E}} R^2 \Delta)$ since $ d_{TV} ((\bar{X}_{k_{1}, \cdots, k_{M-1}} | \bar{X}_{t_{0}}  =x_{t_0}),((X_{k_{1}, \cdots, k_{M-1}} | X_{t_{0}}  =x_{t_0})) )\leq \sqrt{\hat{E}}.$ Indeed, by Pinsker inequality, \cite[Theorem 9]{chen2022sampling}, and the definition of $\hat{E}$, we have:
 \begin{align*}
 d_{TV} ((\bar{X}_{k_{1}, \cdots, k_{M-1}} | \bar{X}_{t_{0}}  =x_{t_0}),(X_{k_{1}, \cdots, k_{M-1}} | X_{t_{0}}  =x_{t_0}) ) &\leq \sqrt{\DKL{(\bar{X}_{t} | \bar{X}_{t_{0}}  =x_{t_0})|| (X_{t} | X_{t_{0}}  =x_{t_0})}} \\
 &\leq \sqrt{\E{\sum_{i=0}^{N} \int_{t_i}^{t_{i+1}} \norm{f(t,\bar{X}_{t}) - f(t_i,\bar{X}_{t_i})}^2 dt  | \bar{X}_{t_0} = x_{t_0}  }}  = \sqrt{\hat{E}} . 
 \end{align*}
 To finish the proof, we bound the first term on the RHS of \cref{eq:denoising diffusion change of measure} by $O\parens*{\sqrt{\Delta R^2 \hat{E} +  \Delta \tr{\Sigma_{t_0} (x_{t_0})  } }  }.$ 

Again, by \cref{lem:change of measure for bounded support} and Holder's inequality, we have:
 \begin{align*}
     \sum_{i=0}^{M-1} E_{i} (z^{(i)}) 
     \leq &  O\parens*{\sum_{i=0}^{M-1} \sqrt{ (k_{i+1} -k_{i})  \parens*{R^2 \E{\sum_{j\in H_i}\int_{t_j}^{t_{j+1}} \norm{f(t,\bar{X}_{t}) - f(t_j ,\bar{X}_{t_j})}^2 dt + 
 \int_{k_i}^{k_{i+1}} \tr {\Sigma_{s}^2(\bar{X}_{s})} ds   | \bar{X}_{k_i} = z^{(i)} } }} }
 \\
 \leq & O\parens*{ \sqrt{ (\sum_{i=0}^{M-1} (k_{i+1} -k_{i}) )  \cdot \parens*{ \sum_{i=0}^{M-1}  R^2\E*{\sum_{j\in H_i}\int_{t_j}^{t_{j+1}} \norm{f(t,\bar{X}_{t}) - f(t_j ,\bar{X}_{t_j})}^2 dt + \int_{k_{i}}^{k_{i+1}} \tr {\Sigma_{s}^2(\bar{X}_{s})} ds   | \bar{X}_{k_{i}} = z^{(i)}} } } }
 \end{align*}
 The above together with Jensen's inequality gives
 \begin{align*}
 & \E*_{z^{(1)}, \cdots, z^{(M-1)} \sim (\bar{X}_{k_{1}, \cdots, k_{M-1}} | \bar{X}_{t_{0}}  =x_{t_0}) } { \sum_{i=0}^{M-1} E_{i} (z^{(i)})} \\
 \leq &O\parens*{\sqrt{ \Delta \cdot \E*{ \sum_{i=0}^{k-1}  \parens*{ R^2 \sum_{j\in H_i}\int_{t_j}^{t_{j+1}} \norm{f(t,\bar{X}_{t}) - f(t_j ,\bar{X}_{t_j})}^2 dt + \int_{k_{i}}^{k_{i+1}} \tr{ \Sigma_{t}^2(\bar{X}_{t})} dt }  | \bar{X}_{t_0} = x_{t_0}}     } }  \\
 = & O\parens*{\sqrt{\Delta \cdot \parens*{R^2  \E{\sum_{j=0}^{N} \int_{t_j}^{t_{j+1}} \norm{f(t,\bar{X}_{t}) - f(t_j ,\bar{X}_{t_j})}^2 dt  | \bar{X}_{t_0} = x_{t_0}  } +  \E{ \int_{t_0}^{t_N} \tr{ \Sigma_{t}^2(\bar{X}_{t})} dt | \bar{X}_{t_0} = x_{t_0}  }  } } }\\
 = & O\parens*{\sqrt{\Delta R^2 \hat{E} +  \Delta \tr{\Sigma_{t_0} (x_{t_0})  } }  } 
 \end{align*}
 where the last equality is due to \cref{eq:ito mean difference}.
\end{proof}

\begin{proof}[Proof of \cref{thm:full denoising polylog}]
W.l.o.g.\ assume $ \E_{\pi}{X} = 0;$ otherwise we can query $ \E_{\pi}{X} = f(0,0)$ and shift the distribution $\pi$ by its mean $\E_\pi{X}$. Hence, from this point onward, we assume $\supp(\pi)\subseteq B(0,R).$

Let $(\bar{X}_t)_t$ be the process defined by the stochastic localization SDE \eqref{eq:stochastic localization} with respect to the distribution $\pi.$
Let $ \Sigma_t = \Sigma_t(\bar{X}_t).$
Recall that
\[ d \Sigma_t = - \Sigma_t^2 d t + \text{martingale},\]
which implies $d \E{\Sigma_t} = - \E{\Sigma_t^2} dt  \preceq -\E{\Sigma_t}^2 dt.$ Let $A_t = \E{\Sigma_t}$ and $B_t = A_t^{-1}$ then \[d A_t \leq -A_t^2 dt \quad \text{and} \quad d B_t \succeq -A_t^{-2} d A_t  \geq -A_t^{-2} (-A_t^2 d t) = dt.\]
This combined with $B_0 =\Sigma_0^{-1} \succeq R^{-1} \cdot I$\footnote{In the Loewner order} implies
$ B_t \succeq (R^{-1} + t)\cdot I.$ Equivalently,
\begin{equation}\label{eq:covariance trace bound}
   \E{\Sigma_t} \preceq \frac{1}{R^{-1}+t}\cdot I\Rightarrow \E{\text{tr}(\Sigma_t)} \leq \frac{n}{R^{-1}+t}  
\end{equation}
Let $L = \lceil\log_2 \frac{R}{\delta} \rceil +1.$
Let $\mathbf{T}_0 = 0$ and for $i\in \set{1, \cdots, L-1},$ let $ \mathbf{T}_i = R^{-1} 2^{i-1} $ and $ \mathbf{T}_L = 1/\delta.$ Discretize the interval $[\mathbf{T}_i, \mathbf{T}_{i+1}]$ into $N$ subintervals each of size $ \delta_i = \frac{T_{i+1}-T_i}{N},$ where $N\in \N_{\geq 1}$ is a power of $2$ to be chosen later.
Let the endpoints of this discretization be $ 0=\mathbf{T}_0 = t_0 <t_1 <\cdots < t_{N} = \mathbf{T}_1 = t_{N+1} < \cdots < t_{N L} = \mathbf{T}_L,$ and
let $ (X_t)_t$ be the resulting discretized process \eqref{eq:discretize sde}.
Note that for $i \in \set{0,\cdots,L-1}$
\begin{equation}\label{eq:trace ineq bound for denoising diffusion}
    (\mathbf{T_{i+1}} -\mathbf{T}_i) \E{\tr{\Sigma_{\mathbf{T}_i} (\bar{X}_{\mathbf{T}_i})) } }\leq \frac{n (\mathbf{T_{i+1}} -\mathbf{T}_i)}{(\mathbf{T}_i + R^{-1})} \leq n
\end{equation}

We consider the fallback tree where the root corresponds to the set $\set{1,\cdots, NL}.$ The root has $L$ children $S^{(1)}, \cdots, S^{(L)}$, which corresponds to the sets $ \set{1, \cdots, N}, \cdots , \set{N(L-1)+1, \cdots, NL}.$ For each $\ell\in \set{1,\cdots, L},$ the subtree rooted at $ S^{(\ell)}$ is a full binary tree with height $h =\log N \in \N$ and $N$ leaf nodes.

Our algorithm runs the $\fallback{}$ in \cref{alg:rs2} with respect the Gaussian denoiser scheme at the root of the fallback tree, with parameter $\rho = 1/h,$ then outputs $\frac{X_{\mathbf{T}_L}}{\mathbf{T}_L}$ as an approximate sample from $\pi.$ Let $Q, T$ be the number of queries and rounds. For $i\in \set{1,\cdots, L},$  let $T^{(i)}, Q^{(i)}$ be the number of rounds and queries used by the call to $\RS{}$ at the node $S^{(i)}.$ Clearly, $T =\sum_{i=1}^L T^{(i)}$ and  $Q = \sum_{i=1}^L Q^{(i)}. $

By \cite[Theorem 9]{chen2022sampling} \footnote{This can be viewed as a version of \cref{lem:direct kl divergence bound} when $\delta_i\to 0.$}, we can bound the KL divergence between the discretized and continuous processes in the entire interval $[0,\mathbf{T}_{L}]$ by
\begin{equation} \label{eq:KL divergence between continuous and discrete}
\hat{E}:=\E{\sum_{i=0}^{NL} \norm{f(t,\bar{X}_t) - f(t_i,\bar{X}_{t_i})}^2 dt }
\leq_{(1)} 
\E{\sum_{i=0}^{L-1} \delta_i \int_{\mathbf{T_i}}^{\mathbf{T_{i+1}}} \text{tr}(\Sigma_t^2 (\bar{X}_t )) dt } \leq \sum_{i=0}^{L-1} \delta_i\E{\text{tr}(\Sigma_{\mathbf{T_i}} (\bar{X}_{\mathbf{T_i}})) } \leq_{(2)} \sum_{i=0}^{L-1}  \frac{n}{N}\leq_{(3)} \epsilonTV^2 
\end{equation}
where (1) is due to \cref{prop:mean difference bound via Ito formula}, (2)
is due to \cref{eq:trace ineq bound for denoising diffusion}, and (3) is true when we choose $N \geq \frac{nL}{\epsilonTV^2}.$ 

By Pinsker's inequality, the discretized and continuous processes in the entire interval $[0,\mathbf{T}_{L}]$ are $\epsilonTV$-close in TV distance, so the output distribution is $\epsilonTV$-close to $\text{Law}(\frac{\bar{X}_{\mathbf{T}_L}}{{\mathbf{T}_L}}) = \pi\ast\mathcal{N}(0, \delta \cdot I).$

To bound the expected number of rounds, we need the following analog of \cref{eq:main equation to bound potential for denoising diffusion}.
Let $\Phi: = R^2\delta^{-1} \sqrt{\hat{E}} + \sqrt{\delta^{-1} R^2 \hat{E}}   + L\sqrt{n}$ where $\hat{E}$ is as defined in \cref{eq:kl continuous discrete bounded}. Fix a level $\ell$ where $1\leq\ell \leq h + 1.$ Let the nodes at that level be $ \set{i_0+1, \cdots, i_1}, \set{i_1+1, \cdots, i_2}, \cdots, \set{i_{M-1} +1,\cdots, i_M}$ and let $ k_0 = t_{i_0}, k_1=t_{i_1}, \cdots, k_M=t_{i_M}.$ 

Let $H_i =\set{j\in \set{0,\cdots, NL}: k_i \leq t_j \leq k_{i+1}-1},$ $z^{(0)} \equiv x_{t_0},$ and
\[E_{i}(z) := \sqrt{\E*{\sum_{j\in H_i} \delta_j\norm{f(t_j,X_{t_j}) - f(k_i,X_{{k_i}})}^2 \given X_{k_i} = z } }.\]
then
\begin{equation}\label{eq:main potential bound for full denoising polylog}
    \E*_{z^{(1)}, \cdots, z^{(M-1)} \sim (X_{k_{1}, \cdots, k_{M-1}} | X_{t_{0}}  =x_{t_0}) } { \sum_{i=0}^{M-1} E_{i} (z^{(i)}) } \leq O(\Phi)
\end{equation}
Given \cref{eq:main potential bound for full denoising polylog}, and use the bound $\hat{E} = \frac{nL }{N},$ we have  $\Phi = O(L \sqrt{n})$ when we choose $N \geq \max\set{\frac{R^2}{\delta}, \frac{R^4}{\delta^2}}.$ 
We let 
\[ N = 2^{\lceil \log_2  \max\set{\frac{R^2}{\delta}, \frac{R^4}{\delta^2}, \frac{n}{\epsilonTV^2}} \rceil }  = O(\max\set{\frac{R^2}{\delta}, \frac{R^4}{\delta^2}, \frac{n}{\epsilonTV^2}}).\]
then by \cref{lem:expected number of queries,lem:expected round bound}, the expected number of rounds and queries are bounded by:
\[ \E{ T } \leq O(\Phi \log^3 N) = O(\sqrt{n } \log^4 \parens*{\frac{n R}{\epsilonTV \delta}}) \quad \text{and} \quad \E{Q} = O(L N \log N)=  O( \max\set{\frac{R^2}{\delta}, \frac{R^4}{\delta^2}, \frac{n}{\epsilonTV^2}} \log^2 \parens*{\frac{n R}{\epsilonTV \delta}}  )\]
Finally, we prove \cref{eq:main potential bound for full denoising polylog}. The proof is similar to in \cref{thm:parallelizing diffusion using only score function}. 
By \cref{lem:change of measure for bounded support}, $\forall z^{(0)}, \cdots, z^{(M-1)}: \sum_{i=0}^{M-1} E_{i} (z^{(i)}) \leq O(\frac{R^2}{\delta}),$ thus:
 \begin{align*}
   \E*_{z^{(1)}, \cdots, z^{(M-1)} \sim (X_{k_{1}, \cdots, k_{M-1}} | X_{t_{0}}  =x_{t_0}) } { \sum_{i=0}^{M-1} E_{i} (z^{(i)}) } &\leq \E*_{z^{(1)}, \cdots, z^{(M-1)} \sim (\bar{X}_{k_{1}, \cdots, k_{M-1}} | \bar{X}_{t_{0}}  =x_{t_0}) } { \sum_{i=0}^{M-1} E_{i} (z^{(i)})}  \\
   &+ d_{TV} ((\bar{X}_{k_{1}, \cdots, k_{M-1}} | \bar{X}_{t_{0}}  =x_{t_0}),((X_{k_{1}, \cdots, k_{M-1}} | X_{t_{0}}  =x_{t_0})) ) O(\frac{R^2}{\delta} )\\
   &\leq  \E*_{z^{(1)}, \cdots, z^{(M-1)} \sim (\bar{X}_{k_{1}, \cdots, k_{M-1}} | \bar{X}_{t_{0}}  =x_{t_0}) }  { \sum_{i=0}^{M-1} E_{i} (z^{(i)})} + O(\frac{R^2 \sqrt{\hat{E}}}{\delta})\end{align*}
where the second inequality is due to \cref{eq:KL divergence between continuous and discrete}.
By the construction of the fallback tree,
for each $\ell\in \set{0,\cdots, L},$ there is a unique $ r_\ell\in [M]$ s.t. $k_{r_\ell} = \mathbf{T}_j.$
Next, we bound
\begin{align*}
     &\E*_{z^{(1)}, \cdots, z^{(M-1)} \sim (\bar{X}_{k_{1}, \cdots, k_{M-1}} | \bar{X}_{t_{0}}  =x_{t_0}) }  { \sum_{i=0}^{M-1} E_{i} (z^{(i)})} \\
    \leq  &O\parens*{ \E*{\sum_{i=0}^{M-1} \sqrt{ (k_{i+1} -k_{i})  \parens*{R^2 \E{\sum_{j\in H_i}\int_{t_j}^{t_{j+1}} \norm{f(t,\bar{X}_{t}) - f(t_j ,\bar{X}_{t_j})}^2 dt + 
 \int_{k_i}^{k_{i+1}} \tr {\Sigma_{s}^2(\bar{X}_{s})} ds   \given \bar{X}_{k_i} = z^{(i)} } }} } }\\
 \leq  &O\parens*{ \E*{\sum_{i=0}^{M-1} \sqrt{ R^2 (k_{i+1} -k_{i})   \E{\sum_{j\in H_i}\int_{t_j}^{t_{j+1}} \norm{f(t,\bar{X}_{t}) - f(t_j ,\bar{X}_{t_j})}^2 dt \given \bar{X}_{k_i} = z^{(i)} }}}} \\
 +  &O\parens*{ \E*{\sum_{i=0}^{M-1} \sqrt{(k_{i+1} -k_{i}) \E*{\int_{k_i}^{k_{i+1}} \tr {\Sigma_{s}^2(\bar{X}_{s})} ds\given \bar{X}_{k_i} = z^{(i)} }}}}\\
 \leq  &O\parens*{\sqrt{R^2 \hat{E} \delta^{-1}} } + \sum_{\ell=0}^{L-1} O\parens*{\E*{  \sum_{i \in [r_\ell, r_{\ell+1}-1 ] \cap \N}   \sqrt{(k_{i+1} -k_{i}) \E*{\int_{k_i}^{k_{i+1}} \tr {\Sigma_{s}^2(\bar{X}_{s})} ds\given \bar{X}_{k_i} = z^{(i)} }}}}\\
 \leq_{(1)}  &O\parens*{\sqrt{R^2 \hat{E} \delta^{-1}} } + \sum_{\ell=0}^{L-1} O(\sqrt{(\mathbf{T}_{i+1}-\mathbf{T}_i) \cdot \E{\tr{\Sigma_{\mathbf{T}_i}(\bar{X}_{\mathbf{T}_i})}}})\\
 \leq_{(2)} &O\parens*{\sqrt{R^2 \hat{E} \delta^{-1}} }+ O(\sqrt{n} L)
\end{align*}
where (1) is by Holder's inequality and \cref{eq:ito mean difference},
and (2) is due to \cref{eq:trace ineq bound for denoising diffusion}.
\end{proof}
\begin{remark}\label{remark:approximte gaussian denoiser} In the same setting as \cref{thm:full denoising polylog},
    we can also handle error in the Gaussian denoiser oracle i.e. assume oracle access to the approximate score function $s(t,x)$ where
    \begin{equation}\label{eq:approximate denoiser oracle}
        \forall t,x: \norm{s(t,x) - f(t,x)}^2 \leq \epsilon_{\text{score}}^2
    \end{equation}
    Set $N$ as in the proof of \cref{thm:full denoising polylog}, i.e. $N = O(\max\set{\frac{R^2}{\delta}, \frac{R^4}{\delta^2}, \frac{n}{\epsilonTV^2}}), $ and let \[ \hat{E} := \E{\sum_{i=0}^{NL} \norm{s(t,\bar{X}_t) - f(t_i,\bar{X}_{t_i})}^2 dt } \leq O(\frac{1}{\delta} \cdot \epsilon_{\text{score}}^2) + \frac{nL}{N} = O(\min\set{\epsilon^2_{TV}, \frac{nL}{N}}),\] when we assume
    \[\epsilon_{\text{score}}^2\leq  O(\delta \epsilonTV^2, \delta n/N).\] Under this assumption, we can ensure that the algorithm in \cref{thm:full denoising polylog} with $f(t,x)$ replaced by $s(t,x)$ outputs a sample that is $\epsilonTV$ close to $ \pi \ast \mathcal{N}(0,\delta \cdot I),$ where the expected number of rounds is bounded by 
    \[ O(\sqrt{ \frac{1}{\delta} \cdot \epsilon_{\text{score}}^2 \cdot N  } + \frac{R^2}{\delta}) \sqrt{\hat{E}} +  \sqrt{ \frac{1}{\delta} \cdot (\epsilon_{\text{score}}^2 \cdot N   +R^2) \hat{E}} + L \sqrt{n}) = O(L \sqrt{n})\]
\end{remark}
    \section{Deferred proofs}\label{sec:misssing proofs}
\begin{proof}[Proof of \cref{prop:expected round helper}]
    Note that $\lim_{x\to 0^+} x \log \frac{1}{x} = 0,$  so $\tilde{\varphi} $ is continuous. In addition, $\tilde{\varphi}'(x) = -1 - \log x$  and $\tilde{\varphi}'(x)$ is strictly decreasing for $x\in (0,+\infty).$  Let $N = |\mathcal{C}_\ell(\textbf{root})|.$ Applying Jensen's inequality for the concave function $\tilde{\varphi}$ gives
    \begin{align*}
         \E*_{Z\sim \mu} {\sum_{S_\ell \in \mathcal{C}_\ell(\textbf{root})}\tilde{\varphi} \parens*{\dTV*{\mu_{S_\ell}\parens*{\cdot\given Z_{<S_\ell}},\nu_{S_\ell}\parens*{\cdot\given Z_{<S_\ell}}}} } &\leq N\E*_{Z\sim \mu}{ \tilde{\varphi} \parens*{ N^{-1} \sum_{S_\ell \in \mathcal{C}_\ell(\textbf{root})}\dTV*{\mu_{S_\ell}\parens*{\cdot\given Z_{<S_\ell}},\nu_{S_\ell}\parens*{\cdot\given Z_{<S_\ell}}} } }  \\
         &\leq  N \tilde{\varphi}\parens*{ N^{-1} \E*_{Z\sim \mu}{ \sum_{S_\ell \in \mathcal{C}_\ell(\textbf{root})} \dTV*{\mu_{S_\ell}\parens*{\cdot\given Z_{<S_\ell}},\nu_{S_\ell}\parens*{\cdot\given Z_{<S_\ell}}} } }\\
         &\leq N \tilde{\varphi}(x_*)
    \end{align*}
    where $ x_* = N^{-1} \E*_{Z\sim \mu}{ \sum_{S_\ell \in \mathcal{C}_\ell(\textbf{root})} \dTV*{\mu_{S_\ell}\parens*{\cdot\given Z_{<S_\ell}},\nu_{S_\ell}\parens*{\cdot\given Z_{<S_\ell}}} }.$

    Note that $\tilde{\varphi}$ is increasing on $[0,1/e],$ and decreasing on $ [1/e, 1],$ thus attain its maximum at $x=1/e.$  
    Let $\Phi:= \E*_{Z\sim \mu}{ \sum_{S_\ell \in \mathcal{C}_\ell(\textbf{root})} \dTV*{\mu_{S_\ell}\parens*{\cdot\given Z_{<S_\ell}},\nu_{S_\ell}\parens*{\cdot\given Z_{<S_\ell}}} }.$
    We have two cases:
    \begin{itemize}
        \item If $\frac{\Phi}{N}\leq \frac{1}{e}:$  Since $ x_* \leq \frac{\Phi}{N},$ $\tilde{\varphi}(x_*) \leq \tilde{\varphi}(\frac{\Phi}{N}) $ thus
        \[  \E*_{Z\sim \mu} {\sum_{S_\ell \in \mathcal{C}_\ell(\textbf{root})}\tilde{\varphi} \parens*{\dTV*{\mu_{S_\ell}\parens*{\cdot\given Z_{<S_\ell}},\nu_{S_\ell}\parens*{\cdot\given Z_{<S_\ell}}}} } \leq \Phi\log (\frac{N}{\Phi})  =\Phi\log N+ \tilde{\varphi}(\Phi) \leq \Phi \log N + \frac{1}{e}\]
        where the last inequality is due to $\tilde{\varphi}(\Phi) \leq \tilde{\varphi}(\frac{1}{e}) =\frac{1}{e}.$
        \item If $\frac{\Phi}{N}> \frac{1}{e}:$
        \[ \tilde{\varphi}(x_*) \leq \tilde{\varphi}(\frac{1}{e}) =\frac{1}{e}\quad \text{thus} \quad \E*_{Z\sim \mu} {\sum_{S_\ell \in \mathcal{C}_\ell(\textbf{root})}\tilde{\varphi} \parens*{\dTV*{\mu_{S_\ell}\parens*{\cdot\given Z_{<S_\ell}},\nu_{S_\ell}\parens*{\cdot\given Z_{<S_\ell}}}} } \leq  N \tilde{\varphi}(x_*) \leq \frac{N}{e} < \Phi  \]
    \end{itemize}
    Thus, in either cases, we have
    \[\E*_{Z\sim \mu} {\sum_{S_\ell \in \mathcal{C}_\ell(\textbf{root})}\tilde{\varphi} \parens*{\dTV*{\mu_{S_\ell}\parens*{\cdot\given Z_{<S_\ell}},\nu_{S_\ell}\parens*{\cdot\given Z_{<S_\ell}}}} }  \leq \Phi(\log N  +1) + \frac{1}{e}. \]
\end{proof}
    \PrintBibliography
\end{document}